\definecolor{BrickRed}{rgb}{0.8,0.25,0.33}
\theoremstyle{plain}
\newtheorem{thm}{Theorem}[section]
\newtheorem{cor}[thm]{Corollary}
\newtheorem{fact}[thm]{Fact}
\newtheorem{lem}[thm]{Lemma}
\newtheorem{Def}[thm]{Definition}
\newtheorem{rem}[thm]{Remark}
\crefname{thm}{Theorem}{theorems}
\crefname{cla}{Claim}{claims}
\crefname{lem}{Lemma}{lemmas}
\crefname{fact}{Fact}{facts}
\newcommand{\e}{\texttt{e}}
\newcommand{\E}{\mathbb{E}}
\newcommand{\eps}{\varepsilon}
\newcommand{\calA}{\mathcal{A}}
\newcommand{\calF}{\mathcal{F}}
\newcommand{\calP}{\mathcal{P}}
\newcommand{\corrgap}{0.45}
\newcommand{\lift}{0.017}
\newcommand{\bipcorrgap}{0.456}
\newcommand{\patiencratio}{0.395}
\newcommand{\bippatienceratio}{0.426}
\newenvironment{wrapper}[1]
{
	\smallskip
	\begin{center}
		\begin{minipage}{\linewidth}
			\begin{mdframed}[hidealllines=true, backgroundcolor=gray!20, leftmargin=0cm,innerleftmargin=0.35cm,innerrightmargin=0.35cm,innertopmargin=0.375cm,innerbottommargin=0.375cm,roundcorner=10pt]
				#1}
			{\end{mdframed}
		\end{minipage}
	\end{center}
	\smallskip
}
\begin{document}

\title{Improved Online Contention Resolution for Matchings and Applications to the Gig Economy}

\author[1]{Tristan Pollner}
\author[1]{Mohammad Roghani}
\author[1]{Amin Saberi}
\author[1]{David Wajc}

\affil[1]{Stanford University}

\date{}

\maketitle
\pagenumbering{gobble}
\begin{abstract}

Motivated by applications in the gig economy, we study approximation algorithms for a \emph{sequential pricing problem}. The input is a bipartite graph $G=(I,J,E)$ between individuals $I$ and jobs $J$. The platform has a value of $v_j$ for matching job $j$ to an individual worker.
In order to find a matching, the platform can consider the edges  $(i j) \in E$ in any order and make a one-time take-it-or-leave-it offer of a price $\pi_{ij} = w$ of its choosing to $i$ for completing $j$. 
The worker accepts the offer with a known probability $ p_{ijw} $; in this case the job and the worker are irrevocably matched. 
What is the best way to make offers to maximize revenue and/or social welfare?

    The optimal algorithm is known to be NP-hard to compute (even if there is only a single job). With this in mind, we design efficient approximations to the optimal policy via a new Random-Order Online Contention Resolution Scheme (RO-OCRS) for matching. Our main result is a 0.456-balanced RO-OCRS in bipartite graphs and a 0.45-balanced RO-OCRS in general graphs. These algorithms improve on the recent bound of $\frac{1}{2}(1-\e^{-2})\approx 0.432$ of \cite{brubach2021improved}, and improve on the best known lower bounds for the \emph{correlation gap} of matching, despite applying to a significantly more restrictive setting.  As a consequence of our OCRS results, we obtain a $0.456$-approximate algorithm for the sequential pricing problem. We further extend our results to settings where workers can only be contacted a limited number of times, and show how to achieve improved results for this problem, via improved algorithms for the well-studied stochastic probing problem.
    
\end{abstract}


\newpage

\pagenumbering{arabic}
\section{Introduction}

Online gig platforms form a large and growing section of the  economy. Gallup estimates that 36\% of U.S. workers participate in the ``gig economy'' through either their primary or secondary jobs. This includes  multiple types of alternative work arrangements from temp workers and contract nurses to millions of online platform workers (for platforms like DoorDash, Uber or TaskRabbit) providing rides, shopping, and delivery services. 

One of the salient features of these app-based online platforms is that they pair workers with tasks in real-time. As a result of that, and because of interface limitations of mobile apps, pricing the tasks and matching them to workers cannot be done in one shot. Instead, platforms usually find matches for tasks by sending a take-it-or-leave-it offer through the app to the workers, detailing the task description and how much they will be paid, with a short window to decide. If one worker declines the offer, the job will be sent to another, possibly with a different price. Motivated by this, we consider the following problem abstraction.

\paragraph{The Sequential Pricing Problem.} The input is a bipartite graph $G=(I,J,E)$ with vertices on one side corresponding to the available \emph{individuals (workers)} $I$ and \emph{jobs} $J$. The platform can consider the edges in $E$ in any order; when considering edge $(ij)$, it makes a one-time take-it-or-leave-it offer of price $\pi_{ij}=w$ to worker $i$ for completing job $j$. The worker accepts with a (known) probability $p_{ijw} = p_{ew}$. In this case, the worker and job are irrevocably matched. For $A$ the set of matched jobs, and $\mu$ the edges these jobs are matched along, the platform's \emph{revenue} is $\sum_{j \in A} v_j - \sum_{(ij) \in \mu} \pi_{ij}$, where $v_j$ is the value of  completing  job $j \in J$.\footnote{It is also natural to study social welfare maximization, or maximization of convex combinations of the welfare and revenue. Our algorithms extend naturally to such objectives. See \Cref{sec:framework}.}

\medskip

 We assume that the platform learns the probability that workers accept a job at a certain price through multiple interactions with them. We further assume that due to the size of the market, this learning is robust to strategic behavior of the workers and we therefore ignore such issues. 

In our setting, the set of jobs and workers is fixed and known a priori. Even though that is rarely the case in practice, in many applications of interest the underlying network of between jobs and workers changes slowly. For example, in    applications like pet sitting or grocery shopping, the tasks are often scheduled in advance. In the case of food delivery, there is usually a 20-30 minute lag between the time the requests are made and the jobs are available, so it is practical to implement our sequential pricing mechanisms over a shorter planning horizon. 

In most settings, it is undesirable to overwhelm workers with too many queries. Therefore, we also consider the following generalization of our problem, borrowing the term \emph{patience} from the literature on stochastic probing problems (see \Cref{sec:related-work}). 

\paragraph{The Sequential Pricing Problem with Patience.} The setting is the same as above, but each worker $i$ has an associated \emph{patience} $\ell_i \in \mathbb{Z}_{\ge 0}$ and we can only offer prices along at most $\ell_i$ edges incident on $i$.

\medskip

It is not hard to see that the above sequential pricing problem is NP-hard even without patience constraints. In fact, using the reduction of Xiao et al.~\cite{xiao2020complexity}, one can show that this is the case already when the number of workers is one. 
 
With this challenge in mind, our goal is to design efficient pricing policies giving \emph{approximations} to the revenue achievable by the optimum sequential pricing algorithm. We briefly note that while in many online stochastic matching problems, the goal is to achieve a constant-factor approximation to the optimum \emph{offline} algorithm, such a result is impossible in our setting.\footnote{Consider a single edge $e$ with a job of value $k$, and an agent with a random cost $W \in [0, k-1]$ for completing the job, distributed as $\Pr[W \le w] = p_{ew}= 1/(k-w)$ for $0 \le w \le k-1$. By construction, any online algorithm gets revenue at most 1, while the optimum offline algorithm gets revenue $k - \mathbb{E}[W] = k - \int_{0}^{k-1} (1 - 1/(k-w)) \, dw =  \ln(k) + 1$.}

\paragraph{\textbf{Our Approach: Reducing to (and Improving) Contention Resolution Schemes}.}

Our approach for designing efficient approximate sequential pricing algorithms is to first solve a natural linear programming (LP) relaxation, with variables $\{ y_{ijw} \}$ representing the probability we offer price $w$ for worker $i$ to complete task $j$. We then round this LP in a (self-imposed) \emph{random-order online} fashion. 
This gives a common algorithmic framework for sequential pricing with and without patience constraints. 
We show that the analysis of instantiations of our algorithm can be seen as a reduction to random-order online algorithms for two well-studied problem in the literature: (i) online contention resolution schemes (OCRS), and (ii) stochastic OCRS (closely related to \emph{stochastic probing}).
More generally, we show that any chosen-order contention resolution schemes for matching result in improved bounds for our problems.

Our quantitative results follow precisely from improved algorithms for these well-studied contention resolution scheme questions, which we now introduce.

\subsection{Background: Contention Resolution Schemes}

When rounding a solution to some LP relaxation of some constrained problem, the most natural approach is to independently round its decision variables. For example, consider the following well-studied fractional relaxation of matching constraints:
\begin{align}\label{matching-deg-polytope}
\calP(G):=\left\{x\in \mathbb{R}^E_{\geq 0} \,\,\Bigg\vert\, \sum_{e\ni v}x_e\leq 1 \quad \forall v\in V\right\}.
\end{align}
Let $\vec{x} \in \calP(G)$ be a fractional solution to a matching problem with some linear objective $\vec{w}$. Rounding each edge $e$ independently will result in a random subset $R(x)\subseteq E$ of expected objective value equal to that of $\vec{w} \cdot \vec{x}$.
Unfortunately, this set $R(x)$ will most likely not form an integral matching. How much smaller (in expectation) can the highest-valued matching in $R(x)$ be compared to $\vec{w}\cdot\vec{x}$? This is precisely the \emph{correlation gap} of $\calP(G)$,  introduced in the context of stochastic optimization problems more generally in \cite{agrawal2012price}.

When rounding a point $\vec{x}\in \calP(G)$ (and more generally in a polytope containing the convex hull of some constraint set), especially in constrained computational settings, to obtain a feasible integral solution, one must in some sense resolve \emph{contention} between rounded edges (elements).
This motivates the definition of \emph{contention resolution schemes (CRS)}, introduced by \cite{chekuri2014submodular}, and later extended to online settings by \cite{feldman2016online,adamczyk2018random,lee2018optimal}. The following definition summarizes the notion of contention resolution schemes for matching.

\begin{Def}[Contention Resolution Scheme (CRS)]
    The initial input to a contention resolution scheme is a point $x\in \calP(G)$; each edge $e\in E$ is \emph{active} independently with probability $x_e$. The output of a CRS is a matching consisting of active edges.  A CRS is \emph{$c$-balanced} if its output $I$ satisfies
    $$\Pr[e\in I] \geq c\cdot x_e.$$ 
    A \emph{(Random-order) Online CRS}, or (RO-)OCRS for short, has elements revealed to it (in random-order) online, and must decide for each active element $e$ whether to add it to $I$ or discard it immediately and irrevocably on arrival.
\end{Def}

By linearity of expectation, a $c$-balanced CRS for relaxation $\calP$ of $\calF$ implies a lower bound of $c$ on this relaxation's correlation gap. The opposite direction is also true; Chekuri et al.~\cite{chekuri2014submodular} show using LP duality that a correlation gap of $c$ implies a $c$-balanced CRS. However, this proof is non-constructive, and does not imply a computationally efficient (i.e., polytime) CRS, let alone an efficient \emph{online} CRS.

Contention resolution schemes represent a core technique in many stochastic optimization problems, from (secretary) prophet inequalities, posted-price mechanisms and stochastic probing problems. (See \Cref{sec:related-work}.)
In this work, we observe the relevance of RO-OCRS, and stochastic CRS (see \Cref{sec:framework} for this latter notion) to our sequential pricing problem, and ask how much we can improve known bounds for such contention resolution schemes.

\subsection{Our Contributions}

Given our reductions from the sequential pricing problem (with patience) to (stochastic) RO-OCRS, we first state our results in the context of these latter two well-studied problems.

\paragraph{RO-OCRS.}

The classic work of Karp and Sipser \cite{karp1981maximum} implies an upper bound of $0.544$ on the correlation gap---and hence offline CRS---for (bipartite) matchings (see \cite{guruganesh2018understanding,bruggmann2020optimal}).
Numerous prior works imply or present explicit lower bounds on this gap (see \Cref{sec:related-work}).
The highest known lower bound on the correlation gap is $0.4326$ \cite{bruggmann2020optimal}, and the best RO-OCRS has a similar balance ratio of $\frac{1}{2}(1-\e^{-2})\approx 0.4323$~\cite{brubach2021improved}.
For bipartite graphs, the best offline CRS is $0.4762$-balanced \cite{bruggmann2020optimal}, but the best RO-OCRS is only $\frac{1}{2}(1-\e^{-2})\approx0.4323$-balanced \cite{brubach2021improved}.

\medskip

Our first result is a significant improvement on both these latter bounds for RO-OCRS.

\begin{wrapper}
\begin{restatable}{thm}{THMROOCSR}\label{thm:ro-ocrs}
    There exists a (polytime) RO-OCRS which is $\corrgap$-balanced for the matching problem and $\bipcorrgap$-balanced for the matching problem in bipartite graphs.
\end{restatable}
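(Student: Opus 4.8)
The plan is to design an RO-OCRS that, when an active edge $e=(ij)$ arrives and both endpoints $i,j$ are still unmatched, accepts $e$ with a carefully chosen \emph{selection probability} that depends only on how "used up" the two endpoints are so far — i.e., on the total $x$-mass of edges incident to $i$ (resp.\ $j$) that have \emph{already arrived}. Because arrivals are in uniformly random order, conditioned on $e$ arriving at "time" $t\in[0,1]$ the set of previously-arrived edges incident to $i$ is a random subset where each $e'\ni i$ is included independently with probability $t$; the contribution of such edges is governed by a Poissonization-type argument as $x_{e'}\to 0$, so the relevant state variable is (in the limit) the arrival time $t$ together with the fractional degrees. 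The first step is therefore to set up this continuous/Poissonized model carefully, reducing to the regime of small $x_e$ via a standard edge-splitting argument, so that the "load" seen at each vertex by time $t$ concentrates and the analysis becomes an integral over $t\in[0,1]$.

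Next I would define the acceptance rule. Following the attenuation paradigm of prior OCRS work, when $e=(ij)$ is active and arrives at time $t$ with both endpoints free, accept it with probability $f_i(\text{state of }i)\cdot f_j(\text{state of }j)$ (or some symmetric combination), where $f$ is a tunable function chosen so that (a) every vertex $v$ is matched by time $t$ with probability at most some target $g(t)$, which makes the conditional "both free" probability at least $(1-g(t))^2$-ish, and (b) the resulting $\Pr[e\in I]$ works out to at least $c\cdot x_e$ for the claimed $c$. The key quantitative step is to write the recursive/differential relation for $q_v(t):=\Pr[v\text{ matched by time }t]$ in terms of $f$ and the fractional degrees, solve or bound it, and then lower-bound $\Pr[e\in I]=x_e\int_0^1 \E[\mathbf{1}[i\text{ free at }t]\cdot \mathbf{1}[j\text{ free at }t]\cdot f_i f_j]\,dt$. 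In the bipartite case one gets to exploit that the two sides' states can be handled with less pessimism (no odd-cycle-type correlations), which is where the improvement from $0.45$ to $0.456$ comes from; concretely I expect to need a sharper handling of the joint event that \emph{both} endpoints are free, rather than naively using a union-bound / independence surrogate.

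The third step is the optimization: choose the family of attenuation functions (likely a simple parametric family, e.g.\ constant attenuation on a prefix of time then tapering, or an exponential/piecewise form) and optimize the parameters to maximize the guaranteed balance ratio, verifying the bound $\corrgap$ in general and $\bipcorrgap$ in the bipartite case. I would also need to confirm polynomiality: the discretized (pre-Poissonization) algorithm only needs to track, for each arriving edge, the matched/free status of its endpoints and compute a closed-form acceptance probability, so it runs in polynomial time once the LP point $x$ is given.

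I expect the main obstacle to be step two — obtaining a tight enough lower bound on the probability that \emph{both} endpoints of an arriving edge are still free, accounting for the positive correlation induced by the random order (early-arriving edges deplete both endpoints simultaneously). Naive bounds lose too much; the gain over the $\frac12(1-e^{-2})\approx 0.432$ baseline must come from a more careful, possibly endpoint-state-dependent, analysis of this joint free-ness event, and in the bipartite case from further exploiting the absence of odd structures. Getting the attenuation function and the accompanying differential inequality to close with the claimed constants is the crux; the rest is setup and routine optimization.
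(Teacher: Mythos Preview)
Your proposal sketches a plausible generic framework but differs substantially from the paper's approach and, more importantly, lacks the concrete mechanism that actually breaks the $\frac{1}{2}(1-\e^{-2})$ barrier.

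The paper does \emph{not} use history-dependent attenuation, edge-splitting/Poissonization, or differential relations for vertex-matched probabilities $q_v(t)$. The attenuation function is purely structural: $a_2(e) = \e^{-t_e x_e}(1-\alpha\, s_e)$, where $s_e := 2 - x_e - \sum_{f\in N_e} x_f$ is the \emph{slack} of $e$, a fixed quantity determined by $\vec{x}$ alone (not by which edges have arrived). The key analytical step is to decompose $\Pr[e\in\mu]$ as $\Pr[e\in\mu \wedge R_0(e)] + \Pr[e\in\mu \wedge R_1(e)]$, where $R_k(e)$ is the event that exactly $k$ edges of $N_e$ are realized before time $t_e$. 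Prior work bounded only the $R_0$ term; the new ingredient is a nontrivial lower bound on the $R_1$ term via a ``blocking chain'': if the unique realized neighbor $f$ itself has small slack $s_f$ (high contention), then with constant probability some $g\in N_f\setminus N_e$ is matched before $f$, so $f$ fails to block $e$. The slack-dependent attenuation then creates a win--win: either most $f\in N_e$ have large $s_f$ (they are attenuated more aggressively, improving the $R_0$ bound), or most have small $s_f$ (the $R_1$ term is bounded away from zero). The two lemmas are combined into a five-variable program, solved numerically at $\alpha=0.171$.

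Your plan instead hopes to gain from ``a sharper handling of the joint event that both endpoints are free'' and from ``positive correlation induced by the random order.'' But the paper never analyzes endpoint-freeness correlation, and the $\frac{1}{2}(1-\e^{-2})$ baseline is already achieved by the history-independent $a_1(e)=\e^{-t_e x_e}$ using only $R_0$; on the tight three-edge-path example that analysis is exact, so refining the joint-freeness estimate alone cannot help there. Without the $R_1$/blocking-chain idea---or some concrete substitute you have not supplied---nothing in your outline explains how you would exceed $0.432$. Your diagnosis of the bipartite gain is also off: it does not come from ``odd-cycle-type correlations'' but simply from the absence of \emph{triangles}, which forces $m_e := \max\{x_f : f\in N_e,\ e\text{ and }f \text{ lie in a common triangle}\}=0$ and thereby tightens the $R_1$ bound in Lemma~\ref{lem:unsaturated}.
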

\end{wrapper}

Our improvement of $\lift$ over the previous best correlation gap lower bound of $0.4326$ \cite{bruggmann2020optimal} can be contrasted with the latter work's improvement of $0.0003$ over the $\frac{1}{2}(1-\e^{-2})\approx 0.4323$ bound of \cite{guruganesh2018understanding}. 
That the correlation gap can be improved this much using \emph{random-order online} CRS is perhaps surprising, since, as pointed out by Bruggmann and Zenklusen \cite{bruggmann2020optimal}, ``the online random order model has significantly less information than the classical offline setting [they] consider, [and therefore] it is harder to obtain strong balancednesses in the online setting.''

\paragraph{Stochastic probing with patience.}
In many applications, verifying whether an edge is active incurs some cost (of energy or concentration) for one or both of its endpoints.
This is captured by stochastic matching problems with \emph{patience constraints}, studied by \cite{chen2009approximating,adamczyk2011improved,bansal2012lp,adamczyk2015improved,baveja2018improved,adamczyk2020improved,brubach2021improved}. For this problem, we improve on the state-of-the-art $0.382$ bound \cite{brubach2021improved} via a new \emph{stochastic OCRS} (see \Cref{sec:framework} for a definition of this natural generalization of OCRS).

\begin{wrapper}
\begin{restatable}{thm}{patiencetheorem}\label{thm: patience-theorem}
    There exists a $\patiencratio$-balanced patience-constrained stochastic RO-OCRS for the matching problem. For bipartite matching with patience constraints on only a single side, this stochastic RO-OCRS is $\bippatienceratio$-balanced. 
\end{restatable}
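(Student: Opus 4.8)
The plan is to layer the patience constraints on top of the random-order online CRS of \Cref{thm:ro-ocrs}, paying a controlled additional loss for the limited number of probes.

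\textbf{Step 1 (setup).} By definition, a patience-constrained stochastic RO-OCRS receives a point $x\in\calP(G)$ together with activation probabilities $p_e$ and patience bounds $\ell_v$, under the promise that there is a probing-probability vector $q$ with $q_e p_e = x_e$ for every edge $e$ and $\sum_{e\ni v} q_e \le \ell_v$ for every vertex $v$. Edges then arrive in uniformly random order, and on arrival of $e=(ij)$ the scheme may \emph{probe} $e$ — but only if $i$ and $j$ are both currently unmatched and strictly fewer than $\ell_i$ (resp.\ $\ell_j$) probes have already been spent at $i$ (resp.\ $j$) — committing $e$ to the output matching $\calM$ iff the probe succeeds, which happens with probability $p_e$. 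The target is $\Pr[e\in\calM]\ge c\cdot x_e$ with $c=\patiencratio$ (resp.\ $c=\bippatienceratio$ when only one side of a bipartite graph carries patience constraints, modelled by setting $\ell_v=\infty$ on the free side). I would begin by recovering such a $q$ — from the promise, or directly from the LP that produced $x$ — since that is a trivial feasibility computation.

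\textbf{Step 2 (instantiating the scheme).} Treat arrivals as i.i.d.\ uniform times in $[0,1]$. When $e=(ij)$ arrives at time $t$: if $i,j$ are both free, if the patience budgets at $i$ and $j$ are not exhausted, and if an independent attenuation coin of probability $g_e(t)$ comes up heads, then probe $e$ and add it to $\calM$ if the probe succeeds. Here $g_e(\cdot)$ plays exactly the role it does in \Cref{thm:ro-ocrs}: it is chosen so that the probability that an edge incident to a vertex $v$ ``claims'' $v$ before time $t$ admits a clean closed form (typically equalizing the two endpoints' availabilities), which lets us lower-bound $\Pr[i,j\text{ both free when }e\text{ arrives}]$ independently of the rest of the graph. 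The only change relative to \Cref{thm:ro-ocrs} is that ``selecting'' $e$ matches it only with probability $p_e$; but because the scheme probes with probability $q_e$ while the selection rule is driven by the marginal $x_e=q_e p_e$, the probability that $e$ causes an endpoint to become matched is still $x_e$ times the attenuation factor, so the ``free'' analysis of \Cref{thm:ro-ocrs} carries over essentially verbatim. Note, however, that a probe consumes patience whether or not it succeeds, so the relevant quantity for patience is $\sum_{e'\ni v} q_{e'}\le\ell_v$, not $\sum_{e'\ni v} x_{e'}\le 1$ — this is precisely why the $\ell_v$ constraint is the binding one here.

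\textbf{Step 3 (the patience loss, the crux).} Conditioned on $e$ arriving at time $t$, each other edge $e'\ni i$ independently precedes $e$ with probability $t$, so the expected number of probes at $i$ before $e$ is at most $t\sum_{e'\ni i,\,e'\neq e} q_{e'}\le t\,\ell_i$; by Markov's inequality the budget at $i$ is exhausted with probability at most $t$ whenever $\ell_i\ge 1$, and trivially never when $\ell_i=\infty$. A union bound over complementary events then gives $\Pr[i,j\text{ free and both budgets available at }t]\ge \Pr[i,j\text{ free at }t]-2t$ in the general case, and $\ge\Pr[i,j\text{ free at }t]-t$ when only one side has finite patience. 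Substituting into $\Pr[e\in\calM]=p_e\int_0^1 g_e(t)\,\Pr[\cdots\mid t]\,dt$, dividing through by $q_e$, and re-optimizing the attenuation profile $g_e(\cdot)$ against this degraded bound should produce the constants $\patiencratio$ and $\bippatienceratio$; the larger bipartite value reflects both the $-t$ (rather than $-2t$) loss and the stronger availability estimates of the bipartite RO-OCRS of \Cref{thm:ro-ocrs}.

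\textbf{Main obstacle.} The crude Markov/union bound just sketched, together with its interaction with the ``both endpoints free'' events, is very likely too lossy to reach $\patiencratio$ if used naively — and, moreover, no concentration is available in the worst case $\ell_v\in\{0,1\}$. The real work is to analyze the joint event ``$i$ is free \emph{and} $i$ has remaining patience'' (and symmetrically for $j$) tightly: a prior edge at $i$ can spend a probe only in states where $i$ is still free, so ``$i$ matched'' and ``$i$'s budget exhausted'' are governed by the same small pool of prior probes and should be bounded jointly rather than via a union bound. One then has to re-derive the availability estimates of \Cref{thm:ro-ocrs} inside the probing-and-patience process and re-optimize $g_e(\cdot)$ so that the resulting one-dimensional optimization provably evaluates to $\patiencratio$ and $\bippatienceratio$. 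Carrying out that joint analysis and verifying the final optimization (including the $\ell_v\le 1$ edge cases) is where I expect essentially all of the effort to lie.
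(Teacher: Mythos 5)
Your high-level architecture matches the paper's: run the attenuated random-order probing scheme, decompose over the events $R_0(e)$ and $R_1(e)$ as in the patience-free case, and fold a ``patience-survival'' factor into the availability bound before re-optimizing. However, the step you yourself flag as the crux --- bounding the joint event ``$u$ is free \emph{and} $u$ has remaining patience'' --- is genuinely missing, and the Markov-plus-union-bound fallback you sketch does not merely lose constants: it fails outright. With $\Pr[\text{budget exhausted at }u\text{ by time }t]\le t$ per constrained endpoint, the integrand $\Pr[e\text{ free at }t]-2t \lesssim \e^{-2t}-2t$ is negative for $t\gtrsim 0.27$, and $\int_0^1(\e^{-2t}-2t)^+\,dt\approx 0.14$, far below $\patiencratio$ even before any further losses. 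So the proposal does not contain a path to the stated constants.

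The paper's resolution has two ingredients you would need. First, by \Cref{lem:integer-probs} (an edge-splitting argument from Brubach et al.) one may assume every edge at $u$ has $p_e\in\{0,1\}$. This decouples the two bad events: only $p=0$ edges can exhaust $u$'s patience while $u$ is still free (a probed $p=1$ edge would match $u$), while the blocking event complementary to $R_0(e)$ is driven solely by $p=1$ edges. Hence, conditioned on $t_e=y$, the bound factorizes as $\Pr[R_0(e)\mid t_e=y]\cdot\Pr[\ell_u(e)\le\ell_u-1]\cdot\Pr[\ell_v(e)\le\ell_v-1]$ --- a product, not a union bound. Second, the patience-survival factor $\Pr[\ell_u(e)\le\ell_u-1]$ is lower-bounded by a Poissonization lemma of Baveja et al.\ ($\ell_u(e)\sim\mathrm{Pois}(y(\ell_u-1))$ is the worst case), and the minimizing patience value is $\ell_u=2$, giving the explicit factor $(1+y)\e^{-y}$ per constrained endpoint (one such factor in the one-sided bipartite case, two in general). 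Incidentally, this also shows your worry about the $\ell_v\in\{0,1\}$ edge cases is misplaced: $\ell_u=1$ gives survival probability $1$ and is the \emph{easiest} case, not the hardest. These factors are then threaded through both the $R_0$ and $R_1$ analyses (\Cref{lem:a-0-patience}, \Cref{lem:a1-lower-bound} and their bipartite analogues), and the resulting expression is minimized over the same few-variable program as in the patience-free setting with a re-tuned $\alpha$.
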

\end{wrapper}

\Cref{thm: patience-theorem} directly implies $0.395$-approximate and $0.426$-approximate algorithms in the corresponding stochastic probing settings. Also, as a direct consequence of both the OCRSes presented in \Cref{thm:ro-ocrs} and \Cref{thm: patience-theorem}, we have the following result for our original pricing problem. 

\begin{wrapper}
\begin{cor}
    There exists a $\bipcorrgap$-approximate efficient algorithm for the Sequential Pricing Problem, and a $\bippatienceratio$-approximate algorithm for the Sequential Pricing Problem with patience.
\end{cor}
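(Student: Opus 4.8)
The plan is to write down a linear-programming relaxation of the sequential pricing problem whose value upper-bounds $\opton$ (the expected revenue of the optimal sequential pricing policy), and then round an optimal LP solution in a self-imposed random-order online fashion, using the RO-OCRS of \Cref{thm:ro-ocrs} for the patience-free problem and the stochastic RO-OCRS of \Cref{thm: patience-theorem} for the patience-constrained one. Since $G=(I,J,E)$ is bipartite, and since in the patience variant only the workers $I$ carry a patience budget, the relevant balance ratios are $\bipcorrgap$ and $\bippatienceratio$. Assume first (without loss of optimality) that offered prices come from a given polynomial-size set of candidates: along a single edge $e=(ij)$ one never loses by restricting to the jump points of the acceptance function $w\mapsto p_{ew}$. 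Introduce a variable $y_{ew}$ for each edge $e=(ij)$ and candidate price $w$, meant to be the probability that the policy offers price $w$ along $e$, and consider
\begin{align*}
\max\quad & \textstyle\sum_{e=(ij)\in E}\sum_{w} y_{ew}\,p_{ew}\,(v_j-w)\\
\text{s.t.}\quad & \textstyle\sum_{w} y_{ew}\le 1 \qquad\forall e\in E,\\
& \textstyle\sum_{e\ni v}\sum_{w} y_{ew}\,p_{ew}\le 1 \qquad\forall v\in I\cup J,\\
& y_{ew}\ge 0,
\end{align*}
adding $\sum_{e\ni i}\sum_w y_{ew}\le \ell_i$ for every $i\in I$ in the patience variant. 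I would first check that this LP is a relaxation: given any sequential policy, let $y_{ew}$ be the probability it offers price $w$ along $e$; conditional on making that offer it is accepted independently with probability $p_{ew}$, so $\Pr[e\text{ matched at price }w]=y_{ew}p_{ew}$. A vertex is matched along at most one edge (at one price), so $\Pr[v\text{ matched}]=\sum_{e\ni v}\sum_w y_{ew}p_{ew}\le 1$; at most one price is offered per edge, so $\sum_w y_{ew}\le 1$; at most $\ell_i$ edges at $i$ receive an offer, so $\sum_{e\ni i}\sum_w y_{ew}\le\ell_i$. By linearity of expectation the policy's expected revenue equals the LP objective at this point, so $\opton$ is at most the LP optimum. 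Finally, at an optimal solution $y^*$ we may assume $y^*_{ew}>0$ implies $v_j\ge w$ (with $e=(ij)$), since zeroing a variable with negative reward only raises the objective and relaxes every constraint.

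Now the rounding. Put $x_e:=\sum_w y^*_{ew}p_{ew}$; the vertex constraints say exactly that $x\in\calP(G)$, so we may run the (bipartite) RO-OCRS of \Cref{thm:ro-ocrs} on $x$, with the random order in which we examine edges serving as the scheme's arrival order. The one point that needs care — and which I expect to be the main obstacle — is that the pricing problem forces us to commit to an offer (and its price) \emph{before} its acceptance is revealed, whereas an OCRS normally inspects an element's activeness before deciding. I would bridge this as follows: when edge $e$ arrives, first query the scheme for the bit $b_e$ indicating whether it \emph{would} select $e$ if $e$ were active (this depends only on the past and the scheme's own coins); if $b_e=1$, draw a price $W_e$ with $\Pr[W_e=w]=y^*_{ew}$ (and $W_e=\bot$, meaning no offer, with the remaining probability), make the offer when $W_e\ne\bot$, and declare $e$ ``active'' iff an offer was made and accepted; when $b_e=0$, simulate the activeness bit with a fresh coin so the scheme's view stays faithful. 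Because the $W_e$'s and acceptance coins are freshly drawn per edge, the activeness bits are independent with the intended marginals $x_e$; an edge is matched precisely when the scheme selects it, and it is matched at price $w$ precisely when $b_e=1$, $W_e=w$, and the offer is accepted. Since $b_e$ is independent of $(W_e,\text{acceptance})$, $\Pr[e\text{ matched at }w]=\Pr[b_e=1]\,y^*_{ew}p_{ew}$, while $c$-balancedness gives $\Pr[e\text{ selected}]=\Pr[b_e=1]\cdot x_e\ge c\,x_e$ (using that $b_e$ is independent of $e$'s activeness), whence $\Pr[b_e=1]\ge c$. Thus $\Pr[e\text{ matched at }w]\ge c\,y^*_{ew}p_{ew}$, and since $v_j-w\ge 0$ on the support of $y^*$, the expected revenue is at least $c$ times the LP objective at $y^*$, hence at least $c\cdot\opton$. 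Taking $c=\bipcorrgap$ gives the first assertion, with efficiency immediate from the polynomial-size LP and the polynomial-time scheme.

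For the problem with patience, the only changes are to run the single-side-patience stochastic RO-OCRS of \Cref{thm: patience-theorem}, which additionally respects the probing budgets $\ell_i$ present in the augmented LP, and to realize prices when the scheme probes an edge $e$ by drawing $w$ with probability $y^*_{ew}/\sum_{w'}y^*_{ew'}$. Then the scheme's effective per-edge success probability is the mixture $p_e:=\sum_w\bigl(y^*_{ew}/\sum_{w'}y^*_{ew'}\bigr)p_{ew}$, for which $y_e p_e=x_e$ with $y_e:=\sum_w y^*_{ew}$, and the constraints $\sum_{e\ni i}y_e\le\ell_i$ and $\sum_{e\ni i}y_ep_e\le 1$ hold; so the stochastic-OCRS guarantee $\Pr[e\text{ matched}]\ge c\,y_ep_e$, combined with the same per-edge/per-price bookkeeping (and the commit-before-reveal device for the ordering issue, together with the fact that the scheme keeps every active edge it probes), yields $\Pr[e\text{ matched at }w]\ge c\,y^*_{ew}p_{ew}$ and hence a $c$-approximation with $c=\bippatienceratio$. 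The same template, with the per-edge reward $v_j-w$ replaced by any fixed convex combination of revenue and welfare contributions, handles the extensions noted in the footnote.
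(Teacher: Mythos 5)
Your proposal is correct and follows essentially the same route as the paper: it rewrites \ref{LP-main}, shows it upper-bounds the optimal policy (\Cref{LP>=OPT}), and then reruns the reduction arguments of \Cref{lemma:reduction} and \Cref{lemma:reduction-patience} (pre-committing to the offer and exploiting that the scheme's would-select decision is independent of the edge's acceptance coin, and forming the mixture probability $p_e=\sum_w y_{ew}p_{ew}/\sum_w y_{ew}$ in the patience case) before invoking \Cref{thm:ro-ocrs} and \Cref{thm: patience-theorem}. Your two additional touches --- restricting to the jump points of $w\mapsto p_{ew}$ for a polynomial-size LP, and zeroing variables with $v_j<w$ so that per-edge balancedness ratios can be summed safely --- are sensible tightenings of details the paper leaves implicit.
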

\end{wrapper}

Finally, we introduce a random-order vertex arrival model with \emph{uncorrelated} edges  (breaking with recent work assuming at most one edge is active per arriving vertex \cite{ezra2020online,fu2021random}), and provide a $(1-1/e)^2\approx 0.3996$-balanced OCRS for random-order vertex arrivals in \Cref{sec:ro-vertex}.

\subsection{Techniques}

An approach used in the literature for both random-order OCRS and stochastic probing is a natural one: 
when considering an edge $e$, in order to allow both endpoints a fair chance of being matched to later arriving edges, one randomly 
decreases, or \emph{attenuates}, the probability with which $e$ is matched on arrival. This approach was highly effective for these problems \cite{baveja2018improved,ehsani2018prophet,lee2018optimal,brubach2020attenuate,brubach2021improved}, and underlies the state-of-the-art for both \cite{brubach2021improved}.
To lower bound the probability that an edge $e$ is matchable (i.e., its endpoints are free) upon arrival, these prior works lower bound the probability that all edges in $e$'s neighborhood, denoted by $N_e$, are either inactive, arrive after $e$, or are rejected by the attenuation step. 
Note that if edges incident on $e$ have no other edges in their neighborhood (e.g., if $e$ is the central edge in a path of length three), then this analysis is tight. Indeed, similar examples rule out analysis improving on the natural (and previous best) bound of $\frac{1}{2}(1 - \e^{-2})$ using all prior attenuation functions. 

Our key observation is that in scenarios as the above, while $e$ is not matched with probability better than $\frac{1}{2}(1 - \e^{-2})$ times $x_e$, its incident edges are, and we can therefore attenuate these more, while still beating this bound. If, on the other hand, edges incident on $e$ do have many incident edges, and more ``contention'' for their endpoints, then there are other events in which $e$ can be matched: for example, some single incident $f\in N_e$ is active, precedes $e$, and passes the attenuation step, but is nonetheless not matched, since some previous edge $g\in N_f$ is matched.
An appropriately-chosen attenuation function which penalizes edges with less contention therefore allows us to capitalize on whichever of these scenarios holds for $e$, and ultimately break the previous state-of-the-art balance ratio of $\frac{1}{2}(1 - \e^{-2})$.
A similar approach underlies our result for the patience-constrained problem.

\subsection{Further related work}\label{sec:related-work}

\paragraph{(RO-O)CRS for matching.} Illustrating the ubiquity of problems addressed by CRSes and matching problems,
the literature is rich in results which either explicitly or implicitly provide non-trivial lower bounds on the correlation gap---and hence balance ratio of CRS---for matchings (e.g., \cite{yan2011mechanism,cygan2013sell,chekuri2014submodular,feldman2016online,guruganesh2018understanding,gravin2019online,bruggmann2020optimal,ezra2020online,brubach2021improved}). 
The best previous CRS for matching is due to Bruggmann and Zenklusen,  who provide a monotone CRS, which allows them to give result for  \emph{submodular} objectives, though this result requires them to use the full (exponential-sized) characterization of the matching polytope \cite{edmonds1965maximum}.
OCRSes for matching have also been designed in the \emph{vertex-arrival} setting \cite{ezra2020online, fu2021random}, although as we note in \Cref{sec:framework} these don't directly imply results for our edge-by-edge pricing problem or for the correlation gap, as they assume every arriving vertex has at most one incident active edge.

\paragraph{Online stochastic probing.} Our sequential pricing problem is closely related to the stochastic probing literature. In the most general formulation, a set of elements $E$ is given where element $e$ has weight $w_e$ and is active with probability $p_e$. Elements reveal their active status after being probed, and we can \emph{probe} elements according to an outer feasibility constraint $\mathcal{I}_{out}$ and can accept active elements according to an inner feasibility constraint $\mathcal{I}_{in}$. Probed active elements must be accepted. Stochastic probing has been studied in an extremely general settings where $\mathcal{I}_{in}$ and $\mathcal{I}_{out}$ are intersections of matroids and knapsacks, and further generalizations \cite{gupta2013stochastic, gupta2016algorithms, gupta2017adaptivity}. Further research has extended the linear objective to submodular functions \cite{adamczyk2016submodular}. 

A very well-studied special case concerns the setting where $E$ is the set of edges in a graph, $\mathcal{I}_{in}$ is the set of matchings, and $\mathcal{I}_{out}$ specifies that each vertex $v$ has some \emph{patience} $\ell_v$, corresponding to the maximum number of edges of this vertex that the algorithm can probe.
This is motivated for example by food delivery services, where couriers should not be contacted more than a certain number of times without being commissioned to pick up any order.
This problem was studied in an offline setting in \cite{chen2009approximating,adamczyk2011improved,bansal2012lp,adamczyk2020improved}, and in an online setting \cite{bansal2012lp,adamczyk2015improved,baveja2018improved,brubach2021improved}. Like in \cite{brubach2021improved}, we give improvements for \emph{offline} stochastic matching even if edges arrive online in a uniformly random order. Without patience constraints, \cite{gamlath2019beating} beat the $0.5$-approximate greedy algorithm and gave a $(1-1/\e)$-approximation. The difficulty in applying the method used in \cite{gamlath2019beating} to our setting is that we have multiple prices for each edge; in their setting, distributions are Bernoulli and approximations to the offline optimum can be obtained. However, in our setting, no such approximation exists (as noted earlier). The same can be said for the weighted query-commit problem for matching studied in \cite{fu2021random}. 

\smallskip 

\paragraph{Posted-Price Mechanisms.} Our work is also closely related to the study of incentive-compatible posted-price mechanisms as in \cite{chawla2010multi, kleinberg2019matroid}, which develop approximations to the revenue of the optimal deterministic truthful mechanism via a reduction to prophet inequalities. (See \cite{lucier2017economic} for a survey of this large area of research, much of which is inspired by this reduction.) The prophet inequalities of \cite{gravin2019prophet} and \cite{ezra2020online} hence imply $\nicefrac{1}{3}$- and $0.337$-approximate algorithms respectively for revenue in this setting. 

There are subtle but important differences between our pricing problem and this line of work stemming from the choice of benchmark. \cite{chawla2010multi} and \cite{kleinberg2019matroid} explicitly concern themselves with incentive-compatibility and hence take the optimal truthful mechanism as their benchmark. It is worth noting that to implement their incentive-compatible mechanisms (and others based on reductions from prophet inequalities), it is important to communicate to agents information about the state of the \emph{entire} market. In extremely large gig economy platforms, this is not feasible; we additionally assume that based on a wealth of data large online marketplaces have learned probabilities $\{p_{ijw}\}$ in way that is robust to strategic behavior. Hence, we compete with the stricter benchmark of the optimal (not necessarily truthful) posted-price policy.

\section{The Algorithmic Template Via Contention Resolution}\label{sec:framework}

In this section, we describe formally our approach for the sequential pricing problem (with and without patience constraints), via a connection with OCRS.

Many natural greedy algorithms for this problem fail (see \Cref{app:rulingout}). Instead, our approach is to write a natural linear program relaxation (LP-Pricing) and round it in a (self-imposed) random-order online fashion. Recall that for edge $e = (ij)$, $p_{ijw} = p_{ew}$ is notation for the probability that worker $i$ accepts job $j$ at price $w$. 

\begin{align}
    \tag{\textbf{LP-Pricing}}  \label{LP-main} \quad \max & \sum_{w, e=(ij)} y_{ew}\cdot p_{ew}\cdot (v_j - w) \\
     \textrm{s.t.} & \sum_{w} y_{ew}  \leq 1 \qquad \qquad\qquad\, \forall e \label{eqn:single-probe} \\
    & \sum_{e\ni v}\sum_{w} y_{ew}\cdot p_{ew}  \leq 1  \qquad \,\, \forall v \label{eqn:single-success-LP} 
    \\
    & \sum_{e\ni v}\sum_{w} y_{ew}  \leq \ell_v  \qquad\qquad \,\, \forall v \label{eqn:patience-LP2}\\
    & y_{ew}  \geq 0 \qquad \qquad \qquad\qquad \forall e,w. \label{eqn:positivity}
\end{align}

For any sequential pricing algorithm $\calA$, the vector $\vec{y}$ with $y_{ew}$ equal to the probability that $\calA$ offers price $w$ along edge $e$ is a feasible solution to \ref{LP-main}. Indeed, $\calA$ can query at most one weight for each edge in expectation \eqref{eqn:single-probe}, successfully matches at most one edge incident on any vertex $v$ in expectation \eqref{eqn:single-success-LP}, and queries at most $\ell_v$ edges incident on $v$ in expectation \eqref{eqn:patience-LP2}. Moreover, since $y_{ew}$ are probabilities, we immediately have \eqref{eqn:positivity}. Therefore, since the LP objective of $\vec{y}$ precisely corresponds to the expected revenue of $\calA$, we find that \ref{LP-main} upper bounds the expected revenue of any sequential pricing algorithm $\calA$, as summarized in the following lemma.

\begin{lem}\label{LP>=OPT} The optimal value of \ref{LP-main} upper bounds the expected revenue attainable by the optimal algorithm for the sequential pricing problem with patience. 
\end{lem}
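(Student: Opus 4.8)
The plan is to show that \ref{LP-main} is a genuine relaxation: for \emph{every} (possibly adaptive and randomized) sequential pricing algorithm $\calA$ I will exhibit a feasible point $\vec y$ of \ref{LP-main} whose objective value equals the expected revenue of $\calA$, and then specialize to the optimal $\calA$. Concretely, for each edge $e$ and price $w$ I set $y_{ew}$ to be the probability---over the internal randomness of $\calA$ together with the workers' acceptance decisions---that $\calA$ ever makes the offer $\pi_e = w$ along $e$ during its run.

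Next I verify the four constraints. Constraint \eqref{eqn:positivity} is immediate since each $y_{ew}$ is a probability. For \eqref{eqn:single-probe}: on any realization $\calA$ makes at most one offer on $e$ (offers are one-time), so the events $\{\calA\text{ offers }w\text{ on }e\}$ are disjoint over $w$, whence $\sum_w y_{ew} = \Pr[\calA\text{ makes some offer on }e]\le 1$. For \eqref{eqn:patience-LP2}: on any realization the patience constraint forces $\calA$ to offer along at most $\ell_v$ edges incident to $v$; summing indicators and taking expectations gives $\sum_{e\ni v}\sum_w y_{ew} = \E[\#\{\text{edges incident to }v\text{ offered by }\calA\}]\le \ell_v$. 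For \eqref{eqn:single-success-LP}: the key point is that whenever $\calA$ offers price $w$ on $e$, this offer is accepted with probability exactly $p_{ew}$, independently of $\calA$'s history and of every other acceptance event; hence $\sum_w y_{ew}\,p_{ew}$ is exactly $\Pr[e\text{ is matched}]$. Since the matched edges form a matching, at most one edge incident to $v$ is matched on any realization, so summing over $e\ni v$ and taking expectations yields $\sum_{e\ni v}\sum_w y_{ew}\,p_{ew} = \E[\#\{\text{matched edges incident to }v\}]\le 1$.

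Finally I check the objective. Writing $A$ for the set of matched jobs and $\mu$ for the resulting matching, the revenue of $\calA$ on a realization equals $\sum_{j\in A} v_j - \sum_{(ij)\in\mu}\pi_{ij} = \sum_{e=(ij)}\mathbf{1}[e\in\mu]\,(v_j-\pi_e)$. Conditioning on which offer (if any) was made and accepted along each edge and again using that an offer of $w$ on $e$ is accepted with probability exactly $p_{ew}$, taking expectations gives $\E[\text{revenue}] = \sum_{e=(ij)}\sum_w y_{ew}\,p_{ew}\,(v_j-w)$, which is precisely the objective of \ref{LP-main} at $\vec y$. Thus the optimal LP value is at least $\E[\text{revenue of }\calA]$ for every $\calA$, and in particular for the optimal algorithm, giving the lemma.

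I expect the only point requiring care---rather than genuine difficulty---to be the repeated appeal to the model's independence assumption: that the acceptance of the single offer made along $e$ is an independent coin of bias $p_{ew}$, irrespective of how adaptively $\calA$ chose that offer. This is exactly what lets me identify the ``in expectation'' quantities $\sum_w y_{ew}\,p_{ew}$ with honest per-edge matching probabilities, and thereby pass from the pathwise validity of the matching and patience constraints to LP-feasibility of \eqref{eqn:single-success-LP} and \eqref{eqn:patience-LP2}. No approximation or optimization enters; this is purely a ``the LP relaxes the problem'' argument.
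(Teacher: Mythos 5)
Your proposal is correct and follows the same route as the paper: set $y_{ew}$ to the probability that the algorithm offers price $w$ along $e$, check the four constraints pathwise-then-in-expectation, and match the LP objective to the expected revenue. The paper states this argument more tersely in the paragraph preceding the lemma; your added care about the disjointness of offer events over $w$ and the independence of the acceptance coin from the algorithm's history fills in exactly the details the paper leaves implicit.
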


We note that a similar LP relaxation can be written if our objective is the maximization of social welfare. In this case we assume that for each worker $i$ and job $j$ we are given a random \emph{cost} to the worker $c_{ij}$ for completing $j$, and agreeing to perform job $j$ if the price exceeds its cost, i.e., $\Pr[c_{ij} \le w] = p_{ijw}$ for every $w$. Then, to optimize social welfare, we can replace our objective with $$ \max \sum_{w, e=(ij)} y_{ew}\cdot  p_{ew} \cdot \left(v_j - \mathbb{E}[c_{ij} \mid c_{ij} \le w] \right).$$ We could similarly optimize over a convex combination of revenue and social welfare.

To round \ref{LP-main} we consider edges in  random order; in particular, we have each edge $e$ sample a random ``arrival'' time $t_e \sim \text{Unif}[0,1]$, with edges arriving in increasing order of $t_e$. 
For each edge $e$, we further set a price $\pi_e$ equal to each $w$ with probability $y_{ew}$. (This step is well-defined, by \eqref{eqn:single-probe}.)
Now, if the edge is free (both of its endpoints are unmatched) and both of its endpoints have remaining patience, we propose this price $\pi_e$ along $e$ with probability $a(e)$.
The algorithm's pseudocode is given in \Cref{alg:main}.

\begin{algorithm}
\SetAlgoLined

         Solve \ref{LP-main} 
         
        \textbf{for} each edge $e$, set price $\pi_e$ at random so $\Pr[\pi_e = w] = y_{ew}$
        
        \For{each edge $e$, in increasing order of $t_e\sim \text{Unif}[0,1]$}{
           
            \If{$e$ is free and both endpoints have remaining patience}{
    		        {\textbf{with probability} $a(e)$, propose price $\pi_e$ along $e$ (and match if accepted) \label{line:proposal}} 
	        }
	    }
    \caption{Sequential Pricing Algorithm}
    \label{alg:main}
\end{algorithm}

Instantiations of \Cref{alg:main} differ by their choice of \emph{attenuation function} $a(e)$. (Following the terminology of  \cite{brubach2020attenuate,brubach2021improved,baveja2018improved}.) 
Before describing one simple such function whose performance will serve as a baseline, we first note that for the sequential pricing problem (without patience constraints), \Cref{alg:main} can be seen as a reduction from this problem to the RO-OCRS described in \Cref{alg:main-ocrs}, which generalizes some prior RO-OCRSes for matching \cite{lee2018optimal,brubach2021improved,bruggmann2020optimal}.

\begin{algorithm}
\SetAlgoLined
  
         \textbf{for} each edge $e$, sample a random arrival time $t_e \sim \text{Unif}[0,1]$

        \For{each edge $e$, in increasing order of $t_e \sim \text{Unif}[0,1]$}{
   
            \If{$e$ is active and is free}{
    		        \textbf{with probability} $a(e)$, match $e$\label{line:RO-OCRS-test}
	        }
	    }
	
    \caption{RO-OCRS}
    \label{alg:main-ocrs}
\end{algorithm}

\begin{lem}\label{lemma:reduction}
    If \Cref{alg:main-ocrs} with attenuation function $a(\cdot)$ is $c$-balanced, then \Cref{alg:main} with function $a(\cdot)$ is $c$-approximate for the sequential pricing problem (without patience constraints).
\end{lem}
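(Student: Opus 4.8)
The plan is to couple \Cref{alg:main} (run without patience constraints, on an optimal solution $\vec y$ to \ref{LP-main}) with \Cref{alg:main-ocrs} (run on the point induced by $\vec y$), and then to account for prices via a conditional-independence argument. First I would set $x_e := \sum_w y_{ew}\,p_{ew}$. Constraint \eqref{eqn:single-success-LP} gives $\sum_{e\ni v} x_e \le 1$, so $\vec x\in\calP(G)$ and the RO-OCRS is applicable to it. I would also record the harmless assumption that $\sum_w y_{ew}\,p_{ew}\,(v_j-w)\ge 0$ for every edge $e=(ij)$: in an optimal solution this must hold, since otherwise zeroing out $y_{e\cdot}$ yields a feasible point with strictly larger objective (it only removes a negative term and relaxes every constraint). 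Edges with $x_e=0$ contribute nothing to either side and may be ignored.

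Next, the coupling: have each edge $e$ draw a price $\pi_e$ (with $\Pr[\pi_e=w]=y_{ew}$), an independent acceptance coin, an arrival time $t_e$, and an independent proposal coin of bias $a(e)$, all independent across edges. Declare $e$ to be \emph{active} for the OCRS exactly when the worker would accept the offer $\pi_e$; this event has probability $\sum_w y_{ew}\,p_{ew}=x_e$, independently across edges. Running both algorithms with the \emph{same} arrival times and the \emph{same} proposal coins, an easy induction on the processing order shows they make identical edge-by-edge decisions: the patience clause in \Cref{alg:main} is vacuous, and ``match iff the worker accepts $\pi_e$'' is precisely ``match iff $e$ is active'', so in both algorithms $e$ is matched iff $e$ is free on arrival, its proposal coin is heads, and $e$ is active, and ``free'' depends only on the (identical) prefix of decisions. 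Hence the matching $\mu$ produced by \Cref{alg:main} equals the output $I$ of \Cref{alg:main-ocrs} on $\vec x$, and $c$-balancedness gives $\Pr[e\in\mu]=\Pr[e\in I]\ge c\cdot x_e$.

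Finally, I would compute the expected revenue. Since $\mu$ is a matching, each matched job lies on exactly one edge, so the revenue equals $\sum_{e=(ij)\in\mu}(v_j-\pi_e)$. The crucial point is that \Cref{alg:main} inspects prices only through which offers are accepted, i.e.\ through the activeness indicators $\vec B=(B_f)_f$; thus $\mu$ is a deterministic function of the arrival times, the proposal coins, and $\vec B$, and conditioned on $\vec B$ the prices $(\pi_e)$ are still independent across edges with $\Pr[\pi_e=w\mid B_e=1]=y_{ew}\,p_{ew}/x_e$. Taking conditional expectations, an edge $e=(ij)\in\mu$ contributes in expectation $\E[v_j-\pi_e\mid B_e=1]=\tfrac{1}{x_e}\sum_w y_{ew}\,p_{ew}\,(v_j-w)$, so $\E[\text{revenue}]=\sum_{e=(ij)}\Pr[e\in\mu]\cdot\tfrac{1}{x_e}\sum_w y_{ew}\,p_{ew}\,(v_j-w)$. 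Combining $\Pr[e\in\mu]\ge c\cdot x_e$ with the sign assumption yields $\E[\text{revenue}]\ge c\sum_{e=(ij)}\sum_w y_{ew}\,p_{ew}\,(v_j-w)$, which is $c$ times the optimal value of \ref{LP-main}; by \Cref{LP>=OPT} (specialized to the patience-free setting) this is at least $c$ times the revenue of the optimal sequential pricing algorithm, giving the claimed $c$-approximation.

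The step I expect to require the most care is the conditional-independence claim in the revenue computation: one must verify carefully that \Cref{alg:main}'s trajectory is measurable with respect to $\vec B$, the arrival times, and the proposal coins (never the actual price values), and that conditioning on $\vec B$ reweights the price of each active edge to the law $y_{ew}\,p_{ew}/x_e$. The remaining ingredients — membership in $\calP(G)$, the coupling, the sign convention, and the appeal to \Cref{LP>=OPT} — are routine.
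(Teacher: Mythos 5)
Your proof is correct and follows essentially the same route as the paper's: couple the two algorithms by pre-sampling prices and acceptance indicators so that \Cref{alg:main} becomes \Cref{alg:main-ocrs} run on $x_e=\sum_w y_{ew}p_{ew}\in\calP(G)$, then use independence of the matching event from the price realization to convert per-edge balancedness into the revenue bound. Your explicit observation that each edge's LP contribution $\sum_w y_{ew}p_{ew}(v_j-w)$ is nonnegative at an optimal solution is a worthwhile refinement the paper leaves implicit when passing from the per-edge guarantee to the aggregate inequality.
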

\begin{proof}
Suppose we sample for each edge $e=(ij)$ and price $w$ independent Bernoulli variables $A_{ew} \sim \text{Ber}(p_{ew})$  \emph{in advance}.
Since the random choice of the weight $\pi_e$ and the probability that worker accepts $\pi_e$ are both independent of the other random choices of \Cref{alg:main}, the output of this new algorithm is the same as that of \Cref{alg:main-ocrs}, where the (random) set of active edges is taken to be $\{e=(ij) \mid A_{e\pi_e} = 1\}$.
Since each edge $e$ is active independently with probability $x_e := \sum_{w} y_{ew}\cdot p_{ew}$, by constraints \eqref{eqn:single-success-LP} and \eqref{eqn:positivity} we have that $\vec{x}\in \calP(G)$, and so this fractional matching (and the corresponding random set of active edges) is indeed a valid input to the RO-OCRS \Cref{alg:main-ocrs}. It remains to relate the balancedness of this RO-OCRS to the revenue of \Cref{alg:main}.
 
Fix an edge $e=(ij)$. By the $c$-balancedness of \Cref{alg:main-ocrs}, we know that $\Pr[e\in \mu] = c'\cdot x_e$, for some $c'\geq c$.
Recall that $e$ is matched if and only if it is active ($A_{e\pi_e} = 1$) and it is both free at time $t_e$ and an independent Bernoulli variable $\text{Ber}(a(e))$ comes up heads. Since these latter two events are independent of $e$'s active status, we have that $\Pr[e \in \mu \mid e \text{ active}] = c'$.
Indeed, since the joint event that $e$ is free before time $t_e$ and $\text{Ber}(a(e))=1$  are both independent of $\pi_e$ and all the $A_{ew}$ variables, we have that for each $w$,  
$$\Pr[e \in \mu \mid A_{e\pi_e} = 1] = c'.$$

Consequently, the expected revenue satisfies the following:
\begin{align*}
    \E \left[ \sum_{j\in A} v_j - \sum_{e\in \mu} \pi_e \right] & = \sum_{ij}\sum_w (v_j - w)\cdot c'\cdot \Pr[\pi_e = w] \cdot \Pr[A_{e\pi_e} = 1 \mid \pi_e = w] \\
    & = \sum_{ij}\sum_w (v_j - w)\cdot c'\cdot y_{ew}\cdot p_{ew} \\
    & = c'\cdot OPT(\text{\textbf{LP-Pricing}}) \geq c\cdot OPT(\text{\textbf{LP-Pricing}}).
\end{align*}
Combining the above with \Cref{LP>=OPT}, the lemma follows.
\end{proof}

The argument of \Cref{lemma:reduction} can be used to show a general black-box reduction from the sequential pricing problem to \emph{chosen-order} OCRS, where we have full power to decide the order in which edges arrive. 
\begin{lem}
A $c$-balanced chosen-order OCRS for matchings implies a  $c$-approximate algorithm for the sequential pricing problem (without patience constraints).
\end{lem}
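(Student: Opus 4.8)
The plan is to replay the proof of \Cref{lemma:reduction} essentially verbatim, with the uniformly random arrival order replaced by the order the OCRS chooses. Fix a $c$-balanced chosen-order OCRS $\calO$ for matchings. The algorithm solves \ref{LP-main} to get $\vec{y}$, samples prices $\pi_e$ with $\Pr[\pi_e=w]=y_{ew}$ (well-defined by \eqref{eqn:single-probe}), and sets $x_e:=\sum_w y_{ew}p_{ew}$, so that \eqref{eqn:single-success-LP} and \eqref{eqn:positivity} give $\vec{x}\in\calP(G)$. It then runs the (no-patience) version of \Cref{alg:main} with two modifications: it processes edges in the order designated by $\calO$ rather than in increasing order of $t_e$, and it takes $a(e)$ to be $\calO$'s selection probability for $e$ --- a quantity that, for the (greedy) OCRSes of interest here including \Cref{alg:main-ocrs}, depends only on the processing order so far and the set of edges already matched.

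Next I would couple this run, exactly as in \Cref{lemma:reduction}, with one in which independent Bernoulli variables $A_{ew}\sim\mathrm{Ber}(p_{ew})$ are drawn up front for all $e,w$, and edge $e$ is declared \emph{active} iff $A_{e\pi_e}=1$. By \eqref{eqn:single-success-LP} and \eqref{eqn:positivity} the active edges form a valid input to $\calO$ (each present independently with probability $x_e$), and since an offer along $e$ is accepted exactly when $A_{e\pi_e}=1$, the matching output by the pricing algorithm is precisely $\calO$'s output on that active set. Letting $q_e:=\Pr[e\in\mu\mid e\text{ active}]$, the $c$-balancedness of $\calO$ gives $q_e\, x_e=\Pr[e\in\mu]\ge c\,x_e$, hence $q_e\ge c$; and because the event that $e$ is free when processed and $\calO$ selects it is independent of $\pi_e$ and of all the $A_{ew}$'s, we get $\Pr[e\in\mu\mid\pi_e=w]=p_{ew}\,q_e$ for each $w$. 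The revenue computation of \Cref{lemma:reduction} then yields expected revenue $\sum_{e=(ij),\,w}(v_j-w)\,y_{ew}\,p_{ew}\,q_e\ \ge\ c\cdot\mathrm{OPT}(\text{\ref{LP-main}})$ --- using, as there, that at an optimal LP solution $y_{ew}=0$ whenever $w>v_j$, so that the per-edge bound $q_e\ge c$ can be pulled out of a sum of nonnegative terms --- and \Cref{LP>=OPT} finishes the proof.

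The step requiring the most care --- and essentially the only content beyond \Cref{lemma:reduction}, since everything else is inherited from it --- is faithfully simulating $\calO$ in the pricing world: the pricing algorithm can learn whether an edge is active only by making an \emph{irrevocable} offer, so it cannot mimic an OCRS that inspects the active status of edges it eventually rejects (whether to inform a later selection or a later ordering choice). This is precisely why the reduction is stated for OCRSes whose decisions depend only on the processing order and on the set of edges already matched, as in \Cref{alg:main-ocrs}: for such schemes the attenuation coin for $e$ can be flipped \emph{before} the offer is made, and the coupling above goes through unchanged.
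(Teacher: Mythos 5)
Your proposal is correct and matches the paper's approach exactly: the paper gives no separate proof for this lemma, simply remarking that the argument of \Cref{lemma:reduction} carries over once the uniformly random order is replaced by the chosen order, which is precisely what you do. Your closing caveat---that the simulation only goes through for OCRSes whose accept/discard decision can be committed to \emph{before} observing the current edge's active status (since an accepted offer is an irrevocable match)---is a genuine subtlety the paper leaves implicit, and it correctly pins down the class of schemes (including \Cref{alg:main-ocrs}) for which the stated reduction is valid.
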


We note that a similar reduction does \emph{not} work for vertex-arrival OCRS, where \cite{ezra2020online} show that a $\nicefrac{1}{2}$-balanced scheme exists, and \cite{fu2021random} show that an $\nicefrac{8}{15} \approx 0.533$-balanced scheme exists when vertices arrive in random order. The issue in trying to appeal to these results is that they operate in the \emph{batched} setting where the online algorithm knows the realization of edges incident on an arriving vertex in advance. Furthermore, it is assumed at most one of the edges incident on each arriving vertex is active. These constraints mean that OCRSes for the vertex-arrival setting do not directly have a connection to our problem, where edges must be probed one-by-one.

\Cref{lemma:reduction} allows us to analyze  \Cref{alg:main} using the simpler \Cref{alg:main-ocrs}, avoiding some notational clutter, and so this is the terminology we will use when analyzing our algorithm without patience constraints, beginning with the next section.

When studying \Cref{alg:main} with patience constraints, a very similar reduction holds, but we must work with the \emph{stochastic} OCRS setting as in \cite[Definition 2]{adamczyk2015non}. We give the special case of this general definition for our setting of matching with patience constraints. We say a vertex $v$ \emph{has remaining patience} or just \emph{has patience} if it has been queried less than $\ell_v$ times, and \emph{has lost patience} otherwise. 

\begin{Def}[Stochastic OCRS for Matching with Patience]\label{def:stochastic-ocrs}
    We are given a graph $G = (V,E)$  and patiences $\{\ell_v\}_{v \in V}$, along with vectors $\vec{y}, \vec{p} \in [0,1]^E$. For every vertex $v$ we know $\sum_{e \ni v} y_e \le \ell_v$ and $\sum_{e \ni v} y_e \cdot p_e \le 1$. Each edge $e \in E$ is \emph{active} independently with probability $p_e$.
    
    A stochastic OCRS processes edges in some order; when processing an edge $e$, if $e$'s endpoints are unmatched and have remaining patience, it can decide to \emph{probe} $e$. Any edge that is probed is active independently with probability $p_e$, and if active must be matched. The stochastic OCRS is \emph{$c$-balanced} if for every edge $e$
    $$\Pr[e \in \mu] \geq c\cdot y_e \cdot p_e,$$
    where $\mu$ is the outputted matching. In a \emph{Stochastic RO-OCRS}, the edges are processed in a uniformly random order. 
\end{Def}

\begin{algorithm}[h]
\SetAlgoLined
   
        \textbf{for} each edge $e$, sample a random arrival time $t_e \sim \text{Unif}[0,1]$

        \For{each edge $e$, in increasing order of $t_e \sim \text{Unif}[0,1]$}{
    
            \If{$e$ is is free, and its endpoints have remaining patience}{
    		        \textbf{with probability} $y_e \cdot a(e)$, probe $e$ (and match, if active) \label{line:RO-OCRS-test-stochastic}
	        }
	    }
	
    \caption{Stochastic RO-OCRS}
    \label{alg:stochastic-ocrs}
\end{algorithm}

\Cref{alg:stochastic-ocrs} captures our algorithmic template for stochastic RO-OCRS. 

In the patience setting, we have the following analogous claim to \Cref{lemma:reduction}. 

\begin{restatable}{lem}{reductionpatience} \label{lemma:reduction-patience}
    If \Cref{alg:stochastic-ocrs} with attenuation function $a(\cdot)$ is $c$-balanced, then \Cref{alg:main} with function $a(\cdot)$ is $c$-approximate for the sequential pricing problem with patience constraints.
\end{restatable}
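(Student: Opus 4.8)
The plan is to mirror the proof of \Cref{lemma:reduction}, the only change being that we reduce to the \emph{stochastic} RO-OCRS of \Cref{def:stochastic-ocrs} (as implemented by \Cref{alg:stochastic-ocrs}) instead of the plain RO-OCRS of \Cref{alg:main-ocrs}. As there, I would couple the two processes by drawing, independently in advance, a Bernoulli variable $A_{ew}\sim\text{Ber}(p_{ew})$ for every edge $e=(ij)$ and price $w$, together with the price $\pi_e$ (with $\Pr[\pi_e=w]=y_{ew}$, and $\pi_e=\bot$, ``no offer'', with the remaining probability $1-\sum_w y_{ew}$). The new ingredient needed relative to \Cref{lemma:reduction} is to turn the per-edge \emph{distribution} over prices into the single pair $(y_e,p_e)$ that a stochastic OCRS expects. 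I would feed \Cref{alg:stochastic-ocrs} the instance on $G$ with the same patiences $\{\ell_v\}$ and with $y_e:=\sum_w y_{ew}$ and $p_e:=\left(\sum_w y_{ew}\,p_{ew}\right)/y_e$ (edges with $y_e=0$ are vacuous). This instance is feasible in the sense of \Cref{def:stochastic-ocrs}: $\sum_{e\ni v}y_e=\sum_{e\ni v}\sum_w y_{ew}\le\ell_v$ by \eqref{eqn:patience-LP2} and $\sum_{e\ni v}y_e\,p_e=\sum_{e\ni v}\sum_w y_{ew}\,p_{ew}\le 1$ by \eqref{eqn:single-success-LP}.

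Second, I would check that \Cref{alg:main}, run with this pre-sampled randomness, is nothing but a run of \Cref{alg:stochastic-ocrs} on that instance. The identification is: ``\Cref{alg:stochastic-ocrs} probes $e$'' is ``\Cref{alg:main} reaches $e$ free with both endpoints still patient, its $a(e)$-coin is heads, and $\pi_e\neq\bot$'', and ``$e$ is active'' is ``the worker accepts $\pi_e$''. Since the $a(e)$-coin, the event $\pi_e\neq\bot$, and the history before $e$ are independent, a probe happens with probability $a(e)\cdot y_e$ given that $e$ is reached free and patient---exactly the probe probability $y_e\cdot a(e)$ of \Cref{alg:stochastic-ocrs}---and consumes one unit of patience at each endpoint, just as a probe does; and conditioned on a probe, ``the worker accepts'' has probability $\sum_w(y_{ew}/y_e)\,p_{ew}=p_e$, independently of the probe decision and of the past. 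Hence the two processes output the same distribution over matchings $\mu$, and $c$-balancedness of the stochastic OCRS gives $\Pr[e\in\mu]\ge c\cdot y_e\,p_e=c\sum_w y_{ew}\,p_{ew}$ for every edge $e$.

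Third comes the per-price revenue bookkeeping, identical in spirit to \Cref{lemma:reduction}. Let $F_e$ denote the event that, when $e$ is processed, $e$ is free, both endpoints still have patience, and the $a(e)$-coin is heads. The point is that $F_e$ is a function of the arrival times, the $a(\cdot)$-coins, and the prices and acceptance bits of the edges processed \emph{before} $e$, hence is independent of the pair $\left(\pi_e,\{A_{ew}\}_w\right)$; adding patience does not spoil this, because whether earlier edges consume patience still never depends on $e$'s own price or acceptance bits. Since $\{e\in\mu\}=F_e\cap\{\pi_e\neq\bot\}\cap\{A_{e\pi_e}=1\}$, this yields $\Pr[e\in\mu,\ \pi_e=w]=\Pr[F_e]\cdot y_{ew}\,p_{ew}$ for each $w$; summing over $w$ and comparing with the balancedness bound above shows $\Pr[F_e]\ge c$ whenever $\sum_w y_{ew}\,p_{ew}>0$. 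Taking $\vec y$ to be an optimal solution of \ref{LP-main} (so that $y_{ew}>0\Rightarrow v_j\ge w$ and every summand below is nonnegative),
\begin{align*}
\E\!\left[\,\sum_{j\in A}v_j-\sum_{e\in\mu}\pi_e\,\right]
&=\sum_{e=(ij)}\sum_w (v_j-w)\cdot\Pr[e\in\mu,\ \pi_e=w] \\
&=\sum_{e=(ij)}\sum_w (v_j-w)\cdot\Pr[F_e]\cdot y_{ew}\cdot p_{ew}
\ \ \geq\ \ c\cdot OPT(\text{\ref{LP-main}}),
\end{align*}
and invoking \Cref{LP>=OPT} finishes the argument. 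I expect the only real obstacle to be bookkeeping rather than mathematics: correctly ``collapsing'' the price distribution of each edge into the single pair $(y_e,p_e)$ while keeping the patience accounting of \Cref{alg:main} and \Cref{alg:stochastic-ocrs} in sync---in particular, being careful that a null offer ($\pi_e=\bot$) consumes no patience, so the effective probe probability is $a(e)\cdot y_e$ and not $a(e)$. Once the coupling is set up, the independence argument and the arithmetic are exactly as in \Cref{lemma:reduction}.
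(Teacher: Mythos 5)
Your proposal is correct and follows essentially the same route as the paper's proof: define $y_e=\sum_w y_{ew}$ and $p_e=\sum_w y_{ew}p_{ew}/y_e$, verify feasibility from constraints \eqref{eqn:patience-LP2} and \eqref{eqn:single-success-LP}, couple the two processes by pre-sampling the prices and acceptance bits so that a ``probe'' is exactly a real offer surviving the $a(e)$-coin, and then use independence of the free-and-patient event from $(\pi_e,\{A_{ew}\}_w)$ to do the per-price revenue accounting. Your explicit handling of the null offer $\pi_e=\bot$ and the per-term nonnegativity is if anything slightly more careful than the paper's write-up, but the argument is the same.
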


The proof proceeds similarly to the proof of \Cref{lemma:reduction}; in particular, given a solution $\{y_{ew}, p_{ew}\}$ to \ref{LP-main}, we note that taking $y_e := \sum_w y_{ew}$ and $p_e := \frac{\sum_w y_{ew} p_{ew}}{\sum_w y_{ew}}$ gives a valid input to a stochastic OCRS algorithm. The full proof is deferred to \Cref{app:omittedproofs2}. As one might expect, the proof also applies to reducing to the more general setting of \emph{chosen-order} stochastic OCRS.

\begin{lem}
A $c$-balanced chosen-order stochastic OCRS for matchings with patience implies a  $c$-approximate algorithm for the sequential pricing problem with patience constraints. 
\end{lem}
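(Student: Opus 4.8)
The plan is to reprise the argument of \Cref{lemma:reduction} and \Cref{lemma:reduction-patience}, invoking the given chosen-order stochastic OCRS as a black box in place of \Cref{alg:stochastic-ocrs}. First I would fix the input: solve \ref{LP-main} to obtain a solution $\{y_{ew}\}$ (taking one with $y_{ew}=0$ whenever $w>v_j$, which is without loss since deleting such variables does not change the optimal value of \ref{LP-main}), and for each edge $e$ set $y_e := \sum_w y_{ew}$ and $p_e := \left(\sum_w y_{ew}\, p_{ew}\right)/y_e$ (with $p_e := 0$ when $y_e = 0$). I would then verify that $\big(G, \{\ell_v\}, (\vec y, \vec p)\big)$ is a legal input to a stochastic OCRS for matching with patience in the sense of \Cref{def:stochastic-ocrs}: constraint \eqref{eqn:single-probe} gives $y_e \le 1$; \eqref{eqn:patience-LP2} gives $\sum_{e \ni v} y_e \le \ell_v$; \eqref{eqn:single-success-LP} gives $\sum_{e \ni v} y_e\, p_e \le 1$; and $p_e \in [0,1]$ since it is a convex combination of the $p_{ew}\in[0,1]$.

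Next I would define the pricing algorithm and couple it with the OCRS. The algorithm simulates the chosen-order stochastic OCRS on $(\vec y, \vec p)$; whenever the OCRS elects to \emph{probe} an edge $e$, the algorithm draws a price $\pi_e = w$ with probability $y_{ew}/y_e$ and makes the one-time offer of price $\pi_e$ along $e$, matching $e$ exactly when the worker accepts. To couple the two processes, pre-sample for every edge $e$ independently a pair $(\pi_e, A_e)$ where $\pi_e$ has the distribution above and $A_e \mid \pi_e \sim \mathrm{Ber}(p_{e\pi_e})$; then marginally $A_e \sim \mathrm{Ber}(p_e)$ and the variables $(A_e)_{e\in E}$ are mutually independent, hence a bona fide realization of the active set for the stochastic OCRS. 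Feeding these realizations to the OCRS and reading off each probe outcome as the corresponding offer outcome (so $A_e = 1$ iff the offer at price $\pi_e$ is accepted) makes the OCRS's output matching $\mu$ coincide with the algorithm's matching; moreover the algorithm is a valid pricing policy, since the OCRS only ever probes free edges whose endpoints have remaining patience, and the two matchings agree throughout the run.

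The crucial structural observation --- exactly as in \Cref{lemma:reduction} --- is that the OCRS's decisions (which edge to probe, and in which order) are a function only of $(\vec y, \vec p)$, the patiences, the OCRS's own internal randomness, and the statuses $A_f$ of previously probed edges; in particular they are independent of $\pi_e$ and of $A_e$ for any edge $e$ not yet probed. Writing $Q_e$ for the indicator that the OCRS probes $e$, we thus have $\{e \in \mu\} = \{Q_e = 1\} \cap \{A_e = 1\}$ with $Q_e$ independent of $(\pi_e, A_e)$, so $c$-balancedness $\Pr[e \in \mu] \ge c\, y_e\, p_e$ gives $\Pr[Q_e = 1] \ge c\, y_e$ (for $y_e > 0$; edges with $y_e = 0$ are never probed and contribute nothing), and therefore
\[
\Pr[e \in \mu,\ \pi_e = w] \;=\; \Pr[Q_e = 1]\cdot \frac{y_{ew}}{y_e}\cdot p_{ew} \;\ge\; c\, y_{ew}\, p_{ew}.
\]

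To finish, since the algorithm's revenue is $\sum_{(ij)\in\mu}(v_j - \pi_{ij})$, taking expectations and summing the previous bound edge by edge gives $\E[\mathrm{revenue}] = \sum_{(ij)}\sum_w (v_j - w)\Pr[e\in\mu, \pi_e = w] \ge c\sum_{(ij),w}(v_j - w)\, y_{ew}\, p_{ew} = c\cdot \mathrm{OPT}(\ref{LP-main})$, where the inequality is valid because every term with $y_{ew}>0$ has $w\le v_j$ by our choice of LP solution; combined with \Cref{LP>=OPT} this yields the claimed $c$-approximation. I expect the main obstacle to be making the coupling and independence claims fully precise: one must argue that the chosen-order stochastic OCRS --- which may choose both its order and its probe set adaptively from observed outcomes --- never ``peeks at'' the price $\pi_e$ or the own-status $A_e$ of an edge before probing it, which is exactly what legitimizes treating $(A_e)_{e\in E}$ as a genuine active-set realization while separately extracting the $\pi_e$'s for the revenue accounting; everything past that point is the same linearity-of-expectation computation as in \Cref{lemma:reduction}.
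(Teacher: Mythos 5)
Your proposal is correct and follows essentially the same route as the paper's argument for \Cref{lemma:reduction-patience} (which the paper states carries over verbatim to the chosen-order setting): define $y_e=\sum_w y_{ew}$ and $p_e=\sum_w y_{ew}p_{ew}/y_e$, pre-sample prices and acceptance indicators, couple the pricing algorithm to the OCRS run, and use that the probe decision for $e$ is independent of $(\pi_e,A_e)$ to convert $c$-balancedness into the per-$(e,w)$ bound $\Pr[e\in\mu,\pi_e=w]\ge c\,y_{ew}p_{ew}$. Your explicit handling of the adaptive black-box nature of the OCRS and of the sign of $v_j-w$ (via the WLOG restriction $y_{ew}=0$ for $w>v_j$) is if anything slightly more careful than the paper's write-up, but it is not a different proof.
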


\subsection{Warm-Up: A $\frac{1}{2}(1-\e^{-2})$-balanced RO-OCRS}
\label{sec:warm-up-RO-edge-arrival} 

We start by presenting a simple instantiation of \Cref{alg:main-ocrs} yielding a $\frac{1}{2}(1-\e^{-2})\approx 0.432$-balanced RO-OCRS for matching in general graphs.
This first attenuation funtion we use is implied by the work of Lee and Singla \cite[Appendix A.2 of arXiv version]{lee2018optimal}. They considered the special case of star graphs, for which their function gives an optimal $(1-1/\e)$-balanced RO-OCRS. We show that the same attenuation function results in a $\frac{1}{2}(1-\e^{-2})$-balanced RO-OCRS in general graphs.

For our analysis of instantiations of \Cref{alg:main-ocrs}, we will imagine that the Bernoulli($a(e)$) variable used in \Cref{line:RO-OCRS-test} was drawn before testing if $e$ is free, and say that an edge $e$ which is both active and has its Bernoulli variable equal to one (i.e., it will pass the test in \Cref{line:RO-OCRS-test} if it is active and free) is \emph{realized}.
Note that conditioned on the arrival time $t_e=t$, the realization of edge $e$ and its matched status upon arrival time (i.e., whether or not it is free) are independent. 

The following lemma lower bounds the probability of an edge being in the output matching $\mu$ by considering the event that no edge $f \in N_e$ (edges incident on $e$) arrives before $t_e$ and is realized.

\begin{lem}\label{lem:analysisOfAlg2}
    \Cref{alg:main} run with $a(e) = a_1(x_e,t_e) := \e^{-t_e x_e}$ guarantees for each edge $e\in E$ that
    $$\Pr[e \in \mu] \geq \frac{1}{2} \left( 1- \frac{1}{\e^{2}} \right) \cdot x_{e} \ge 0.432 \cdot x_{e}.$$
\end{lem}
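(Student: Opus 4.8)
The plan is to fix an edge $e$ and condition on its arrival time $t_e = t$. Conditioned on this, $e$ being realized (active and passing its Bernoulli test) is independent of $e$ being free at time $t$, so $\Pr[e \in \mu \mid t_e = t] = \Pr[e \text{ realized} \mid t_e = t] \cdot \Pr[e \text{ free at } t \mid t_e = t]$. The first factor is $x_e \cdot a_1(x_e, t) = x_e \cdot \e^{-t x_e}$: indeed, $e$ is realized iff it is active (probability $x_e$) and its Bernoulli$(a_1(x_e,t))$ variable is $1$. The main work is to lower bound $\Pr[e \text{ free at } t \mid t_e = t]$, i.e., the probability that neither endpoint of $e$ has been matched by any edge arriving before time $t$.

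First I would set up the standard sufficient event: $e$ is free at time $t$ whenever no edge $f \in N_e$ has arrived before $t$ \emph{and} been realized (this is only sufficient, not necessary — a realized incident edge might itself have been blocked — which is exactly the slack the later sections exploit; for this warm-up we just use the one-sided bound). For a fixed incident edge $f$ with $x_f$, the probability that $f$ arrives before $t$ and is realized is $\int_0^t x_f \cdot a_1(x_f, s)\, ds = \int_0^t x_f \e^{-s x_f}\, ds = 1 - \e^{-t x_f}$. Crucially, across distinct edges $f$, the events ``$f$ arrives before $t$ and is realized'' are independent (arrival times are independent, active statuses are independent, Bernoulli tests are independent), so conditioning on $t_e = t$,
\[
\Pr[e \text{ free at } t \mid t_e = t] \geq \prod_{f \in N_e} \left( 1 - (1 - \e^{-t x_f}) \right) = \prod_{f \in N_e} \e^{-t x_f} = \e^{-t \sum_{f \in N_e} x_f}.
\]
Let $d_u := \sum_{f \ni u, f \neq e} x_f$ for the two endpoints $u$ of $e$; then $\sum_{f \in N_e} x_f \leq d_i + d_j$ where $e = (ij)$, and the fractional matching constraint $\vec{x} \in \calP(G)$ gives $d_i \leq 1 - x_e$ and $d_j \leq 1 - x_e$, so $\sum_{f \in N_e} x_f \leq 2(1 - x_e)$. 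Hence $\Pr[e \text{ free at } t \mid t_e = t] \geq \e^{-2t(1-x_e)}$.

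Putting the pieces together and integrating over $t \in [0,1]$,
\[
\Pr[e \in \mu] \geq \int_0^1 x_e \e^{-t x_e} \cdot \e^{-2t(1 - x_e)}\, dt = x_e \int_0^1 \e^{-t(2 - x_e)}\, dt = x_e \cdot \frac{1 - \e^{-(2 - x_e)}}{2 - x_e}.
\]
The final step is to show $\frac{1 - \e^{-(2-x)}}{2 - x} \geq \frac{1}{2}(1 - \e^{-2})$ for all $x \in [0,1]$. I expect this to be the one genuinely analytic step: let $g(x) := \frac{1 - \e^{-(2-x)}}{2-x}$; at $x = 0$ we get exactly $\frac{1}{2}(1 - \e^{-2})$, and one checks $g$ is nondecreasing on $[0,1]$ (equivalently, the function $h(z) = \frac{1 - \e^{-z}}{z}$ is decreasing in $z > 0$, applied with $z = 2 - x$ decreasing in $x$), so the minimum over $x \in [0,1]$ is attained at $x = 0$, giving $\Pr[e \in \mu] \geq \frac{1}{2}(1 - \e^{-2}) x_e \geq 0.432\, x_e$ as claimed. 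The monotonicity of $h(z) = (1-\e^{-z})/z$ is a routine one-variable calculus fact (its derivative has the sign of $\e^{-z}(1+z) - 1 \leq 0$), so no obstacle is expected there; the only care needed is making sure the independence across incident edges is invoked correctly and that the degree bound uses the polytope constraint at both endpoints.
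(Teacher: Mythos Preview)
Your proof is correct and follows essentially the same approach as the paper: both condition on $t_e$, lower bound $\Pr[e \text{ free}]$ by the probability that no incident edge is realized before $t_e$, use independence to get the product $\prod_{f\in N_e}\e^{-t x_f}=\e^{-t\sum_{f\in N_e}x_f}\ge \e^{-2t(1-x_e)}$, and integrate. The only cosmetic difference is that the paper bounds the integrand pointwise by $\e^{-2t}$ before integrating, whereas you integrate first to obtain $h(2-x_e)$ and then invoke the (equivalent) monotonicity of $h(z)=(1-\e^{-z})/z$; both routes are immediate.
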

\begin{proof}
    Let $R_0(e)$ be the event that zero edges $f\in N_e$ are realized prior to time $t_e$. Clearly, $e$ is free upon arrival if $R_0(e)$ occurs.
   
    Using independence of arrival times and (in)activity of edges $f\in N_e$, and the fact that $\vec{x}\in \calP(G)$, we can lower bound the probability of $R_0(e)$.
    \begin{equation}
        \Pr[R_0(e) \mid t_e = y] = \prod_{f\in N_e} \left( 1 - \int_0^{y} x_f \cdot \e^{- x_f z} \, dz \right)  =  \prod_{f\in N_e} \e^{-x_f y} = \e^{ -{ \sum_{f \in N_e} x_f y} } \geq \e^{- (2 - 2x_e) y}.  \label{ineq:noblockingedge}
    \end{equation}
    
    \Cref{ineq:noblockingedge} implies that $\Pr[e \textrm{ free} \mid t_e = y]\geq \Pr[R_0(e) \mid t_e=y]\geq \e^{-(2-2x_e)y}$.
    Taking total probability over arrival time $t_e$, and noting that
    $e$ is matched if $e$ is both realized and free upon arrival (with these probabilities independent when conditioned on $t_e=a$), the lemma follows.
    \begin{align}
        \nonumber \Pr[e \in \mu]  &\ge \left( \int_{0}^1 \e^{ - (2 - 2x_e) y }  \cdot \e^{- x_e y}  \, dy \right) \cdot x_e \ge  \left( \int_{0}^1 \e^{ - 2y}  \, dy \right) \cdot x_e  = \frac{1}{2} \left( 1- \frac{1}{\e^{2}} \right) \cdot x_e. \qedhere
    \end{align}
\end{proof}

\paragraph{Tightness of Analysis.}\label{sec:tightness-warm-up}
The above analysis is tight for \Cref{alg:main} with the above attenuation function:
consider a 3-edge path with edges $(a, b, c)$ with $x_a = x_c = 1 - 1/n$ and $x_b = 1/n$, and consider the bound on the probability that edge $b$ is matched. On this instance, all inequalities in the above analysis are tight as $n \rightarrow \infty$, since $b$ is matched if and only if $R_0(b)$ occurs (if $a$ or $c$ are realized before $b$, they must be matched), the fractional degree of both endpoints of $b$ is 1, and $x_b \rightarrow 0$. So, the probability $b$ is matched approaches $(1 - \e^{-2})/2 \cdot x_b$ as $n \rightarrow \infty$.\footnote{Note that consequently, the overall LP rounding scheme via this RO-OCRS does not provide better than a $(1 - \e^{-2})/2$-approximation if $b$ has much larger weight than $a$ or $c$.}

\begin{rem}
    The same attenuation function $a_1(\cdot )$ allows to recreate the $0.382$-approximation for online stochastic probing due to Brubach et al.~\cite{brubach2021improved}. See \Cref{sec: patience}. In what follows we present an improved attenuation function which will improve the bounds for both problems we consider.
\end{rem}

\subsection{Motivation for the Improved Attenuation Function}

To analyze instantiations of algorithms \ref{alg:main} and \ref{alg:main-ocrs}, we need to analyze the probability that $e$ is free (at time $t_e$). 
In the proof of \Cref{lem:analysisOfAlg2}, we lower bounded this event by the event that zero edges incident on $e$ were realized before time $t_e$, which we denoted by $R_0(e)$. Generalizing this notation further, if we let $Q(e):=|\{f\in N_e \mid f \textrm{ realized} \wedge (t_f < t_e)\}|$ be the set of realized edges incident on $e$ arriving before $e$, we define the following events for each non-negative integer $k$.
$$R_k(e) := \mathds{1}[ |Q(e)| = k ].$$
Since the events $\{R_k(e) \mid k\in \mathbb{Z}_{\geq 0}\}$ partition our probability space, we have
\[
\Pr[e \textrm{ free}] = \sum_{k=0}^\infty \Pr[e \textrm{ free} \wedge R_k(e)].
\]

In \Cref{sec:warm-up-RO-edge-arrival} we lower bounded $\Pr[e \textrm{ free}]$ by lower bounding $\Pr[R_0(e)]=\Pr[e \textrm{ free} \wedge R_0(e)]$. A natural approach to improve the above is to likewise lower bound $\Pr[e \textrm{ free} \wedge R_k(e)]$ for $k>0$. Unfortunately, 
as illustrated by the tight example in \Cref{sec:warm-up-RO-edge-arrival}, 
such terms can be zero if all edges $f\in N_e$ have no edges incident on them other than $e$. Indeed, if such an edge $f$ arrives at time $t_f<t_e$ and is realized, it will be matched, and $e$ will consequently not be free upon arrival. Consequently, $\Pr[e \textrm{ free} \wedge R_k(e)]=0$ for all $k>0$ if $N_f=\{e\}$ for all $f\in N_e$.

Our key observation is that edges $f\in N_e$ with few incident edges other than $e$ (more precisely, with low $\sum_{g\in N_f} x_g$) are matched with probability strictly greater than $\frac{1}{2}(1-\e^{-2})\cdot x_f$. 
For example, if $\sum_{g\in N_f} x_g \leq 1$, then $f$ is matched with probability $(1-1/\e)\cdot x_f \approx 0.632 \cdot x_f$. (This is precisely the analysis of \cite{lee2018optimal} for star graphs.)
We can therefore safely attenuate such edges $f$ more aggressively and still beat the bound of $\frac{1}{2}(1-\e^{-2})\approx 0.432$.

The upshot of this approach is that, informally, if $N_e$ consists mostly of edges $f$ with little ``contention'', i.e., low $\sum_{g\in N_f} x_g$, then this increases the probability that no edges incident on $e$ are realized before time $t_e$, i.e., it increases $\Pr[e \textrm{ free} \wedge R_0(e)]$. In contrast, if most incident edges $f\in N_e$ are highly contended, i.e. have a high value of $\sum_{g\in N_f} x_g$, we will show that this implies a constant lower bound for $\Pr[e \textrm{ free} \wedge R_1(e)]$. In particular, we show that if a single edge $f\in N_e$ is realized and arrives at time $t_f<t_e$,
then there is a non-negligible probability that some edge $g\in N_f$ is matched before time $t_f$, and so $e$ is free despite having a realized incident edge $f$. Combining the above yields our improved bounds for $\Pr[e \textrm{ free}]$, and consequently for $\Pr[e \in \mu]$.

\paragraph{The tighter attenuation function.} Motivated by the above discussion, we consider an attenuation function that is more aggressive (i.e., is lower) for edges $e$ with small $d_e := \sum_{f\in N_e} x_f$. More precisely, we attenuate more aggressively edges $e$ with large \emph{slack}, $s_e:=2-d_e-x_e.$ (Note that $s_e \geq x_e\geq 0$, since $\vec{x}\in \calP(G)$.) In particular, we will use the following attenuation function.
\[
a_2(t_e,x_e,s_e) := \e^{-t_e x_e} \cdot (1 - \alpha \cdot s_e)
\]
Here, $\alpha \in [0, 0.5]$ is a constant we will optimize over at the end of our analysis. (Note that since $s_e \in [0,2]$, the term $a_2(e) \in [0,1]$ is a valid probability.)

\paragraph{Optimizing Our Approach.} As we shall see in the next sections, the above choice of attenuation function allows to beat the state-of-the-art for RO-OCRS (and for the correlation gap), as well as stochastic RO-OCRS. However, it is by no means the only one we could have chosen to achieve this goal. That being said, our analysis requires several non-trivial cancellations and lower bounds given by our explicit choice of $a_2(\cdot,\cdot,\cdot)$, and so optimizing this function symbolically seems challenging. We leave the optimization of our approach as the topic of future research. 

\section{RO-OCRS for Matching: Going Beyond $\frac{1}{2}(1 - \e^{-2})$}\label{sec:going-beyond}

In this section we analyze \Cref{alg:main-ocrs} using the attenuation function $a_2(e)$, with which we substantially improve on the bound of $\frac{1}{2}(1-\e^{-2})$ for RO-OCRS. We begin with an overview of our approach. 

\subsection{Overview}\label{sec:roadmap}

The following function will appear prominently in our analysis:
$$h(x) := \frac{1}{x} \left( 1 - \e^{-x} \right) = \int_{0}^1 \e^{-x\cdot z} \,dz.$$ 

Note that we wish to prove that $\Pr[e\in \mu]/x_e$ is strictly greater than $h(2)=\frac{1}{2}(1-\e^{-2})$.
The following lemma will yield an improvement over this bound if most edges in $N_e$ have little contention. 

\begin{restatable}{lem}{saturatedbound} \label{lem:saturated}
For each edge $e\in E$, we have 
$$\Pr[ e \in \mu \wedge R_0(e)] \ge(1 - \alpha \cdot s_e)  \cdot \left(h(2) + 0.14\cdot \left( s_e + \alpha \sum_{f \in N_e} x_f s_f \right)\right) \cdot x_e.$$
\end{restatable}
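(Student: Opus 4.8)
## Proof Proposal for Lemma~\ref{lem:saturated}

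\textbf{Setup and reduction to conditioning on arrival time.} The plan is to compute $\Pr[e \in \mu \wedge R_0(e)]$ by conditioning on $t_e = y$ and integrating. Recall that $e \in \mu$ under event $R_0(e)$ happens exactly when $e$ is realized (probability $a_2(t_e,x_e,s_e) = \e^{-y x_e}(1-\alpha s_e)$ when $t_e = y$) and $R_0(e)$ holds; these are independent given $t_e = y$. So I would write
\begin{align*}
\Pr[e \in \mu \wedge R_0(e)] = (1-\alpha s_e) \cdot x_e \cdot \int_0^1 \e^{-y x_e} \cdot \Pr[R_0(e) \mid t_e = y] \, dy.
\end{align*}
The bulk of the work is to lower bound $\Pr[R_0(e) \mid t_e = y] = \prod_{f \in N_e} \Pr[f \text{ not realized before } y]$. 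For each $f \in N_e$, the probability that $f$ \emph{is} realized before time $y$ is $\int_0^y x_f \cdot \e^{-z x_f}(1-\alpha s_f)\,dz = (1-\alpha s_f)(1 - \e^{-y x_f})$. So $\Pr[R_0(e)\mid t_e=y] = \prod_{f \in N_e}\left(1 - (1-\alpha s_f)(1-\e^{-y x_f})\right)$.

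\textbf{Lower bounding the product.} Here is where the improvement over the warm-up comes in: in Lemma~\ref{lem:analysisOfAlg2} one uses $a(f) = \e^{-z x_f}$, giving each factor exactly $\e^{-y x_f}$; now each factor is \emph{larger} because of the extra $\alpha s_f$ attenuation. I would write $1 - (1-\alpha s_f)(1-\e^{-y x_f}) = \e^{-y x_f} + \alpha s_f(1 - \e^{-y x_f})$ and then lower bound $1 - \e^{-y x_f} \ge$ (something linear or of a convenient form), e.g. using $1-\e^{-t} \ge t \e^{-t}$ or a bound tuned to extract a clean $\sum_{f} x_f s_f$ term. The goal is to show
\begin{align*}
\prod_{f \in N_e}\left(\e^{-y x_f} + \alpha s_f(1-\e^{-y x_f})\right) \;\ge\; \e^{-y \sum_f x_f}\left(1 + \alpha \cdot c(y) \sum_{f \in N_e} x_f s_f\right)
\end{align*}
for an appropriate function $c(y)$, using $\prod(1+a_f) \ge 1 + \sum a_f$ after factoring out $\e^{-y x_f}$ from each term and bounding $(1-\e^{-y x_f})/\e^{-y x_f} = \e^{y x_f} - 1 \ge y x_f$. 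Combining with $\sum_{f \in N_e} x_f = d_e = 2 - x_e - s_e$, this gives $\e^{-y(2 - x_e - s_e)}(1 + \alpha c(y)\sum_f x_f s_f)$.

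\textbf{Integrating and extracting the two bonus terms.} Plugging back in, the integral becomes $(1-\alpha s_e) x_e \int_0^1 \e^{-2y}\,\e^{y s_e}\,(1 + \alpha c(y)\sum_f x_f s_f)\,dy$. The factor $\e^{-2y}$ integrates to give the baseline $h(2)$; the factor $\e^{y s_e} \ge 1 + y s_e$ (or a better second-order bound on $[0,1]$) contributes a term proportional to $s_e$; and the $\alpha c(y) \sum_f x_f s_f$ term contributes the cross term. The remaining task is to verify that the resulting constants are all at least $0.14$ — i.e. that $\int_0^1 \e^{-2y} \cdot (\text{gain from } \e^{y s_e}) \ge 0.14 s_e$ uniformly over the allowed range $s_e \in [x_e, 2]$, and similarly $\int_0^1 \e^{-2y} c(y)\,dy \ge 0.14$. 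These are single-variable (or two-variable, in $s_e$) numerical inequalities over bounded intervals.

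\textbf{Main obstacle.} The delicate point is getting the constant $0.14$ to come out \emph{simultaneously} for both the $s_e$ term and the $\alpha \sum_f x_f s_f$ term, with the \emph{same} clean bound, while only using $s_e \ge x_e$ and $d_e = 2 - x_e - s_e$ and not losing too much in the chain of inequalities ($1-\e^{-t}\ge t\e^{-t}$, $\e^{t} - 1 \ge t$, $\prod(1+a_f)\ge 1+\sum a_f$, $\e^{y s_e}\ge 1+ys_e$). In particular the $s_e$-dependent term $\e^{y s_e}$ inside the integral means the coefficient of $s_e$ one extracts is itself a function of $s_e$; I expect one needs a slightly careful convexity/monotonicity argument (or to just verify the worst case is at $s_e = 0$, where the linear coefficient is smallest) to certify the uniform $0.14$ bound. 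The cancellations the authors allude to — "our analysis requires several non-trivial cancellations" — presumably show up precisely in matching the $\e^{y s_e}$ expansion against the $(1-\alpha s_e)$ prefactor so that nothing worse than $-\alpha s_e$ times the bracket survives.
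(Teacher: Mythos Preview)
Your approach is correct and will recover the claimed bound, but the paper's proof is cleaner and sidesteps the ``main obstacle'' you worry about. The difference is in how the product $\prod_{f\in N_e}\bigl(1-(1-\alpha s_f)(1-\e^{-yx_f})\bigr)$ is handled.

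You factor out $\e^{-yx_f}$, apply $\prod(1+a_f)\ge 1+\sum a_f$ and $\e^{yx_f}-1\ge yx_f$, and arrive at $\e^{-yd_e}\bigl(1+\alpha y\sum_f x_fs_f\bigr)$. Integrating then forces you to bound $h(2-s_e)$ and $\int_0^1 y\,\e^{-y(2-s_e)}\,dy$ separately. Both in fact equal (at the worst case $s_e=0$) $-h'(2)\approx 0.1485>0.14$, which is why the two constants match; so your route works, but you carry two pieces through.

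The paper instead uses the single concavity inequality $(1-\alpha s_f)(1-\e^{-x_fy})\le 1-\e^{-x_fy(1-\alpha s_f)}$ (valid since $1-\alpha s_f\in[0,1]$; this is their Fact~\ref{fact:concavity}) to push $(1-\alpha s_f)$ \emph{into the exponent}. Each factor is then at least $\e^{-yx_f(1-\alpha s_f)}$, the product collapses to one exponential $\e^{-y(d_e-\alpha\sum_f x_fs_f)}$, and after multiplying by $\e^{-yx_e}$ and integrating the whole thing becomes $h\bigl(2-s_e-\alpha\sum_f x_fs_f\bigr)$. A single tangent-line bound $h(2-x)\ge h(2)+0.14x$ (from convexity of $h$) then yields both bonus terms at once with the common coefficient $0.14$.

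Your concern about the $(1-\alpha s_e)$ prefactor is also misplaced here: the lemma statement keeps it factored outside the bracket, so there is no expansion against it in this proof. The ``non-trivial cancellations'' remark in the paper refers to the overall choice of $a_2$ and the combination of this lemma with the $R_1(e)$ bound in the five-variable program, not to this lemma in isolation.
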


We prove \Cref{lem:saturated} in \Cref{sec:a-0-lower}. For now, we provide some motivation for this lemma. Recall that $h(2)$ is precisely the bound we are trying to beat. Since $h(x)$ is the integral of a decreasing function in $x$, it is itself decreasing in $x$. 
We will set $\alpha$ to guarantee that if $s_e$ or $\sum_{f \in N_e} x_f s_f$ are large, then we already have a greater than $h(2)\cdot x_e \approx 0.432\cdot x_e$ probability that $e$ is matched. 

The technical meat of our analysis will be in providing improved bounds on $\Pr[e\in \mu \wedge R_1(e)]$ in the complementary scenario, where $s_e$ and $\sum_{f\in N_e} x_f s_f$ are both small. 
We address this in \Cref{sec: a1-lower-bound}. For reasons that will become apparent in that section, we need the following notation.
$$m_e := \max \{ x_f \mid f \in N_e, \text{ there exists a triangle containing } e, f \}.$$
Note that for bipartite graphs, $m_e=0$, though in general graphs this need not be the case.
Our main technical lemma is the following.

\begin{restatable}{lem}{unsaturatedbound}\label{lem:unsaturated}
    For each edge $e\in E$, we have
     $$\Pr[e \in \mu \wedge R_1(e) ] \ge \left( (1 - \alpha \cdot s_e)  (1-  2\alpha)^2  \cdot \sum_{f \in N_e}  x_f (1 - m_e - x_f - s_f)^+ \cdot 0.0275  \right) \cdot x_e.$$
\end{restatable}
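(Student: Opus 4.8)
}
The plan is to produce, for each $f\in N_e$ and each suitable ``blocker'' edge $g$, an explicit witness event $G_{f,g}$ with two features: it forces $e\in\mu$ while $f$ is the \emph{unique} edge of $N_e$ realized before $t_e$ (so $R_1(e)$ holds); and the events $\{G_{f,g}\}$ are \emph{pairwise disjoint}. The disjointness is the heart of the matter: it lets me replace a lossy union bound over blockers by an exact summation $\sum_{f,g}\Pr[G_{f,g}]$, which is precisely what generates the sum $\sum_{f\in N_e}x_f(\cdots)$ in the statement. Everything after that reduces to a three-fold time integral and a one-line degree-constraint estimate.

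Concretely, fix $e=(u,v)$. Each $f\in N_e$ meets $e$ in exactly one vertex; write $f=(v_f,w_f)$ with $v_f\in\{u,v\}$ and $w_f\notin\{u,v\}$. Call $g$ a \emph{blocker} for $f$ if $g\in N_f\setminus N_e$; in a simple graph this forces $g\ni w_f$ and $g$ incident to neither endpoint of $e$. Let $G_{f,g}$ be the event that: $t_g<t_f<t_e$; the edges $e,f,g$ are each realized; no edge of $N_e\setminus\{f\}$ is realized before time $t_e$; and no edge of $(N_g\setminus\{f\})\setminus N_e$ is realized before time $t_g$. I would then verify two claims. First, $G_{f,g}\subseteq\{e\in\mu\wedge R_1(e)\}$: on $G_{f,g}$ the edge $g$ is realized and free on arrival — the edges of $N_g$ that could block it by time $t_g$ are exactly the ones being cleared (those in $N_e$ up to the later time $t_e>t_g$, the others up to $t_g$, while $f\in N_g$ only arrives at $t_f>t_g$) — so $g$ is matched at $t_g<t_f$; since $g\ni w_f$, the edge $f$ is not free on arrival and hence is unmatched; with $f$ the only realized edge of $N_e$ before $t_e$ and $f$ unmatched, $e$ is free and, being realized, is matched, and $R_1(e)$ holds. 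Second, pairwise disjointness: for fixed $f$, two distinct blockers $g,g'$ both lie on $w_f$ and so cannot both be matched, whereas $G_{f,g}$ and $G_{f,g'}$ each force ``their'' blocker matched; and for $f\ne f'$ in $N_e$, the two events disagree on whether $f'$ is realized before $t_e$. Hence $\Pr[e\in\mu\wedge R_1(e)]\ \ge\ \sum_{f\in N_e}\sum_{g\text{ a blocker of }f}\Pr[G_{f,g}]$.

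Next I would evaluate $\Pr[G_{f,g}]$ by conditioning on $(t_e,t_f,t_g)=(t,r,z)$ with $z<r<t$. The defining sub-events of $G_{f,g}$ depend on pairwise disjoint edge sets ($\{e\}$, $\{f\}$, $\{g\}$, $N_e\setminus\{f\}$, and $(N_g\setminus\{f\})\setminus N_e$), hence are independent; bounding the ``clearing'' probabilities via $a_2(\cdot)\le a_1(\cdot)=\e^{-(\cdot)x}$, using $x_e+d_e=2-s_e$ (which collapses the $t$-exponent to $-t(2-s_e-x_f)$), bounding the coefficient of $z$ in the exponent by $x_g+d_g=2-s_g\le 2$, and bounding $(1-\alpha s_f)(1-\alpha s_g)\ge(1-2\alpha)^2$ (valid since $s_f,s_g\le 2$ and $\alpha\le\tfrac12$), the conditional probability multiplies out and the triple integral factors as
\[
\Pr[G_{f,g}]\ \ge\ (1-\alpha s_e)(1-2\alpha)^2\,x_ex_fx_g\cdot\Phi(s_e,x_f),\qquad
\Phi(s,x):=\int_0^1\e^{-t(2-s-x)}\!\int_0^t\e^{-rx}\,\frac{1-\e^{-2r}}{2}\,dr\,dt.
\]
It then remains to sum over blockers and to bound $\Phi$. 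The blockers of a fixed $f$ all lie on $w_f$, and a one-line manipulation of the degree constraint of $\calP(G)$ at $v_f$ — together with the fact that at most one edge on $w_f$, the one closing a triangle with $e$ and $f$ (of weight at most $m_e$), can lie in $N_e$ — gives $\sum_{g}x_g\ge(1-m_e-x_f-s_f)^+$. And $\Phi(s,x)$ is nondecreasing in each argument over the feasible range (the integrand is, since $r\le t$), so $\Phi(s_e,x_f)\ge\Phi(0,0)=\tfrac{1-\e^{-4}}{16}-\tfrac{\e^{-2}}{4}\ge 0.0275$ by a direct evaluation. Plugging these in yields the claimed bound.

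I expect the main obstacle to be the two verification claims, i.e.\ getting the witness event exactly right. For (i), the blocker must avoid \emph{both} endpoints of $e$ (so that $g$ being matched does not itself prevent $e$ from being free), and each neighborhood must be cleared only up to its own arrival-time horizon, with the overlap $N_e\cap N_g$ handled by the stronger ``cleared until $t_e$''; one must also check that the five sub-events really live on disjoint edge sets so that conditional independence applies. For (ii), the disjointness rests on the structural observation that all blockers of a given $f$ compete for the single vertex $w_f$ — it is this, rather than any clever inequality, that upgrades the union bound into an exact sum and thereby produces the $\sum_f x_f\sum_g x_g$ form. The remaining pieces (the bounded one-variable integral $\Phi$ and the degree-constraint estimate) are routine.
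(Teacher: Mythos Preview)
Your proposal is correct and mirrors the paper's proof almost exactly: the paper builds the same witness (a blocker $g\ni w_f$ with $g\notin N_e$ that is realized, arrives before $f$, and has $R_0(g)$), establishes disjointness over $g$ via the shared vertex $w_f$ and over $f$ via $Q(e)=\{f\}$, and arrives at the same triple integral (its $\tfrac12\,z(x_f)$ is your $\Phi(0,x_f)$). The only organizational difference is that the paper threads the computation through a chain of conditional-probability lemmas (Observation~3.1 $\to$ Lemma~3.3 $\to$ Corollary~3.4 $\to$ Lemma~3.5), whereas you compute $\Pr[G_{f,g}]$ unconditionally in one shot; your monotonicity argument $\Phi(s_e,x_f)\ge\Phi(0,0)$ replaces the paper's direct evaluation in Fact~3.6.

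One small slip to fix: your blocker set $N_f\setminus N_e$ contains $e$ itself (since $e\in N_f$ but $e\notin N_e$), so the claim ``this forces $g\ni w_f$ and $g$ incident to neither endpoint of $e$'' fails for $g=e$. This is harmless for the inclusion and disjointness claims (the event $G_{f,e}$ is vacuous, requiring $t_e<t_f<t_e$), but your per-event lower bound $\Pr[G_{f,g}]\ge(\cdots)x_g\,\Phi$ is derived assuming $t_g$ is independent of $t_e$ and hence does not apply when $g=e$. Simply define blockers as $N_f\setminus(N_e\cup\{e\})$; the degree-constraint estimate at $v_f$ then still gives $\sum_g x_g\ge d_f-(1-x_f)-m_e=1-m_e-s_f\ge(1-m_e-x_f-s_f)^+$, which is even slightly stronger than what you need.
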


In \Cref{sec:five-var} we combine both lemmas \ref{lem:saturated} and \ref{lem:unsaturated} to obtain a five-variable program, using which we can prove the following bound on our algorithm's balancedness using $\alpha=0.171$.

\THMROOCSR*

\subsection{Lower Bounding $\Pr[e \text{ \normalfont matched} \wedge R_0(e)]$}\label{sec:a-0-lower}

Before proving \Cref{lem:saturated}, we state the following facts useful to our bounding. 
\begin{fact} \label{fact:concavity}
For any $a, x \in [0,1]$, we have $x (1 - \e^{-a}) \le 1 - \e^{-ax}$.
\end{fact}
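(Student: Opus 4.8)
This is an elementary one-variable convexity fact, so the plan is to fix $a\in[0,1]$ and treat everything as a function of $x$ alone. Define $f(x) := 1 - \e^{-ax}$ on $[0,1]$. The goal inequality reads $f(x) \ge x\bigl(1-\e^{-a}\bigr) = x\cdot f(1)$, and since $f(0)=0$ we can rewrite the right-hand side as $(1-x)f(0) + x f(1)$, i.e.\ the value at $x$ of the secant line of $f$ through the endpoints $0$ and $1$.

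The main step is to observe that $f$ is concave on $[0,1]$: indeed $f''(x) = -a^2\e^{-ax} \le 0$ for all $x$ (with equality only when $a=0$). A concave function on an interval lies on or above the chord joining the values at the endpoints, so for $x = (1-x)\cdot 0 + x\cdot 1$ we get $f(x) \ge (1-x)f(0) + x f(1) = x\bigl(1-\e^{-a}\bigr)$, which is exactly the claim. Equivalently, one can set $g(x) := 1 - \e^{-ax} - x(1-\e^{-a})$, note $g(0)=g(1)=0$ and $g''(x) = -a^2\e^{-ax}\le 0$, and conclude that the concave function $g$ is nonnegative on $[0,1]$ because it is nonnegative at both endpoints.

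There is essentially no obstacle here; the only things to mention are the trivial boundary cases ($a=0$, or $x\in\{0,1\}$, all of which give equality) and the fact that the argument uses only $a,x\in[0,1]$ to keep the relevant exponents and the point $x$ inside the interval where concavity is being applied. I would write this as two or three lines.
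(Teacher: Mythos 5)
Your proof is correct and is essentially the same as the paper's: both arguments rest on the concavity of $x \mapsto 1-\e^{-ax}$ and the fact that a concave function lies above the chord through its endpoint values (the paper phrases this via the normalized function $k_a(x)=\frac{1-\e^{-ax}}{1-\e^{-a}}$, while you work with the unnormalized function and also spell out the second-derivative check and the degenerate case $a=0$).
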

\begin{proof}
Note the function $k_a(x):=\frac{1-\e^{-ax}}{1-\e^{-a}}$ for $a\in [0,1]$ is concave; as $k_a(0)=0$ and $k_a(1)=1$ we have that $k_a(x)\geq x$ for all $x\in [0,1]$.
\end{proof}

\begin{proof}[Proof of \Cref{lem:saturated}]

As in the proof of \Cref{lem:analysisOfAlg2}, we condition on a specific arrival time $t_e$ and explicitly compute the probability of the event that no realized edges arrived before $e$ (the only change is the updated attenuation function). 
\begin{align*}
         \Pr[R_0(e) \mid t_e = y] &= \prod_{f\in N_e} \left( 1 - \int_0^{y} x_f \cdot  \e^{- x_f z} \cdot (1 - \alpha \cdot s_f) \, dz \right) \\
         &= \prod_{f\in N_e} \left( 1 - (1 - \alpha \cdot s_f)\cdot(1 - \e^{-x_f y})  \right) \\
        &\ge \prod_{f\in N_e} \left( 1 - (1 - \e^{-x_f y (1 - \alpha \cdot s_f)} ) \right) \\
        & = \e^{ -y( d_e - \alpha \cdot \sum_{f \in N_e} x_f s_f)},
\end{align*}
where the above inequality follows from \Cref{fact:concavity} applied to $a=(1-\alpha\cdot s_f)\in [0,1]$ and $x=x_f\in [0,1]$.

Integrating over the arrival time of $e$ yields the claimed bound: 
    \begin{align}
        \nonumber \Pr[e \in \mu \wedge R_0(e) ]  &= \Pr[R_0(e) \wedge e \text{ realized}]
        \\
        \nonumber &\ge \left( \int_{0}^1 \e^{ -y( d_e - \alpha\cdot \sum_{f \in N_e} x_f s_f)}  \cdot \e^{- x_e y} \cdot (1 - \alpha \cdot s_e)  \, dy \right) \cdot x_e & \textrm{def. } a_2(\cdot,\cdot,\cdot) \\
        \nonumber &=  (1 - \alpha \cdot s_e)  \cdot h \left(d_e + x_e - \alpha \sum_{f \in N_e} x_f s_f \right) \cdot x_e & \text{def. $h(\cdot)$}\\ 
        \nonumber &=  (1 - \alpha \cdot s_e)  \cdot h \left(2 - s_e - \alpha \sum_{f \in N_e} x_f s_f \right) \cdot x_e. & \text{def. $d_e$} \\
        & \geq (1 - \alpha \cdot s_e)  \cdot \left(h(2) +0.14\left(s_e + \alpha \sum_{f \in N_e} x_f s_f \right)\right), \nonumber
        \end{align}
     where the last inequality follows from a simple linear approximation of the convex function $h$, yielding $h(2-x)\geq h(2)+0.14x$ for all $x\in \mathbb{R}$.
\end{proof}

\Cref{lem:saturated} provides improved bounds on the probability of $e$ being matched whenever $s_e$ or $\sum_{f \in N_e} x_f s_f$ is large.
In the next section, we prove improved lower bounds on the probability of $e$ being matched in the complementary scenario, where $s_e$ and $\sum_{f\in N_e} x_f s_f$ are both small.

\subsection{Lower Bounding $\Pr[e \text{ \normalfont matched} \wedge R_1(e)]$} \label{sec: a1-lower-bound}

In this section, we prove \Cref{lem:unsaturated}, providing a lower bound for the probability that edge $e=(v_1v_2)$ is matched despite having a single incident realized edge $f=(v_2v_3)$ arriving before time $t_e$. 
We will prove this lemma by lower bounding the probability of $R_1(e)$ occurring, with the single realized edge $f \in R_1(e)$ not being matched due to some realized edge $g=(v_3v_4)\in N_f \setminus N_e$ arriving at time $t_g\leq t_f$ having no incident realized edges arriving earlier (i.e., event $R_0(g)$), resulting in $g$ being matched before $f$ arrives. See \Cref{fig:witness}.

\begin{figure}[h] 
    \centering
    \includegraphics[width=0.6\textwidth]{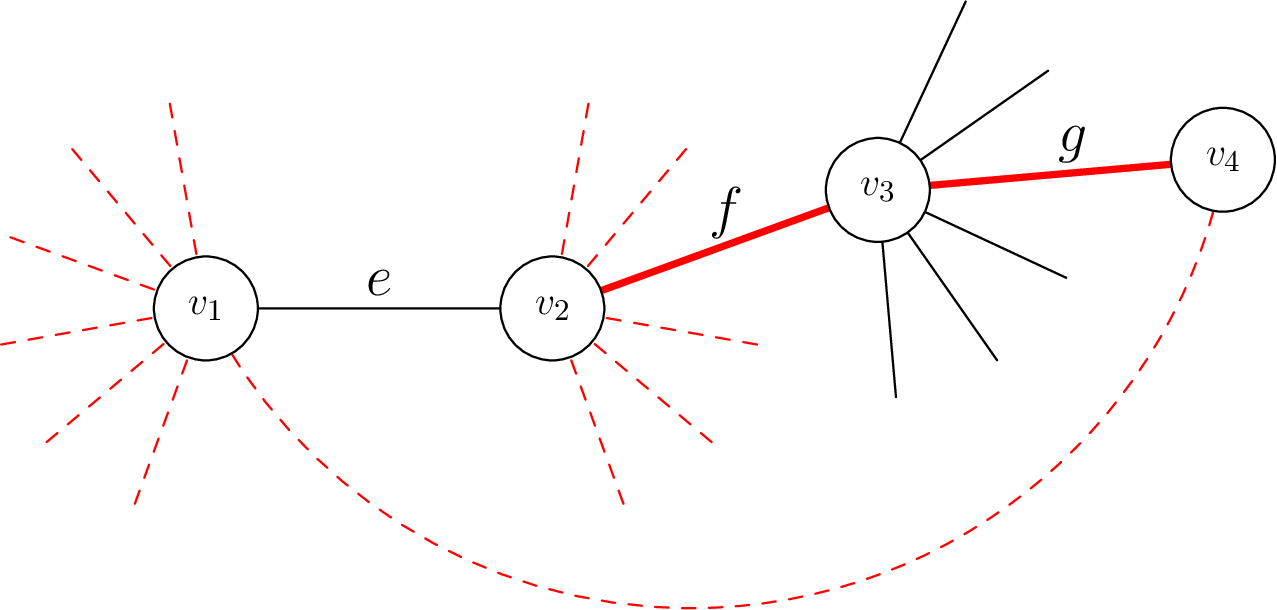}
    \caption{Our witness for $\mathds{1}[e\in \mu \wedge R_1(e)]$.}
    \subcaption*{Dashed edges are edges not realized before time $t_e$, while bold edges are realized before time $t_e$.}
    \vspace{0.2cm}
    \label{fig:witness}
\end{figure}

Fix any edge $g$ incident on $v_3$ and not incident on $e$, as in \Cref{fig:witness}. First, we note that conditioning on $Q(e) = \{f\}$ and specific arrival times (formalized below) does not decrease the probability of event $R_0(g)$, namely that zero edges incident on $g$ both arrive before $f$ and are realized.

\begin{restatable}{obs}{removeConditioning}
    \label{obs:removeConditioning}
    For any $0 \le c \le b \le a \le 1$, we have $$\Pr[ R_0(g) \mid Q(e) = \{ f \}, t_e = a, t_f = b, t_g = c ] \ge \Pr[ R_0(g) \mid t_g = c]\geq \e^{- (2 - 2x_g) y}.$$
\end{restatable}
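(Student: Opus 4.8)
\textbf{Proof plan for \Cref{obs:removeConditioning}.}

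The plan is to reduce the conditional statement to an application of the FKG-type monotonicity that is already implicit in the analysis of \Cref{lem:saturated}, together with an explicit computation in the unconditioned case. First I would unpack what the events mean in terms of the underlying independent primitives. The relevant randomness consists of: for each edge $h$, an arrival time $t_h \sim \text{Unif}[0,1]$, an activity indicator, and an independent $\text{Ber}(a(h))$ coin (so that ``$h$ realized'' is an independent event of probability $a(h)\cdot x_h$ after conditioning on $t_h$). The event $R_0(g)$ says that no edge $h \in N_g$ is both realized and has $t_h < t_g$. The event $Q(e)=\{f\}$ says exactly one edge of $N_e$ is realized-and-early (relative to $t_e$), and that edge is $f$. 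The key structural observation is that $N_g$ and $N_e$ overlap only in edges incident on $v_3$ (and possibly in $f$ itself, but $f$ is already pinned down). So conditioning on $Q(e)=\{f\}$ imposes, for edges $h\in N_e\setminus(N_g\cup\{f\})$, constraints that involve \emph{disjoint} randomness from that governing $R_0(g)$, hence those constraints are irrelevant; and for edges $h \in N_e \cap N_g$ (other than $f$), the constraint ``$h$ is \emph{not} realized-and-early relative to $t_e$'' can only \emph{help} $R_0(g)$, since being realized-and-early relative to $t_g \le t_e$ is a sub-event.

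Second, I would make this ``can only help'' precise. Condition on all arrival times $t_e=a$, $t_f=b$, $t_g=c$ with $c\le b\le a$. After this conditioning, for each edge $h$ the event ``$h$ realized'' (given its arrival time, which we may further integrate over, or which is already fixed for $e,f,g$) is independent across edges. Write $R_0(g)$ as a conjunction over $h\in N_g$ of independent events $E_h$ = ``$h$ is not realized, or $t_h \ge c$''. Write the conditioning event $\{Q(e)=\{f\}\}$ as: $f$ is realized with $t_f < a$ (automatic since $b<a$ and we will have conditioned $f$ realized — actually $f$ realized is part of the definition of $Q(e)=\{f\}$, and it is consistent with $R_1(e)$; note $f\notin N_g$ by the standing assumption $g\in N_f\setminus N_e$, wait — we need $f \in N_g$ since $g$ is incident on $v_3$ and $f=(v_2v_3)$, so $f\in N_g$), together with, for each $h\in N_e\setminus\{f\}$, the event $E'_h$ = ``$h$ is not realized, or $t_h\ge a$''. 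For $h\notin N_g$, $E'_h$ involves randomness disjoint from $\{E_{h'}\}_{h'\in N_g}$ and from the pinned edges, so it drops out. For $h\in N_g\cap N_e$, $h\ne f$, we have $E'_h \subseteq E_h$ (since $c\le a$ means $\{t_h\ge a\}\subseteq\{t_h\ge c\}$), so conditioning on $E'_h$ rather than leaving $h$ unconstrained only increases $\Pr[E_h]$ — formally $\Pr[E_h\mid E'_h]=1\ge \Pr[E_h]$ wait that is not quite it since $E'_h$ also constrains realization; cleanly, $\Pr[E_h \wedge E'_h] = \Pr[E'_h]$ because $E'_h\subseteq E_h$, so $\Pr[E_h\mid E'_h]=1$. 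Hence conditioning on the full event $\{Q(e)=\{f\}\}$ only raises the probability of each surviving factor $E_h$, and by independence raises the product, giving
$$\Pr[R_0(g)\mid Q(e)=\{f\},\, t_e=a,\, t_f=b,\, t_g=c] \ge \prod_{h\in N_g,\, h\ne f} \Pr[E_h \mid t_g=c] \ge \prod_{h\in N_g} \Pr[E_h\mid t_g=c] = \Pr[R_0(g)\mid t_g=c].$$
(The middle inequality only drops the $h=f$ factor, which is $\le 1$, weakening the bound in the correct direction.)

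Third, I would evaluate the unconditioned quantity exactly as in \Cref{ineq:noblockingedge} / the proof of \Cref{lem:analysisOfAlg2}: for each $h\in N_g$, $\Pr[\text{$h$ realized and }t_h<c\mid t_g=c] = \int_0^c x_h \e^{-x_h z}(1-\alpha s_h)\,dz = (1-\alpha s_h)(1-\e^{-x_h c}) \le 1-\e^{-x_h c}$, so $\Pr[E_h\mid t_g=c]\ge \e^{-x_h c}$, and therefore
$$\Pr[R_0(g)\mid t_g=c] \ge \prod_{h\in N_g}\e^{-x_h c} = \e^{-c\sum_{h\in N_g}x_h} \ge \e^{-(2-2x_g)c},$$
using $\sum_{h\in N_g} x_h = d_g \le 2-2x_g$ from $\vec{x}\in\calP(G)$. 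Substituting $c=y$ gives the stated chain. The main obstacle is bookkeeping rather than mathematics: one must be careful that $f$ is indeed in $N_g$ (so it is one of the dropped factors, not a source of trouble) and that the only edges in $N_e\cap N_g$ are the ones incident on $v_3$, so that all the ``extra'' conditioning from $Q(e)=\{f\}$ either is on disjoint randomness or is a strictly stronger constraint than what $R_0(g)$ needs; once that is laid out, the monotonicity argument and the integral are routine.
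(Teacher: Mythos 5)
Your proposal is correct and follows essentially the same route as the paper's proof: both arguments factor the conditional probability over the edges of $N_g$, observe that the conditioning $Q(e)=\{f\}$ either concerns disjoint randomness or only forces the relevant per-edge events to hold (so the conditional probability is the product over $N_g\setminus N_e$, which dominates the unconditioned product over all of $N_g$), and then invoke the computation of \Cref{ineq:noblockingedge} together with $a_2\le a_1$ and the matching constraints. Your factor-by-factor monotonicity bookkeeping is just a more explicit rendering of the paper's one-line independence argument.
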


The above observation removes the conditioning on events for edges other than $g$; it is useful when lower bounding the conditional probability that $g$ is matched and arrives before $f$.

\begin{lem}\label{lem:gmatched} For any $0 \le b \le a \le 1$ we have
     $$\Pr[ g \in \mu, t_g \le t_f \mid Q(e) = \{ f \}, t_e = a, t_f = b ] \ge x_g \cdot \frac{1}{2} (1 - 2\alpha) \cdot (1 - \e^{-2b}).$$
\end{lem}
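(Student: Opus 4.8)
The plan is to bound $\Pr[g \in \mu,\, t_g \le t_f \mid Q(e)=\{f\}, t_e=a, t_f=b]$ from below by integrating over the arrival time $t_g = c \in [0,b]$ of $g$, and, for each such $c$, lower bounding the probability of the joint event that $g$ is realized (i.e., active and passes its attenuation test), that $R_0(g)$ holds (so $g$ is free upon its arrival and hence gets matched), and that $c \le b$. Conditioned on $t_g=c$, the realization of $g$ and the event $R_0(g)$ are independent (realization depends only on $g$'s own Bernoulli variables; $R_0(g)$ depends on the other edges incident on $g$), so the conditional probability factors.

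Concretely, I would write
\begin{align*}
\Pr[ g \in \mu, t_g \le t_f \mid Q(e) = \{ f \}, t_e = a, t_f = b ] &\ge \int_0^b \Pr[g \text{ realized} \mid t_g = c] \cdot \Pr[R_0(g) \mid Q(e)=\{f\}, t_e=a, t_f=b, t_g=c] \, dc.
\end{align*}
For the first factor, $g$ is realized with probability $x_g \cdot a_2(c, x_g, s_g) = x_g \cdot \e^{-c x_g}(1-\alpha s_g) \ge x_g \e^{-c x_g}(1 - \alpha \cdot 2) = x_g \e^{-c x_g}(1-2\alpha)$, using $s_g \le 2$. For the second factor, \Cref{obs:removeConditioning} gives $\Pr[R_0(g) \mid \cdots] \ge \e^{-(2-2x_g)c}$. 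Multiplying and pulling out $x_g(1-2\alpha)$:
\begin{align*}
\Pr[ g \in \mu, t_g \le t_f \mid \cdots ] &\ge x_g (1-2\alpha) \int_0^b \e^{-c x_g} \e^{-(2-2x_g)c}\, dc = x_g(1-2\alpha) \int_0^b \e^{-(2-x_g)c}\, dc.
\end{align*}
Then $\int_0^b \e^{-(2-x_g)c}\, dc = \frac{1 - \e^{-(2-x_g)b}}{2-x_g} \ge \frac{1-\e^{-2b}}{2}$, where the last inequality holds because the map $\lambda \mapsto \frac{1 - \e^{-\lambda b}}{\lambda}$ is decreasing in $\lambda > 0$ (it equals $b \cdot h(\lambda b)$ with $h$ decreasing) and $2 - x_g \le 2$. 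Combining yields the claimed bound $x_g \cdot \frac{1}{2}(1-2\alpha)(1-\e^{-2b})$.

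The only real subtlety — the step worth being careful about — is the independence/factorization claim conditioned on $t_g = c$ together with the prior conditioning on $Q(e) = \{f\}, t_e=a, t_f=b$: one must check that once we use \Cref{obs:removeConditioning} to discard the conditioning inside the $R_0(g)$ factor, the remaining conditioning interacts only with $g$'s own realization and the constraint $t_g \le t_f = b$, which it does since $g \notin N_e$ and $g \ne f$ (so $g$'s own activity and attenuation coin are independent of the conditioning event). Everything else is the routine integral estimate above, using $s_g \le 2$ and the monotonicity of $h$.
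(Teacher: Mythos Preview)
Your proposal is correct and follows essentially the same route as the paper: condition on $t_g=c$, use that realization of $g$ and $R_0(g)$ are independent given the full conditioning, invoke \Cref{obs:removeConditioning} for the $R_0(g)$ factor, bound $1-\alpha s_g \ge 1-2\alpha$, and integrate over $c\in[0,b]$. The only cosmetic difference is that the paper bounds the integrand pointwise via $\e^{-(2-x_g)c}\ge \e^{-2c}$ before integrating, whereas you compute $\int_0^b \e^{-(2-x_g)c}\,dc$ explicitly and then use monotonicity of $h$; both yield the same bound.
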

\begin{proof}
Edge $g$ is matched if it is realized and $R_0(g)$ occurs. On the other hand, conditioned on $Q(e)=\{f\}, t_e=a,\, t_f=b, \, t_g=c$, the event that $g$ is realized is independent of $R_0(g)$, and has probability $x_g\cdot \e^{-x_g c}\cdot (1-\alpha\cdot s_g)$. Combined with \Cref{obs:removeConditioning}, this yields the following.
\begin{align*}
\Pr[ g \in \mu \mid Q(e) = \{ f \}, t_e = a, t_f = b, t_g = c ] & \ge x_g\cdot \e^{-x_g c} \cdot (1 - \alpha \cdot s_g) \cdot  \e^{-(2 - 2x_g) c}.
\end{align*}

Therefore, by taking total probability over $g$'s arrival time $c\leq b$, we see that
\begin{align*}
    \Pr[ g \in \mu \wedge t_g \le t_f \mid Q(e) = \{ f \}, t_e = a, t_f = b ] 
    &\geq x_g \cdot (1 - \alpha \cdot s_g) \cdot \int_0^b \e^{-(2 - 2x_g)c} \e^{-x_g c} \, dc \\
    &\ge x_g \cdot (1 - 2\alpha) \cdot \int_0^b \e^{-2c} \, dc \\
    &= x_g \cdot \frac{1}{2}(1 - 2\alpha) \cdot (1 - \e^{-2b}), 
\end{align*}
where the second inequality relied on $s_g\leq 2$ and $x_g\geq 0$ implying $2-x_g\le 2$.
\end{proof}

Next, fix an edge $f\in N_e$, and let $B_f$ denote the event that some edge $g \in N_f \setminus N_e$ with $t_g \le t_f$ is matched (hence \emph{blocking} edge $f$ from being matched).
Note that the single edge $g\in N_f\cap N_e$ (if any) must have $x_g\leq m_e$, by definition of $m_e = \max \{ x_f \mid f \in N_e, \text{ there exists a triangle containing } e, f \}.$
As the events $\{ g \in \mu\}$ are disjoint for distinct $g \in N_f \setminus N_e$, summing the individual bounds from \Cref{lem:gmatched} implies the following corollary. 

\begin{restatable}{cor}{PrBf}\label{cor:PrBf}
For any $0\le b \le a \le 1$ we have $$\Pr[ B_f \mid Q(e) = \{ f \}, t_e = a, t_f = b ] 
\ge (d_f - m_e - 1)^+ \cdot \frac{1}{2}(1 - 2\alpha) \cdot (1 - \e^{-2b}).$$
\end{restatable}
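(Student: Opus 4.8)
The goal is to lower-bound $\Pr[B_f \mid Q(e) = \{f\}, t_e = a, t_f = b]$, where $B_f$ is the event that some edge $g \in N_f \setminus N_e$ with $t_g \le t_f$ is matched. The plan is to observe that the events $\{g \in \mu, t_g \le t_f\}$ for distinct $g \in N_f \setminus N_e$ are pairwise disjoint: at most one edge incident on the vertex $v_3$ (shared by $f$ and all such $g$) can be in the matching $\mu$. Hence $B_f = \bigcup_{g \in N_f \setminus N_e} \{g \in \mu, t_g \le t_f\}$ is a disjoint union, and
\[
\Pr[B_f \mid Q(e) = \{f\}, t_e = a, t_f = b] = \sum_{g \in N_f \setminus N_e} \Pr[g \in \mu, t_g \le t_f \mid Q(e) = \{f\}, t_e = a, t_f = b].
\]
First I would apply \Cref{lem:gmatched} to each term on the right, giving a lower bound of $x_g \cdot \tfrac{1}{2}(1 - 2\alpha)(1 - \e^{-2b})$ per edge $g$. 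Summing over $g \in N_f \setminus N_e$ yields a total of $\left(\sum_{g \in N_f \setminus N_e} x_g\right) \cdot \tfrac{1}{2}(1-2\alpha)(1 - \e^{-2b})$.

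The remaining step is to bound $\sum_{g \in N_f \setminus N_e} x_g$ from below by $(d_f - m_e - 1)^+$. Here $d_f = \sum_{g \in N_f} x_g$ by definition. The neighborhood $N_f$ of $f = (v_2 v_3)$ decomposes as: (i) edges incident on $v_2$ other than $f$ — including $e$ itself — and (ii) edges incident on $v_3$ other than $f$. Edges in group (i) incident on $v_2$ (other than possibly the one edge forming a triangle with $e$) lie in $N_e$; the edge $e$ lies in $N_e$ trivially; and the edges incident on $v_2$ sum (together with $x_f$) to at most $1$ since $\vec{x} \in \calP(G)$. The edges incident on $v_3$ other than $f$ are exactly $N_f$ minus the $v_2$-side edges, and all of them lie in $N_f \setminus N_e$ \emph{except} possibly a single edge $g^\star$ that together with $e$ and $f$ forms a triangle (i.e., $g^\star$ is incident on both $v_3$ and $v_1$); by definition of $m_e$, such a $g^\star$ satisfies $x_{g^\star} \le m_e$. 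Thus
\[
\sum_{g \in N_f \setminus N_e} x_g \;\ge\; \Big(\sum_{g \ni v_3, g \ne f} x_g\Big) - m_e \;\ge\; \big(d_f - 1\big) - m_e,
\]
where the last inequality uses that the $v_2$-incident edges of $N_f$ (together with $x_f$) contribute at most $1$, so the $v_3$-incident edges of $N_f$ sum to at least $d_f - 1$. Since probabilities are nonnegative, we may replace $d_f - m_e - 1$ by its positive part $(d_f - m_e - 1)^+$, completing the argument.

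The only subtle point — and the one I would be most careful about — is the bookkeeping of which edges of $N_f$ belong to $N_e$ versus $N_f \setminus N_e$, and in particular correctly accounting for triangle edges. An edge $g \ni v_3$ lies in $N_e$ precisely when it is also incident on $v_1$ or $v_2$; incidence on $v_2$ is impossible for $g \ne f$ in a simple graph (that would be a multi-edge), so the only overlap is the at-most-one triangle edge $g^\star \ni v_1$, which is exactly what $m_e$ is designed to absorb. Everything else is a direct combination of \Cref{lem:gmatched}, disjointness, and the polytope constraint $\sum_{e \ni v} x_e \le 1$; no further calculation is needed.
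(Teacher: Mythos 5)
Your proposal is correct and follows essentially the same route as the paper's proof: decompose $B_f$ into the disjoint events $\{g \in \mu,\, t_g \le t_f\}$ for $g \in N_f \setminus N_e$ (all of which are incident on $v_3$), apply \Cref{lem:gmatched} termwise, and lower bound $\sum_{g \in N_f \setminus N_e} x_g \ge d_f - 1 - m_e$ via the matching constraint at $v_2$ and the definition of $m_e$. The bookkeeping of the triangle edge that you flag as the subtle point is exactly the step the paper also invokes, so nothing is missing.
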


For our analysis, we are concerned with the event that $R_1(e)$ occurs, and that the unique realized edge $f$ is blocked. We denote this event by $U_e := \bigcup_{f\in N_e}[Q(e)=\{f\} \wedge B_f]$. The following lemma, whose proof is deferred to \Cref{app:omittedproofs3}, follows directly (after some calculation) from \Cref{cor:PrBf}.

\begin{restatable}{lem}{blockingfff}\label{lem: blocking-f}
For any $a \in [0,1]$, we have that $$ Pr[ U_e \mid t_e = a] \ge \frac{1}{2}  (1 - 2\alpha )^2  \sum_{f \in N_e} x_f (d_f - m_e - 1)^+ \cdot \e^{-a d_e + a x_f} \cdot \left( \frac{1 - \e^{-x_f a}}{x_f} - \frac{1 - \e^{-a(x_f+2)}}{x_f+2} \right).$$
\end{restatable}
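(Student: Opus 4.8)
The plan is to integrate the bound from \Cref{cor:PrBf} over the arrival time $t_f$ of the blocking edge, after conditioning on the event $Q(e) = \{f\}$ and on $t_e = a$. First I would decompose the event $U_e$ along the (disjoint) choices of which incident edge $f$ is the unique realized edge in $Q(e)$, writing
$$\Pr[U_e \mid t_e = a] = \sum_{f \in N_e} \Pr\big[Q(e) = \{f\} \wedge B_f \mid t_e = a\big].$$
For a fixed $f$, I would further condition on $t_f = b \le a$ and split into three independent pieces: (i) edge $f$ is realized and arrives at time $b$, contributing density $x_f \e^{-x_f b}(1 - \alpha s_f)$ (bounded below using $1 - \alpha s_f \ge 1 - 2\alpha$); (ii) no \emph{other} edge $f' \in N_e \setminus \{f\}$ is realized before $t_e = a$, which by the same computation as in \eqref{ineq:noblockingedge} / the proof of \Cref{lem:saturated} contributes at least $\e^{-(d_e - x_f) a}$ (here one uses that dropping $f$ from the product only removes a factor, and that $\sum_{f' \in N_e \setminus\{f\}} x_{f'} \le d_e - x_f$); and (iii) the blocking event $B_f$, which \Cref{cor:PrBf} lower bounds by $(d_f - m_e - 1)^+ \cdot \tfrac{1}{2}(1-2\alpha)(1 - \e^{-2b})$. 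The key point making this product valid is that these three events involve disjoint sets of edges ($f$ itself; the edges in $N_e \setminus \{f\}$; the edges in $N_f \setminus N_e$) and are therefore conditionally independent given the arrival times $t_e = a$, $t_f = b$.

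Multiplying the three contributions and integrating over $b \in [0, a]$ gives
$$\Pr[U_e \mid t_e = a] \ge \frac{1}{2}(1-2\alpha)^2 \sum_{f \in N_e} x_f (d_f - m_e - 1)^+ \int_0^a \e^{-x_f b} \, \e^{-(d_e - x_f) a} \,(1 - \e^{-2b})\, db,$$
and I would pull the $b$-independent factor $\e^{-(d_e - x_f)a}$ out. Wait — to match the stated form $\e^{-a d_e + a x_f}$ one keeps exactly this factor; the remaining integral $\int_0^a \e^{-x_f b}(1 - \e^{-2 b})\,db$ evaluates to $\frac{1 - \e^{-x_f a}}{x_f} - \frac{1 - \e^{-(x_f + 2) a}}{x_f + 2}$, which is precisely the bracketed term in the lemma. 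This is the routine calculation I would not grind through in detail; it is just two elementary exponential integrals.

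The main obstacle — and the step deserving genuine care — is justifying the conditional independence and the lower bound in piece (ii). One has to be careful that conditioning on "$Q(e) = \{f\}$" is not the same as conditioning only on "$f$ realized and $t_f = b$": the former also asserts that \emph{no} other edge of $N_e$ is realized before $t_e$, so piece (ii) is not an extra assumption but part of the event being bounded, and the factorization must respect that. The cleanest way is to write $\Pr[Q(e) = \{f\} \wedge B_f \wedge f\text{ realized at }t_f=b \mid t_e = a]$ as a genuine product over the three disjoint edge sets (using that realizations and arrival times of distinct edges are mutually independent), then bound each factor, then integrate over $b$. One should also double-check that the single possible edge $g \in N_f \cap N_e$ was already correctly excised in \Cref{cor:PrBf} via the $m_e$ term, so that no edge is double-counted between pieces (ii) and (iii); since $B_f$ only ranges over $g \in N_f \setminus N_e$, this is consistent. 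Finally, monotonicity of $(1 - \e^{-x_f b})/x_f$ and related quantities should be noted where needed to keep all the lower bounds in the correct direction, but no deeper idea is required beyond the careful bookkeeping of which independence is being invoked.
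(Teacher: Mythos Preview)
Your proposal is correct and follows essentially the same route as the paper: decompose $U_e$ over the disjoint events $Q(e)=\{f\}$, factor $\Pr[Q(e)=\{f\},\,t_f\in db\mid t_e=a]$ as $x_f\,a_2(b,x_f,s_f)\cdot \Pr[Q(e)\subseteq\{f\}\mid t_e=a]\ge x_f\,\e^{-x_f b}(1-2\alpha)\cdot \e^{-a(d_e-x_f)}$, multiply by the conditional bound on $B_f$ from \Cref{cor:PrBf}, and integrate over $b$. One small wording point: your ``three disjoint edge sets'' justification is slightly imprecise, since $B_f$ (via whether some $g$ is matched) can depend on edges in $N_g$ that overlap $N_e$; what actually makes the factorization valid is that you are invoking \Cref{cor:PrBf}, which already bounds $\Pr[B_f\mid Q(e)=\{f\},t_e=a,t_f=b]$ with the conditioning built in, and the remaining factor $\Pr[Q(e)=\{f\},t_f\in db\mid t_e=a]$ genuinely splits over $\{f\}$ and $N_e\setminus\{f\}$.
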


We are now ready to prove our main lemma. 
\begin{proof}[Proof of \Cref{lem:unsaturated}]
Since $e$ is matched and $R_1(e)$ hold if $e$ is active and $U_e$ occurs (the latter two events being independent conditioned on the arrival time of $e$), we have

\begin{align*}
\Pr[e \in \mu \wedge R_1(e)] &\ge x_e \cdot \int_0^1 a_2(a, x_e, s_e) \cdot \Pr[U_e \mid t_e = a]  \, da.
\end{align*}
Let $z(x):=\int_0^1 \e^{-2a + ax_f}  \left( \frac{1 - \e^{-x a}}{x} - \frac{1 - \e^{-a(x+2)}}{x+2} \right) \, da$. Substituting in our bound from \Cref{lem: blocking-f} on $\Pr[U_e \mid t_e = a]$, and noting that $d_e + x_e \le 2$, we find, after some rearranging, that 
\begin{align*}
\Pr[e \in \mu \wedge R_1(e)]\geq  x_e \cdot \frac{1}{2}(1 - 2\alpha )^2 (1 - \alpha \cdot s_e)  \sum_{f \in N_e} x_f (d_f - m_e - 1)^+ \cdot z(x_f).
\end{align*}

We simplify this bound via the following numerical fact, whose proof is deferred to \Cref{app:omittedproofs3}.

\begin{restatable}{fact}{factBoundingIntofA}\label{fact:boundingIntofA}
    For any $x \in [0,1]$, we have that $x\cdot z(x) \ge 0.055x.$
\end{restatable}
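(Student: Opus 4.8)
The plan is to reduce \Cref{fact:boundingIntofA} to a one-variable inequality on $[0,1]$ and settle it by a short monotonicity argument. Interpreting the integrand of $z$ with exponent $-2a+ax=-(2-x)a$, the first step is to obtain a closed form for $z(x)$ by swapping the order of integration. Writing $\frac{1-\e^{-xa}}{x}-\frac{1-\e^{-(x+2)a}}{x+2}=\int_0^a \e^{-xt}(1-\e^{-2t})\,dt$, we have $z(x)=\int_0^1 \e^{-(2-x)a}\int_0^a \e^{-xt}(1-\e^{-2t})\,dt\,da$; the integrand is nonnegative on $\{0\le t\le a\le 1\}$, so by Fubini we may integrate $a$ over $[t,1]$ first. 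Using $\e^{-xt}\e^{-(2-x)t}=\e^{-2t}$, the identity $\int_0^1(\e^{-2t}-\e^{-4t})\,dt=\frac{(1-\e^{-2})^2}{4}=h(2)^2$, and the representation $h(y)=\int_0^1 \e^{-yz}\,dz$ on the remaining terms, this yields for $x\in(0,1]$
\[
z(x)=\frac{1}{2-x}\Bigl(h(2)^2-\e^{x-2}\bigl(h(x)-h(x+2)\bigr)\Bigr).
\]

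Since $2-x>0$ on $[0,1]$ and the claim is trivial at $x=0$, it remains to show $\phi(x):=h(2)^2-\e^{x-2}(h(x)-h(x+2))-\tfrac{11}{200}(2-x)\ge 0$ on $[0,1]$ (using $0.055=\tfrac{11}{200}$). I will establish this from $\phi(0)>0$ together with $\phi'\ge 0$ on $[0,1]$. For the endpoint, $h(0)=1$ and $h(2)=\tfrac{1-\e^{-2}}{2}$ give $\phi(0)=\tfrac{1-4\e^{-2}-\e^{-4}}{4}-\tfrac{11}{100}$; this is positive, but with very small slack (on the order of $10^{-4}$), so it must be verified with honest bounds on $\e$ rather than loose decimal rounding.

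For the monotonicity, $\phi'(x)=\tfrac{11}{200}-\tfrac{d}{dx}\bigl[\e^{x-2}(h(x)-h(x+2))\bigr]$, and with $h(y)+h'(y)=\int_0^1(1-z)\e^{-yz}\,dz$ one computes $\tfrac{d}{dx}\bigl[\e^{x-2}(h(x)-h(x+2))\bigr]=\int_0^1(1-z)\e^{x-2-xz}(1-\e^{-2z})\,dz$. The crux is the elementary fact that for $x,z\in[0,1]$ we have $x-2-xz=(x-1)(1-z)-1-z\le -1-z$, so $\e^{x-2-xz}\le \e^{-1-z}$; this bounds the derivative above by the explicit constant $\int_0^1(1-z)\e^{-1-z}(1-\e^{-2z})\,dz=\e^{-2}-\tfrac{2\e^{-1}+\e^{-4}}{9}$, which is strictly smaller than $\tfrac{11}{200}$ (again a routine $\e$-estimate, but with ample margin this time). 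Hence $\phi'\ge 0$ on $[0,1]$, so $\phi\ge\phi(0)>0$ there, which gives $z(x)\ge\tfrac{11}{200}$ on $(0,1]$ and therefore $x\cdot z(x)\ge 0.055\,x$ on all of $[0,1]$.

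The analytic content is minimal: Fubini, the integral representations of $h$ and of $h+h'$, and the sign of $(x-1)(1-z)$ on the unit square. I expect the real difficulty to be bookkeeping rather than mathematics — because $\phi(0)$ is only barely positive, the closed form must be derived exactly and the endpoint comparison pushed through with genuine inequalities (e.g.\ starting from $\e>2.71828$) instead of floating-point estimates; the derivative bound, by contrast, leaves plenty of room. One could instead prove $\phi'\ge 0$ first and then read off $z(0)=\tfrac{1-4\e^{-2}-\e^{-4}}{8}$ directly from the double integral, but this is the same tight comparison in disguise, so one such delicate numerical check appears genuinely unavoidable.
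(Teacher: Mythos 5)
Your proposal is correct, and it takes a genuinely different route from the paper. The paper evaluates $z(x)$ in closed form as a rational--exponential expression, splits it into three summands, and bounds each separately (asserting, without proof, that one of them is decreasing in $x$ and taking its limit as $x\to 0$). You instead apply Fubini to obtain the cleaner identity $z(x)=\frac{1}{2-x}\bigl(h(2)^2-\e^{x-2}(h(x)-h(x+2))\bigr)$, and then prove $\phi(x):=(2-x)\bigl(z(x)-\tfrac{11}{200}\bigr)\ge 0$ by checking the endpoint $\phi(0)>0$ and establishing $\phi'\ge 0$ via the explicit bound $\e^{x-2-xz}\le\e^{-1-z}$, which caps the relevant derivative at $\e^{-2}-\tfrac{2\e^{-1}+\e^{-4}}{9}\approx 0.0516<0.055$. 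I verified your closed form, the derivative identity via $h(y)+h'(y)=\int_0^1(1-z)\e^{-yz}\,dz$, and both numerical comparisons; all are correct. What your approach buys is a rigorous, self-contained monotonicity argument where the paper merely ``observes'' monotonicity of one term; what it costs is nothing, since both routes bottom out in the same razor-thin endpoint comparison ($z(0^+)=\tfrac{1-4\e^{-2}-\e^{-4}}{8}\approx 0.055043$ versus $0.055$), which you correctly identify as unavoidable and which the paper also leaves to numerics. Deferring that one tight check to honest rational bounds on $\e$ is not a gap relative to the paper's own standard of rigor.
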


Using this fact and substituting in $s_f = 2 - d_f - x_f$, we obtain the claimed bound.
\begin{align*}
    \Pr[e \in \mu \wedge R_1(e)] &\ge \left( (1 - \alpha \cdot s_e)  (1-  2\alpha)^2  \cdot \sum_{f \in N_e}  x_f (1 - m_e - x_f - s_f)^+ \cdot 0.0275  \right) \cdot x_e. \qedhere
\end{align*}
\end{proof}

\subsection{Reducing to a 5-variable problem}\label{sec:five-var}

We improve upon the bound of $h(2)=\frac{1}{2}(1-\e^{-2})$ by combining \Cref{lem:saturated} and \Cref{lem:unsaturated}. For ease of reference, we restate these two lemmas here.

\saturatedbound*
\unsaturatedbound*

\begin{proof}[Proof idea of \Cref{thm:ro-ocrs}]
The following is an easy lower bound on the probability of $e$ being matched.
\begin{align*}
    \Pr[e\in \mu] \geq \Pr[e\in \mu \wedge R_0(e)] + \Pr[e\in \mu \wedge R_1(e)] = (\star).
\end{align*}
Here, $(\star)$ is the lower bound obtained by summing the lower bounds of lemmas \ref{lem:saturated} and \ref{lem:unsaturated}. Before unpacking this (rather unwieldy) expression, we first simplify it as follows. 
First, we note that for $\alpha \ge 0.12$, the coefficient of $s_f$ in $(\star)$ is at least $x_e\cdot (1-\alpha\cdot s_e)\cdot x_f( 0.14 \alpha  - 0.0275 ( 1 - 2 \alpha)^2) \geq 0$.
Recalling that $s_f\geq x_f$, we have that $(\star)$ is made no larger by replacing each occurrence of $s_f$ with $x_f$, resulting in a lower bound $(\star)\geq (\dag)$. 

To lower bound $(\dag)$, we need a way to relate the $\sum_{f \in N_e} x_f^2$ to $m_e$. To do so, let $$f^* := \arg \max_{f \in N_e} \{ x_f : \text{ there exists a triangle containing } e, f \}.$$ So, $x_{f^*} = m_e$, with the possibility that $f^* = \emptyset$ if $m_e = 0$. Let $N_e^*$ denote $N_e \setminus \{ f^*\}$, and  define $d^{\text{big}}_e: = \sum_{f\in N_e^* : x_f \geq (1 - m_e)/2} x_f$ to be the contribution to $d_e$ due to edges with $x$-value at least $(1 - m_e)/2$. In $(\dag)$, the coefficients of $x_f^2$ for $f \in N_e^*$ with $x_f < (1-m_e)/2$ are given by $0.14 \alpha - 2 \cdot 0.0275 (1 - 2\alpha)^2$, which is positive for $\alpha \ge 0.171$. Using also that 
$d^{\text{big}}_e \cdot \frac{1 - m_e}{2} \le \sum_{f \in N_e: x_f \ge (1 - m_e)/2} x_f^2$ we have that $(\dag)$ is lower bounded by 
\begin{align*}
 (1 - \alpha \cdot s_e) \left( h (2)   +0.14 \left( s_e + \alpha m_e^2 + \alpha\cdot d^{\text{big}}_e \cdot \frac{1-m_e}{2} \right) +  0.0275 (1 - 2\alpha )^2 \cdot   (d_e - d^{\text{big}}_e) \cdot (1 - m_e)  \right) x_e.
\end{align*}

We conclude that the algorithm's balancedness for fixed $\alpha \in[0.171, 0.5]$ is at least the solution to the following five-variable optimization problem.
\begin{align*}
    \min & \enspace (1 - \alpha \cdot s_e) \Big( h (2)   +0.14 \left(s_e + \alpha m_e^2 + \alpha\cdot d^{\text{big}}_e \cdot \frac{1-m_e}{2} \right) +  0.0275 (1 - 2\alpha )^2 \cdot   (d_e - d^{\text{big}}_e) \cdot (1 - m_e)  \Big)  \\
    \textrm{s.t. } &   s_e + d_e = 2-x_e \\
    & d_e \geq d^{\text{big}}_e \\
    & d_e \leq 2(1-x_e) \\
    & m_e \leq 1 \\
    & x_e,s_e,d_e,d^{\text{big}}_e, m_e \geq 0.
\end{align*}

Numeric simulations with $\alpha=0.171$ yield a lower bound of $\corrgap$. Furthermore, in the bipartite case, we have only a four-variable optimization problem, as $m_e$ can be fixed at 0. In this case, numeric simulations yield a lower bound of $\bipcorrgap$.
\end{proof}

\section{Stochastic OCRS for Matching with Patience}\label{sec: patience}
In this section, we show that our approach can also be used to improve on the current state-of-the-art algorithm for the \emph{patience-constrained} setting frequently studied in the stochastic probing literature. In this extension of our problem, each vertex $u$ has an integer patience constraint $\ell_u$, specifying that at most $\ell_u$ edges incident on $u$ can be probed. By \Cref{lemma:reduction-patience}, we restrict our attention in this section to the stochastic OCRS model as in \Cref{def:stochastic-ocrs}. For the remainder of this section, we use the notation $x_e := y_e\cdot p_e$ (recall that $\vec{x}$ is in the matching polytope by \Cref{def:stochastic-ocrs}).

Our goal in this section is to prove the following theorem.

\patiencetheorem*

Let $B_u$ be the event that vertex $u$ is \emph{blocked}, either by an edge incident on $u$ being matched or $u$ running out of patience. A useful lemma for us, proven by Brubach et al.~\cite[Lemma 2]{brubach2021improved} for the generic \Cref{alg:stochastic-ocrs}, is that the worst-case bound for $\Pr[\bar{B}_u]$ is achieved when for all edges $e$ incident on $u$, we have $p_e \in \{0, 1\}$.

\begin{lem}[\cite{brubach2021improved}]\label{lem:integer-probs}
    For any vertex $u$, the value of $\Pr[\bar{B}_u]$ is made no larger by assuming that for all edges $e \ni u$, we have $p_e \in \{0, 1\}$.  
\end{lem}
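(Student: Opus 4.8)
The plan is to first rewrite $\Pr[\bar B_u]$ as a clean combinatorial expectation, and then run an extreme-point / exchange argument on the vector $(p_e)_{e \ni u}$.

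First I would expose the right randomness. Fix $u$, and for each edge $e \ni u$ sample in advance an independent ``success'' coin $S_e \sim \text{Ber}(p_e)$, to be consulted if (and only if) $e$ is ever probed. Conditioning on everything \emph{except} these coins -- the arrival times, the attenuation coins, and the entire evolution of the algorithm on edges not incident to $u$ -- one obtains a fixed ordered list $e_{(1)}, \dots, e_{(k)}$ of those edges incident to $u$ that ``would attempt to probe $u$'', i.e.\ whose other endpoint is available and whose $\text{Ber}(y_e \cdot a(e))$ coin came up heads, listed in increasing order of arrival time. The algorithm probes these in order, stopping at the first success or once $\ell_u$ probes have been made, so that: if $k \ge \ell_u$ then $u$ is surely blocked (it is either matched or out of patience), whereas if $k \le \ell_u - 1$ then $\bar B_u$ occurs exactly when $S_{e_{(1)}} = \dots = S_{e_{(k)}} = 0$. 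As the $S_e$ are independent of the conditioning, this yields
$$\Pr[\bar B_u] \;=\; \E\!\left[\mathds{1}\big[k \le \ell_u - 1\big]\prod_{i=1}^{k}\big(1 - p_{e_{(i)}}\big)\right],$$
the expectation being over the conditioned-away randomness (which determines $k$ and the list).

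Next I would show, by an exchange argument, that the right-hand side does not increase when each $p_e$ is pushed to $\{0,1\}$. Since only the direction ``integral $p$ yields a valid lower bound on $\Pr[\bar B_u]$'' is needed, it suffices to exhibit, for any instance with some $p_{e_0} \in (0,1)$, a modified instance -- still obeying the stochastic-OCRS constraints $\sum_{e \ni v} y_e \le \ell_v$ and $\sum_{e \ni v} y_e p_e \le 1$, and with $\sum_{e \ni u} x_e$ unchanged so the downstream analysis still applies -- in which $p_{e_0}$ has moved to an endpoint of $[0,1]$ and $\Pr[\bar B_u]$ has not gone up; iterating (and passing to a limit over finer discretizations if necessary) removes all fractional values. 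Concretely, one slides $p_{e_0}$ while holding the edge's contribution $x_{e_0} = y_{e_0} p_{e_0}$ fixed (so $y_{e_0} = x_{e_0}/p_{e_0}$ co-varies), if needed spawning a dummy parallel copy of $e_0$ to absorb the leftover patience budget; in the displayed formula the only effect of this slide is a change in the probability that $e_0$ appears on the list together with the multiplicative factor $(1 - p_{e_0})$, and one checks that the resulting one-parameter function is concave, so its minimum over the feasible interval is attained at an endpoint.

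The main obstacle is exactly the decoupling that makes this exchange argument legitimate: the probability that $e_0$ ``attempts to probe $u$'' is itself a function of $p_{e_0}$ -- through the attenuation value $a(e_0)$, and, in general graphs, through the feedback on the availability of $e_0$'s other endpoint -- so one cannot freeze ``everything else'' and vary $p_{e_0}$ in isolation. The heart of the proof is to arrange the conditioning (or an explicit coupling between the original and modified instances) so that this nuisance dependence is neutralized and the only surviving dependence on $p_{e_0}$ is the benign factor $(1-p_{e_0})$; after that, checking concavity and invoking the extreme-point principle are routine, as is the verification that the modified instance is feasible.
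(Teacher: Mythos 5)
First, note that the paper does not supply its own proof of this lemma: it is quoted from \cite{brubach2021improved} (their Lemma 2), and the paper only records the intuition, namely a \emph{splitting} move --- replace an edge $e\ni u$ with $p_e\in(0,1)$ by two parallel edges with probabilities $0$ and $1$ and $y$-values $y_e(1-p_e)$ and $y_ep_e$, which preserves both constraints $\sum_{e\ni v}y_e\le \ell_v$ and $\sum_{e\ni v}y_ep_e\le 1$ exactly, and then argue via a coupling that $\Pr[\bar{B}_u]$ does not increase. Your opening reformulation of $\Pr[\bar{B}_u]$ as $\E\bigl[\mathds{1}[k\le \ell_u-1]\prod_i(1-p_{e_{(i)}})\bigr]$ is salvageable, but as written it is circular: ``the entire evolution of the algorithm on edges not incident to $u$'' is \emph{not} independent of the success coins $S_e$ for $e\ni u$, since a success along $e=(u,v')$ matches $v'$ and alters which edges at $v'$ are probed later. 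One must instead define the list via the ``all probes at $u$ fail'' coupling (on the event $\bar{B}_u$ the virtual and actual runs coincide); this is fixable but needs to be said.

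The genuine gap is in your second step, and you essentially name it yourself without closing it. Sliding $p_{e_0}$ with $x_{e_0}=y_{e_0}p_{e_0}$ held fixed forces $y_{e_0}=x_{e_0}/p_{e_0}$, which changes the probability that $e_0$ is ever probed, the distribution of the count $k$, the patience consumed at $e_0$'s other endpoint, and (in general graphs) the downstream availability feedback --- so the dependence on $p_{e_0}$ is emphatically not only through the benign factor $(1-p_{e_0})$, and the claimed concavity of the resulting one-parameter function is not a routine check but the entire content of the lemma. Worse, the endpoints of the feasible interval for this slide are not $\{0,1\}$: one cannot reach $p_{e_0}=0$ at all (since $y_{e_0}\to\infty$), and the patience budget $\sum_{e\ni v}y_e\le\ell_v$ caps how far $p_{e_0}$ can decrease, so even a successful extreme-point argument would not land on integral probabilities. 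The construction that actually works is the discrete splitting above: it changes $y_{e_0}$ and $p_{e_0}$ simultaneously into two integral-probability edges whose total probe rate and total match rate equal those of $e_0$, after which a direct coupling of the two processes (identifying ``probe $e_0$ and fail'' with ``probe the $p=0$ copy'' and ``probe $e_0$ and succeed'' with ``probe the $p=1$ copy'') shows the blocking probability can only go up. Your proposal gestures at this with the ``dummy parallel copy,'' but the proof you outline is the harder, unfinished continuous version rather than the splitting argument the cited lemma actually uses.
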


The intuition behind this lemma is that for any edge $e \ni u$ with $p_e \in (0, 1)$, if we split it into two edges with probabilities 0 and 1, then the corresponding LP solution is still feasible and $\Pr[B_u]$ does not decrease. Let $E_0(u)$ and $E_1(u)$ denote the set of edges $e \ni u$ with $p_e = 0$ and $p_e = 1$, respectively.

\subsection{Analysis for General Graphs}
In the patience-constrained setting, we show that our new attenuation function $a_2(\cdot)$ can be used to improve on the state-of-the-art $0.382$-approximate algorithm of \cite{brubach2021improved}. We briefly note that if we use the attenuation function $a_1( \cdot )$ from \Cref{sec:warm-up-RO-edge-arrival} instead of $a_2( \cdot )$ in \Cref{alg:stochastic-ocrs}, we will get the same approximation ratio as in \cite[Section 3]{brubach2021improved} with a similar analysis. To analyze \Cref{alg:stochastic-ocrs} with patience constraints, we have the same high-level intuition as in \Cref{sec:going-beyond}: for each edge $e$ we will provide a lower bound for $\Pr[e \in \mu \wedge R_0(e)]$ and $\Pr[e \in \mu \wedge R_1(e)]$.

\subsubsection{Lower bounding $\Pr[e \in \mu \wedge R_0(e)]$}

First, we provide a lower bound for the event that no edges in $N_e$ arrived before $e$ and were realized, and $e$ was matched.

\begin{lem}\label{lem:a-0-patience}
For each edge $e\in E$, we have the following.
    $$\Pr[e \in \mu \wedge R_0(e)] \geq (1 - \alpha \cdot s_e)\left(0.382 + 0.117 \left(s_e + \alpha \cdot \sum_{f \in N_e} x_f s_f \right)\right)\cdotp x_e.$$
\end{lem}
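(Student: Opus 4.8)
The plan is to mirror the proof of \Cref{lem:saturated}, with the key structural change being that in the patience setting each edge $f \in N_e$ can fail to be realized before $t_e$ not only because it is inactive or arrives late or is rejected by attenuation, but additionally because one of its endpoints has run out of patience. First I would condition on the arrival time $t_e = y$ and lower bound $\Pr[R_0(e) \mid t_e = y]$, the probability that no edge $f \in N_e$ is realized prior to time $y$. Following Brubach et al.\ via \Cref{lem:integer-probs}, it suffices to handle the case where each $p_e \in \{0,1\}$; edges in $E_0(u)$ are never active and can be ignored, so only edges in $E_1(u)$ matter, and for these the quantity $x_f = y_f \cdot p_f = y_f$ behaves exactly like the fractional degree contribution in the unconstrained analysis. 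The upshot should be that the same computation as in \Cref{lem:saturated} goes through, giving $\Pr[R_0(e) \mid t_e = y] \ge \e^{-y(d_e - \alpha \sum_{f \in N_e} x_f s_f)}$, where I must double-check that the patience constraint $\sum_{e \ni v} y_e \le \ell_v$ (together with $\sum_{e \ni v} y_e p_e \le 1$) still yields $d_e + x_e \le 2$, so that $s_e = 2 - d_e - x_e \ge 0$ is well-defined and the slack argument is unaffected.

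Next I would integrate over $t_e$, exactly as in the proof of \Cref{lem:saturated}: since $e$ is matched whenever $R_0(e)$ holds and $e$ is realized (probability $x_e \e^{-x_e y}(1 - \alpha s_e)$ conditioned on $t_e = y$, these being independent), we get
\begin{align*}
\Pr[e \in \mu \wedge R_0(e)] &\ge (1 - \alpha s_e) \cdot h\!\left(2 - s_e - \alpha \sum_{f \in N_e} x_f s_f\right) \cdot x_e.
\end{align*}
The only substantive difference from the earlier lemma is the constant: here the baseline is $h(2) \approx 0.432$ but the claimed guarantee is only $0.382$, so the final step is a linear lower bound on the convex function $h$ valid on a possibly wider interval. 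Concretely I would argue $h(2 - t) \ge 0.382 + 0.117\, t$ for all relevant $t$ (note $0.382 \le h(2)$ and $0.117$ is a supporting-line slope chosen conservatively), which follows from convexity of $h$ together with checking the inequality at the endpoints of the range of $2 - s_e - \alpha \sum_{f \in N_e} x_f s_f$. Substituting $t = s_e + \alpha \sum_{f \in N_e} x_f s_f$ yields the stated bound.

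The main obstacle I anticipate is verifying that \Cref{lem:integer-probs} (and more generally the reduction to $p_e \in \{0,1\}$) interacts cleanly with the event $R_0(e)$ and with the definitions of $d_e, s_e$ in the patience setting --- in particular, making sure that when we split an edge with $p_e \in (0,1)$ into a $p=0$ copy and a $p=1$ copy, the neighborhood sums $d_e$, the per-edge slacks $s_f$, and the product $\sum_{f \in N_e} x_f s_f$ all transform in a way that only weakens the bound, so that it suffices to prove the inequality in the integral case. Once that bookkeeping is in place, the rest is a near-verbatim repetition of the argument for \Cref{lem:saturated} with a different (weaker, hence easier to verify) linear approximation to $h$, so I would not expect the calculus to pose difficulties; I would just be careful that the $0.117$ slope is actually valid over the full interval $t$ can occupy (which can be as large as $2$, since $s_e$ can be up to $2$ when $x_e, d_e \to 0$), as opposed to only near $t = 0$.
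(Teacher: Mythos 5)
There is a genuine gap, and it sits exactly where the patience constraint actually bites. Your intermediate claim
\begin{align*}
\Pr[e \in \mu \wedge R_0(e)] \;\ge\; (1 - \alpha s_e)\cdot h\!\left(2 - s_e - \alpha \sum_{f \in N_e} x_f s_f\right)\cdot x_e
\end{align*}
is false in the stochastic OCRS with patience. The issue is not how neighbors of $e$ fail to be realized (which, as you note, can only help $R_0(e)$); it is that for $e$ to be probed at time $t_e$ \emph{both of its own endpoints must still have remaining patience}, and $R_0(e)$ does not imply this. After the reduction of \Cref{lem:integer-probs} to $p_f \in \{0,1\}$, edges in $E_0(u)$ are probed with positive probability, are never realized, and yet consume $u$'s patience; so $e$ can arrive free, with no realized neighbor, and still be unmatchable because $u$ (or $v$) has already been probed $\ell_u$ times. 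Your proposal asserts ``$e$ is matched whenever $R_0(e)$ holds and $e$ is realized,'' which drops this requirement entirely. The constant $0.382$ is not a conservative supporting line for $h$ near $2$ (indeed $h(2-t)\ge 0.382+0.117t$ does hold for all $t\ge 0$, so that part of your calculus is fine but beside the point); it is the value of a genuinely different integral, $\int_0^1 \e^{-4y}(1+y)^2\,dy \approx 0.382$, in which the extra factor $\e^{-2y}(1+y)^2$ is the worst-case probability that neither endpoint has exhausted its patience by time $y$.

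The missing steps are: (i) observe that conditioned on $t_e=y$, the event that $u$ retains patience is implied by $\ell_u(e)\le \ell_u-1$, where $\ell_u(e)$ counts probes of $E_0(u)$-edges before time $y$, and this event is independent of the blocking event $R_0(e)$; (ii) invoke the Poissonization bound of Baveja et al.\ to replace $\Pr[\ell_u(e)\le \ell_u-1]$ by $\phi_{\ell_u}(y)$ with $\ell_u(e)\sim\mathrm{Pois}(y(\ell_u-1))$; (iii) argue (the paper does this numerically) that the worst case is $\ell_u=\ell_v=2$, giving $\phi_2(y)=\e^{-y}(1+y)$ for each endpoint; and (iv) verify the linear lower bound $\int_0^1 \e^{-y(4-K)}(1+y)^2\,dy \ge 0.382+0.117K$ for $K\ge 0$ (\Cref{fact:factBoundingAPatiencea}), which plays the role your $h(2-t)$ approximation was meant to play but for the correct, patience-degraded integrand. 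Without step (i)--(iii) your bound would prove a balancedness of $h(2)\approx 0.432$ in the patience setting, which is stronger than what is true.
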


\begin{proof}

By \Cref{lem:integer-probs}, we can assume that all edges incident on $e$ have integer success probabilities. There are two cases for edge $e = (uv)$ to be blocked: (i) some edge $f \in E_1(u) \cup E_1(v)$ arrives before $e$ and is realized, or (ii) $u$ or $v$ loses patience before $e$ is considered because of the probing of edges in $E_0(u) \cup E_0(v)$. (Note that any edge $E_1(u) \cup E_1(v)$ that we probe is successful, while edges in $E_0(u) \cup E_0(v)$ cannot be realized.) Define $\ell_u(e):=\{f\ni u \mid f\in E_0(u),\, t_f < t_e\}$. Then, note that if $\ell_u(e) \le l_u - 1$, which is independent of the event that $e$ is blocked by (i), it is not possible that $e$ is blocked by $u$ having lost patience despite being free (a similar statement holds for $v$). Putting it together, we have
    \begin{align*}
    &\Pr[R_0(e) \wedge \{ u, v \} \textrm{ have patience} \mid t_e = y] \\
    \geq &\Pr[R_0(e) \mid t_e = y]  \Pr[\ell_u(e) \leq \ell_u - 1] \Pr[\ell_v(e) \leq \ell_v - 1].
    \end{align*}
    
    As in the proof of \Cref{lem:saturated}, we have that 
    \begin{align*}
    \Pr[R_0(e) \mid t_e = y] \geq \e^{ -y( d_e - \alpha \cdot \sum_{f \in N_e} x_f s_f)}.
    \end{align*}
    Furthermore, by \cite[Lemma 11]{baveja2018improved}, $\Pr[\ell_u(e) \leq \ell_u - 1]$ is minimized when $\ell_u(e) \sim \text{ Pois}(y(\ell_u - 1))$. Let $\phi_{\ell_u}(y) := \Pr[\ell_u(e) \leq \ell_u - 1]$ when $\ell_u(e) \sim \text{ Pois}(y(\ell_u - 1))$. Thus
    \begin{align*}
        \Pr[R_0(e) \wedge \{u, v\} \textrm{ has patience} \mid t_e = y] \geq \e^{ -y( d_e - \alpha \cdot \sum_{f \in N_e} x_f s_f)} \cdot \phi_{\ell_u}(y) \cdot \phi_{\ell_v}(y).
    \end{align*}
    
    We finish the proof by taking total expectation over the time $e$ arrives:
    \begin{align}\label{eq: a0-patience-bound}
    \Pr[e \in \mu \wedge R_0(e)] & = x_e \cdot (1 - \alpha \cdot s_e)\cdot \int_0^1 \Pr[R_0(e) \wedge \{u, v\} \textrm{ have patience} \mid t_e = y] \cdot \e^{-x_e y} \, dy \nonumber
    \\
    & \geq x_e \cdot (1 - \alpha \cdot s_e)\cdot\int_0^1 \e^{ -y( d_e - \alpha \cdot \sum_{f \in N_e} x_f s_f)-x_e y} \cdot \phi_{\ell_u}(y) \cdot \phi_{\ell_v}(y) \, dy. 
    \end{align}
    
    Based on numerical simulations, we can show that the above is minimized when $\ell_u = 2$ and $\ell_v = 2$, yielding the following. \begin{align*}
         \Pr[e \in \mu \wedge R_0(e)] & \geq x_e\cdot(1 - \alpha \cdot s_e)\cdot\int_0^1 \e^{ -y( 2 - \alpha \cdot \sum_{f \in N_e} x_f s_f)} \cdot (\e^{-y} + y\e^{-y})^2 \cdot \, dy
         \\
         & = x_e\cdot(1 - \alpha \cdot s_e) \cdot \int_0^1 \e^{ -y( 4 - s_e - \alpha \cdot \sum_{f \in N_e} x_f s_f)} \cdot (1+y)^2 \, dy.
         \\
         &\geq (1 - \alpha \cdot s_e) \left(0.382 + 0.117 \left( s_e + \alpha \cdot \sum_{f \in N_e} x_f s_f \right) \right)\cdot x_e,
    \end{align*}
    where the last step is captured by \Cref{fact:factBoundingAPatiencea}, whose proof is deferred to \Cref{app:omittedproofs4}.
\end{proof}

Note that for $\alpha = 0$, \Cref{alg:main} captures the  \cite{brubach2021improved} result. By choosing $\alpha > 0$, we can guarantee that the bound in \Cref{lem:a-0-patience} is larger than 0.382 if $s_e$ or $\sum_{f \in N(e)}x_f s_f$ is large.

\subsubsection{Lower bounding $\Pr[e \in \mu \wedge R_1(e)]$}
In this section, we lower bound the probability that $e$ is matched despite a neighboring edge $f$ being realized and appearing before $e$, because $f$ is blocked.

\begin{lem}\label{lem:a1-lower-bound}
For each edge $e\in E$, we have the following.
    $$
    \Pr[e \in \mu \wedge R_1(e)] \ge \left( (1-  2\alpha)^2 (1 - \alpha \cdot s_e)  \cdot \sum_{f \in N_e}  x_f (1 - m_e - x_f - s_f)^+ \cdot 0.02  \right) \cdot x_e.
    $$
\end{lem}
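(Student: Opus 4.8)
The plan is to follow the proof of \Cref{lem:unsaturated} — its patience-free analogue — with the same witness for the event $e\in\mu\wedge R_1(e)$. Writing $e=(v_1v_2)$, we look for a single realized edge $f=(v_2v_3)\in N_e$ arriving before $t_e$, and we ask that $f$ be \emph{blocked} by some edge $g=(v_3v_4)\in N_f\setminus N_e$ getting matched before $t_f$ via the event $R_0(g)$. The point, exactly as before, is that if $g$ is matched before $t_f$ then $v_3$ is already taken, so $f$ is never probed; in particular $f$ consumes no patience of $v_2$, so $e$ stays free on its $v_2$-side. The one genuinely new ingredient relative to \Cref{lem:unsaturated} is that we must now also pay for the relevant vertices having remaining patience at the relevant times: $v_3$ and $v_4$ must have patience at $t_g$ for $g$ to be probed, and $v_1$ and $v_2$ must have patience at $t_e$ for $e$ to be probed. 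I would handle each such event exactly as in \Cref{lem:a-0-patience}: after reducing where convenient to integer success probabilities via \Cref{lem:integer-probs}, I lower-bound the event that a vertex $u$ has remaining patience at time $t$ by the worst case in which the number of edges of $u$ probed before $t$ is $\mathrm{Pois}(t(\ell_u-1))$-distributed (\cite[Lemma 11]{baveja2018improved}), which contributes the factor $\phi_{\ell_u}(t)$, and at the end I optimize numerically over the $\ell_u$'s — which, as in \Cref{lem:a-0-patience}, will concentrate on small patiences such as $\ell=2$.

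Concretely, I would reprove the chain \Cref{obs:removeConditioning} $\to$ \Cref{lem:gmatched} $\to$ \Cref{cor:PrBf} $\to$ \Cref{lem: blocking-f} with these extra Poisson factors inserted. First, conditioned on $Q(e)=\{f\}$, $t_e=a$, $t_f=b$, $t_g=c$, the events that $g$ is realized, that $R_0(g)$ holds, and that $v_3,v_4$ have patience at $t_g$ combine (after the monotonicity step discussed below) into a product, giving
$$\Pr[\,g\in\mu,\ t_g\le t_f \mid Q(e)=\{f\},\,t_e=a,\,t_f=b\,]\ \ge\ x_g\cdot\tfrac12(1-2\alpha)\cdot\psi(b),$$
with $\psi(b)=\int_0^b \e^{-2c}\,\phi_{\ell_{v_3}}(c)\,\phi_{\ell_{v_4}}(c)\,dc$ for the worst-case patiences. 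Summing over the disjoint events $\{g\in\mu\}$ for $g\in N_f\setminus N_e$, and using $|N_f\cap N_e|\le1$ together with the triangle $x$-value being at most $m_e$, reproduces the factor $(d_f-m_e-1)^+$ as in \Cref{cor:PrBf}; integrating out $t_f$ then yields a lower bound on $\Pr[U_e\mid t_e=a]$ of the same shape as \Cref{lem: blocking-f}, namely $\tfrac12(1-2\alpha)^2\sum_{f\in N_e}x_f(d_f-m_e-1)^+$ times an explicit function of $a$ and $x_f$. Finally, integrating over $t_e$ against $a_2(a,x_e,s_e)$ and the extra patience factors $\phi_{\ell_{v_1}}(a)\phi_{\ell_{v_2}}(a)$ for $e$'s own endpoints reduces everything to a single-variable numerical estimate — the patience counterpart of \Cref{fact:boundingIntofA}, now with a smaller constant — which, after substituting $s_f=2-d_f-x_f$, gives exactly
$$\Pr[e\in\mu\wedge R_1(e)]\ \ge\ \Big((1-2\alpha)^2\,(1-\alpha s_e)\sum_{f\in N_e}x_f\,(1-m_e-x_f-s_f)^+\cdot 0.02\Big)\,x_e.$$

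I expect the main obstacle to be precisely the claim that conditioning does not hurt the patience events. Conditioning on $Q(e)=\{f\}$ constrains how many edges at $v_1,v_2$ are realized before $t_e$, not how many are probed, and conditioning on the event $g\in\mu$ ties down the history at the shared vertex $v_3$; so the patience events at $v_1,v_2,v_3,v_4$ are not obviously independent of — nor favourably correlated with — the rest of the witness. I expect this to be resolved by (i) passing via \Cref{lem:integer-probs} to worst-case $\{0,1\}$ success probabilities, which decouples being active from being probed, and (ii) an FKG-style monotonicity argument in the arrival-time randomness: demanding that fewer incident edges be realized before a given time only makes it more likely the endpoints still have patience, which is the same style of reasoning already used in \Cref{lem:a-0-patience}. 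Everything after that should be bookkeeping — choosing the worst-case patiences (numerically, as in \Cref{lem:a-0-patience}) and checking the final numerical inequality that collapses the accumulated factors into the constant $0.02$.
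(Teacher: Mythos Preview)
Your outline is essentially the paper's proof: same witness, same chain \Cref{lem:gmatched}\,$\to$\,\Cref{cor:PrBf}\,$\to$\,\Cref{lem: blocking-f} rerun with Poisson patience factors, and the same final numerical collapse. There is, however, one specific point where your plan diverges from the paper and where the constant $0.02$ is at stake.

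You propose to pay for $v_3$'s patience at time $t_g$ by inserting a factor $\phi_{\ell_{v_3}}(c)$ into $\psi(b)$ alongside $\phi_{\ell_{v_4}}(c)$. The paper does not do this. Instead it \emph{enlarges} the blocking event $B_f$ to include ``$v_3$ has lost patience by $t_f$'': if $v_3$ has no remaining patience at $t_f$, then $f=(v_2,v_3)$ cannot be probed and so $f$ is blocked automatically. Hence $\Pr[B_f\mid\cdots]\ge \Pr[B_f\mid\cdots,\ \ell_{v_3}(f)\le \ell_{v_3}-1]$, and under this free conditioning $v_3$ certainly has patience at any $t_g<t_f$. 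Only $\phi_{\ell_{v_4}}$ enters the inner integral, giving $\int_0^b \e^{-3c}(1+c)\,dc=\tfrac19(4-(3b+4)\e^{-3b})$ in \Cref{lem:g-matched-patience} and, after the outer integration with $\phi_{\ell_u}\phi_{\ell_v}$ at $\ell=2$, the constant $0.181/9\approx 0.02$ (\Cref{fact:patience-integral}). With your extra $\phi_{\ell_{v_3}}$ the inner integrand becomes $\e^{-4c}(1+c)^2\le \e^{-3c}(1+c)$, and the final constant drops below $0.02$. So the missing idea is not an FKG estimate but the simple observation that $v_3$ losing patience \emph{helps} $B_f$.

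On the conditioning worry you flag: no FKG is needed. After the $\{0,1\}$ reduction of \Cref{lem:integer-probs}, the edges that can be realized (those in $E_1$) are disjoint from the edges that can exhaust patience without being matched (those in $E_0$), so $Q(e)=\{f\}$ and the events $\{\ell_u(e)\le \ell_u-1\}$, $\{\ell_v(e)\le \ell_v-1\}$ are genuinely independent given $t_e$; that is why the paper simply multiplies by $\phi_{\ell_u}(a)\phi_{\ell_v}(a)$ in the last step.
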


Note that the difference between the analysis of this section and \Cref{sec: a1-lower-bound} is that when $g \in N_f$ which is incident on $v_3$ arrives before $f$, vertex $v_4$ must have remaining patience in order to argue that $f$ gets blocked by $g$. If $v_4$ has lost patience, $g$ will not block $f$, resulting in $e$ possibly not being free (even if conditioning on the fact that the only realized edge before $e$ is $f$). We prove a sequence of lemmas which help in the proof of \Cref{lem:a1-lower-bound}.

By \Cref{lem:integer-probs}, we can assume that all edges that are incident on $v_4$ excluding $g$, have integer success probabilities since it does not decrease $\Pr[B_{v_4}]$. Let $B_f$ be the probability that edge $f$ is blocked, either by an edge $g$ incident on $v_3$ or because $v_3$ ran out of patience. Then we have the following lemma and corollary that help to provide a lower bound for this event. 

\begin{restatable}{lem}{gmatchedpatience}\label{lem:g-matched-patience}
    For any $0 \le b \le a \le 1$ we have
     $$\Pr[ g \in \mu, t_g \le t_f \mid Q(e) = \{ f \}, t_e = a, t_f = b, \ell_{v_3}(f) \leq \ell_{v_3} - 1] \ge x_g \cdot \frac{(1- 2\alpha)\cdot (4 - (3b + 4)\e^{-3b})}{9}.$$
\end{restatable}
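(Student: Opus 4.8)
The plan is to follow the template of \Cref{lem:gmatched}, the non-patience analogue, while inserting one additional factor to account for the patience of $v_4$ (the far endpoint of $g = (v_3 v_4)$). Conditioned on $Q(e) = \{f\}$, $t_e = a$, $t_f = b$, and $\ell_{v_3}(f) \le \ell_{v_3} - 1$, I would decompose the event $\{g \in \mu,\ t_g \le t_f\}$ as the conjunction of: (i) $g$ arrives at some time $t_g = c \le b$ and is realized; (ii) the event $R_0(g)$ holds, so that $g$ is free on arrival; and (iii) both $v_3$ and $v_4$ have remaining patience when $g$ arrives. Item (ii) already forces $v_3, v_4$ unmatched, and the conditioning $\ell_{v_3}(f) \le \ell_{v_3} - 1$ — together with $t_g \le t_f$ and the fact that under $R_0(g)$ no realized edge incident on $v_3$ precedes $t_g$ — secures the patience of $v_3$. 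So the one genuinely new quantity to control is the patience of $v_4$.

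For the bookkeeping, first invoke \Cref{lem:integer-probs} to assume all edges incident on $v_4$ other than $g$ have success probabilities in $\{0,1\}$, which does not help $v_4$ retain patience. Then, conditioned on $t_g = c$: the probability that $g$ is realized is $x_g \cdot \e^{-x_g c}\cdot(1 - \alpha s_g)$; the probability of $R_0(g)$ is at least $\e^{-(2 - 2x_g)c}$, by the same ``removing conditioning only helps'' argument as in \Cref{obs:removeConditioning}; and the probability that $v_4$ still has patience at time $c$ is, by \cite[Lemma 11]{baveja2018improved}, minimized when the count of probed $E_0(v_4)$-edges preceding $t_g$ is $\mathrm{Pois}(c(\ell_{v_4}-1))$-distributed, hence is at least $\phi_{\ell_{v_4}}(c)$. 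A numerical check, exactly as for \Cref{lem:a-0-patience}, shows that the resulting integral is worst when $\ell_{v_4} = 2$, where $\phi_2(c) = (1+c)\e^{-c}$. As these events are (conditionally on $t_g=c$) independent, their product is at least $x_g(1 - \alpha s_g)\,\e^{-(2-x_g)c}(1+c)\e^{-c}$.

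Finally, integrate over $c \in [0,b]$ and simplify using $s_g \le 2$ and $x_g \ge 0$, so that $1 - \alpha s_g \ge 1 - 2\alpha$ and $\e^{-(2-x_g)c} \ge \e^{-2c}$:
$$\Pr[g\in \mu,\ t_g \le t_f \mid \cdots] \ \ge\ x_g(1 - 2\alpha)\int_0^b (1+c)\,\e^{-3c}\,dc \ =\ x_g \cdot \frac{(1-2\alpha)\,\bigl(4 - (3b+4)\e^{-3b}\bigr)}{9},$$
the last equality being a routine antiderivative calculation. I expect the main obstacle to be the conditioning bookkeeping: carefully arguing that dropping the conditioning on $Q(e)=\{f\}$, $t_e=a$, $t_f=b$ and on $v_3$'s patience event can only increase $\Pr[R_0(g)]$ and $\Pr[v_4 \text{ has patience}]$ (the subtle point being that these events interact with the patience consumption of $v_3,v_4$ in the right directions), together with pinning down that $\ell_{v_4}=2$ is indeed the worst case for the final integral.
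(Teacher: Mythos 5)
Your proposal is correct and follows essentially the same route as the paper's proof: decompose the event into $g$ being realized, $R_0(g)$ holding, and $v_4$ retaining patience; remove the conditioning via independence of arrival times and activation status; invoke the Poisson worst case from \cite[Lemma 11]{baveja2018improved} with the numerically verified worst case $\ell_{v_4}=2$; and then bound $1-\alpha s_g \ge 1-2\alpha$ and $\e^{-(2-x_g)c}\ge \e^{-2c}$ before computing $\int_0^b(1+c)\e^{-3c}\,dc$. The conditioning bookkeeping you flag as the main obstacle is handled in the paper exactly as you anticipate.
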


\begin{restatable}{cor}{PrBfpatience}\label{cor:PrBf-patience}
For any $0\le b \le a \le 1$ we have 
\begin{align*}
\Pr[B_f \mid Q(e) = \{ f \}, t_e = a, t_f = b ] 
&\ge (d_f - m_e - 1)^+ \cdot (1- 2\alpha) \cdot \frac{1}{9} \cdot (1 - (3b + 4)\e^{-3b}).
\end{align*}
\end{restatable}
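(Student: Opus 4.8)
The plan is to obtain Corollary~\ref{cor:PrBf-patience} from Lemma~\ref{lem:g-matched-patience} by a disjointness/union argument over the candidate blocking edges at $v_3$, followed by removing the conditioning on $v_3$ retaining patience. Write $e=(v_1v_2)$ and $f=(v_2v_3)$, and let $G_f$ be the set of edges $g$ incident on $v_3$ with $g\neq f$ that do not form a triangle with $e$ and $f$. First I would argue that, under the conditioning $Q(e)=\{f\}$, the only edges that can actually match $v_3$ before time $t_f$ (other than $f$ itself) lie in $G_f$: an edge incident on $v_2$ that were matched before $t_f<t_e$ would be a realized neighbor of $e$ arriving before $t_e$, contradicting $Q(e)=\{f\}$; and the single possible triangle edge $(v_1v_3)$ has $x$-value at most $m_e$ by definition of $m_e$. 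Next I would lower bound $\sum_{g\in G_f}x_g$: by the degree constraint at $v_2$, $\sum_{h\ni v_2,\,h\neq f}x_h\le 1$, so $\sum_{h\ni v_3,\,h\neq f}x_h=d_f-\sum_{h\ni v_2,\,h\neq f}x_h\ge d_f-1$, and discarding the triangle edge (contribution $\le m_e$) leaves $\sum_{g\in G_f}x_g\ge(d_f-m_e-1)^+$.

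The second step conditions additionally on $\ell_{v_3}(f)\le \ell_{v_3}-1$, i.e.\ that $v_3$ still has a free probe when $f$ would be considered. The events $\{g\in\mu,\ t_g\le t_f\}$ for distinct $g\in G_f$ are pairwise disjoint (a matching has at most one edge at $v_3$), and any of them occurring forces $v_3$ to be matched before $f$ arrives, hence forces $B_f$. Summing the per-edge bounds of Lemma~\ref{lem:g-matched-patience} over $g\in G_f$ and using the degree bound above therefore gives
$$\Pr\big[B_f \,\big|\, Q(e)=\{f\},\,t_e=a,\,t_f=b,\,\ell_{v_3}(f)\le\ell_{v_3}-1\big]\ \ge\ (d_f-m_e-1)^+\cdot\frac{(1-2\alpha)\big(4-(3b+4)\e^{-3b}\big)}{9}.$$

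The last step removes the conditioning on $v_3$'s patience. I would split on $v_3$'s patience state just before $f$ is considered: when $v_3$ has lost patience, $f$ cannot be probed, so $B_f$ holds; when $v_3$ has patience, the previous display applies. Mixing the two cases and simplifying down to the clean target form — weakening the multiplier to $(1-2\alpha)\cdot\tfrac19\cdot\big(1-(3b+4)\e^{-3b}\big)$, which is a fortiori a valid lower bound on a probability (it may even be negative, in which case the statement is vacuous) — yields
$$\Pr[B_f\mid Q(e)=\{f\},\,t_e=a,\,t_f=b]\ \ge\ (d_f-m_e-1)^+\cdot(1-2\alpha)\cdot\tfrac19\cdot\big(1-(3b+4)\e^{-3b}\big),$$
which is the claim.

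I expect the patience unconditioning to be the main obstacle. The subtlety is that the event $\ell_{v_3}(f)\ge\ell_{v_3}$ does not on its own force $v_3$ to lose patience (some of those $E_0$-edges may never actually be probed), so one cannot simply declare ``$B_f$ holds on the complement of the conditioning event''; instead the ``$v_3$ matched by a $g\in G_f$'' and ``$v_3$ runs out of patience'' contributions must trade off correctly — morally, extra $E_0$-traffic at $v_3$ before $t_f$ only makes $v_3$ more likely to be blocked — and this is presumably where the numerical multiplier degrades from the $4-(3b+4)\e^{-3b}$ of Lemma~\ref{lem:g-matched-patience} to the stated $1-(3b+4)\e^{-3b}$. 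One also has to check that conditioning on $Q(e)=\{f\}$, $t_e$, $t_f$ leaves the $v_3$-side randomness essentially untouched, which works precisely because $v_3$ is not an endpoint of $e$ (the only $v_3$-incident edge the conditioning constrains is the excluded triangle edge). The triangle/$m_e$ bookkeeping in the first step and the appeal to Lemma~\ref{lem:integer-probs} at $v_4$ (as in the proof of Lemma~\ref{lem:g-matched-patience}) are the remaining points requiring care.
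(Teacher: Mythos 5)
Your proposal matches the paper's proof essentially step for step: condition on $v_3$ retaining patience (noting $B_f$ is only more likely on the complement), sum the pairwise-disjoint events $\{g\in\mu,\ t_g\le t_f\}$ over non-triangle edges $g\ni v_3$ using Lemma~\ref{lem:g-matched-patience}, and bound $\sum_g x_g\ge(d_f-m_e-1)^+$ via the degree constraint at $v_2$ and the definition of $m_e$. You also correctly observed that the stated constant $1-(3b+4)\e^{-3b}$ is a (gratuitous) weakening of the $4-(3b+4)\e^{-3b}$ that the argument actually delivers, and your caution about the patience-unconditioning step is warranted --- the paper simply asserts that $B_f$ is certain when $\ell_{v_3}(f)\ge\ell_{v_3}$, which strictly speaking requires the monotonicity justification you supply.
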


We defer the proofs of the above lemma and corollary to \Cref{app:omittedproofs4}. Let $U_e$ denote the event that there is a unique edge $f$ in $Q(e)$, and furthermore that $B_f$ occurs ($U_e := \bigcup_{f\in N_e}[Q(e)=\{f\} \wedge B_f]$). The following lemma, whose proof is similar to the proof of \Cref{lem: blocking-f}, and is therefore deferred to \Cref{app:omittedproofs4},  lower bounds the probability of $U_e$ conditioned on the arrival time of edge $e$.
\begin{restatable}{lem}{somefblocked}\label{lem:some-f-blocked}
    For any $a \in [0,1]$,
    
    $$Pr[U_e \mid t_e = a] \geq \frac{1}{9}(1 - 2\alpha )^2 \sum_{f \in N_e} x_f (d_f - m_e - 1)^+\cdot \e^{-a d_e + ax_f} \cdot h_1(a, x_f),$$
    where $h_1(a,x_f) := \int_0^a \e^{-b x_f} \cdot  (4 - (3b + 4)\e^{-3b}) \, db$.
\end{restatable}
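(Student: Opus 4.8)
The plan is to assemble the bound of \Cref{cor:PrBf-patience} in exactly the way \Cref{lem: blocking-f} is obtained from \Cref{cor:PrBf} in the non-patience setting. Since the events $\{Q(e)=\{f\}\}$ are mutually exclusive over $f\in N_e$, the union $U_e=\bigcup_{f\in N_e}[Q(e)=\{f\}\wedge B_f]$ is disjoint, so
\[
\Pr[U_e\mid t_e=a]=\sum_{f\in N_e}\Pr\bigl[Q(e)=\{f\}\wedge B_f\mid t_e=a\bigr],
\]
and it suffices to lower bound each summand. Fix $f\in N_e$; conditioning on its arrival time $t_f=b$ (uniform on $[0,1]$, and forced to satisfy $b\le a$ for $f$ to lie in $Q(e)$), I would write $\Pr[Q(e)=\{f\}\wedge B_f\mid t_e=a]=\int_0^a \Pr[Q(e)=\{f\}\mid t_e=a,t_f=b]\cdot\Pr[B_f\mid Q(e)=\{f\},t_e=a,t_f=b]\,db$.

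For the first factor, recall that an edge $h$ is ``realized'' (probed-and-active) with probability $x_h\cdot a_2(t_h,x_h,s_h)=x_h\e^{-t_h x_h}(1-\alpha s_h)$, independently across edges. Hence, given $t_e=a$ and $t_f=b\le a$, the event $\{Q(e)=\{f\}\}$ asks that $f$ be realized and that no other edge of $N_e$ be realized before time $a$, two independent events, so
\[
\Pr[Q(e)=\{f\}\mid t_e=a,t_f=b]=x_f\e^{-bx_f}(1-\alpha s_f)\cdot\!\!\prod_{f'\in N_e\setminus\{f\}}\!\!\bigl(1-(1-\alpha s_{f'})(1-\e^{-ax_{f'}})\bigr).
\]
Using $s_f\le 2$ and $\alpha\le\tfrac12$ for the first term, and \Cref{fact:concavity} with ``$x$''$=1-\alpha s_{f'}\in[0,1]$, ``$a$''$=ax_{f'}\in[0,1]$ for each factor of the product --- which gives $1-(1-\alpha s_{f'})(1-\e^{-ax_{f'}})\ge\e^{-ax_{f'}(1-\alpha s_{f'})}\ge\e^{-ax_{f'}}$ --- this is at least $x_f\e^{-bx_f}(1-2\alpha)\,\e^{-a(d_e-x_f)}$. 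For the second factor, \Cref{cor:PrBf-patience} gives $\Pr[B_f\mid Q(e)=\{f\},t_e=a,t_f=b]\ge(d_f-m_e-1)^+(1-2\alpha)\tfrac19\bigl(4-(3b+4)\e^{-3b}\bigr)$; the conditioning on $Q(e)=\{f\}$ here is exactly what rules out a triangle edge through $v_3$ being realized before $t_f$ and interfering with the blocking argument behind that corollary.

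Multiplying the two bounds and pulling the $b$-independent quantity $x_f(d_f-m_e-1)^+(1-2\alpha)^2\tfrac19\,\e^{-a(d_e-x_f)}$ outside the integral leaves exactly $\int_0^a\e^{-bx_f}\bigl(4-(3b+4)\e^{-3b}\bigr)\,db=h_1(a,x_f)$, whence
\[
\Pr\bigl[Q(e)=\{f\}\wedge B_f\mid t_e=a\bigr]\ge\tfrac19(1-2\alpha)^2\,x_f(d_f-m_e-1)^+\,\e^{-ad_e+ax_f}\,h_1(a,x_f),
\]
and summing over $f\in N_e$ proves the lemma. I do not expect a deep obstacle here: the genuine work is already done in \Cref{lem:g-matched-patience} and \Cref{cor:PrBf-patience} (the Poisson patience bound specialized to $\ell=2$, which is the source of the $4-(3b+4)\e^{-3b}$ factor). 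The one point demanding care is the conditioning bookkeeping --- verifying that $\Pr[Q(e)=\{f\}\mid t_e=a,t_f=b]$ factors across edges (immediate from independence of distinct edges' realizations and arrival times), and that the event conditioned on when invoking \Cref{cor:PrBf-patience} matches the one produced by the decomposition, so that no residual dependence between $B_f$ and the realization status of $N_e$-edges (notably a triangle edge at $v_3$) is overlooked.
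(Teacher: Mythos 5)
Your proposal is correct and follows essentially the same route as the paper: decompose $U_e$ into the disjoint events $\{Q(e)=\{f\}\wedge B_f\}$, lower bound the probability that $f$ is the unique realized edge of $N_e$ before $t_e$ by $x_f\e^{-bx_f}(1-\alpha s_f)\cdot\e^{-a(d_e-x_f)}$, plug in \Cref{cor:PrBf-patience}, and integrate over $t_f=b\in[0,a]$ to produce $h_1(a,x_f)$. The only cosmetic difference is that you bound the product over $N_e\setminus\{f\}$ via \Cref{fact:concavity}, while the paper simply uses $a_2\le a_1$ and evaluates the resulting integral exactly; both yield the same factor $\e^{-ad_e+ax_f}$.
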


With this in place, we can complete the proof of \Cref{lem:a1-lower-bound}.

\begin{proof}[Proof of \Cref{lem:a1-lower-bound}]
Equipped with \cref{lem:some-f-blocked}, we are ready to provide a lower bound for $\Pr[e \in \mu \wedge R_1(e)]$ by taking total expectation over the time $e$ arrives:
\begin{align*}
    \Pr[e \in \mu \wedge R_1(e)]& \ge  x_e \cdot a_2(t_e,x_e,s_e) \\
    &\cdot \int_0^1 \Pr[U_e \mid t_e = a]\cdot \Pr[\ell_{v}(e) \leq \ell_{v} - 1 \mid t_e = a] \cdot \Pr[\ell_{u}(e) \leq \ell_{u} - 1 \mid t_e = a]  \, da,
\end{align*}
which is minimized when $\Pr[\ell_u(e) \leq \ell_u - 1]$ minimized when $\ell_u(e) \sim \text{ Pois}(y(\ell_u - 1))$ by \cite[Lemma 11]{baveja2018improved}. Assuming $\phi_{\ell_u}(y) := \Pr[\ell_u(e) \leq \ell_u - 1]$ when $\ell_u(e) \sim \text{ Pois}(y(\ell_u - 1))$, and $h_1(a,x_f) = \int_0^a \e^{-b x_f} \cdot  (4 - (3b + 4)\e^{-3b}) \, db$, we have that
\begin{align}\label{eq: a1-patience-bound}
\Pr[e \in \mu \wedge R_1(e)]& \ge \frac{1}{9}\cdot x_e \cdot(1 - 2\alpha )^2 (1 - \alpha \cdot s_e) \nonumber\\ 
&\cdot \int_0^1 \e^{-a x_e}  \sum_{f \in N_e} x_f (d_f - m_e - 1)^+\cdot \e^{-a d_e + ax_f} \cdot h_1(a,x_f) \cdot \phi_{\ell_u}(a)\cdot \phi_{\ell_v}(a)\, da,
\end{align}
which is minimized when $\ell_u = 2$ and $\ell_v = 2$. Then 
\begin{align*}
&\Pr[e \in \mu \wedge R_1(e)]\\
      &\ge \frac{1}{9}\cdot x_e(1 - 2\alpha )^2 (1 - \alpha\cdot s_e)  \sum_{f \in N_e} x_f (d_f - m_e - 1)^+ \int_0^1 \e^{-a x_e -a d_e + ax_f} \cdot h_1(a, x_f) (e^{-a} + ae^{-a})^2 \, da \\
      &\ge \frac{1}{9}\cdot x_e(1 - 2\alpha )^2 (1 - \alpha\cdot s_e)  \sum_{f \in N_e} x_f (d_f - m_e - 1)^+ \int_0^1 \e^{-4a + ax_f}\cdot h_1(a,x_f)\cdot (1+a)^2 \, da \\
      & \ge \frac{1}{9}\cdot x_e(1 - 2\alpha )^2 (1 - \alpha\cdot s_e)  \sum_{f \in N_e} x_f (d_f - m_e - 1)^+ \cdot 0.181.
\end{align*}
In the last inequality, we used \Cref{fact:patience-integral} to simplify the integral.
\begin{restatable}{fact}{patienceintegral}\label{fact:patience-integral}
    Let $h_1(a,x_f) = \int_0^a \e^{-b x_f} \cdot  (4 - (3b + 4)\e^{-3b}) \, db$. Then we have 
    $$\int_0^1 \e^{-4a + ax_f}\cdot h_1(a,x_f)\cdot (1+a)^2 \, da\geq 0.181.$$
\end{restatable}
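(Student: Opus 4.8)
The plan is to remove the dependence on $x_f$ by showing the integral is smallest when $x_f = 0$, and then to bound the resulting $x_f$-free integral.

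The key structural observation is that the factor $\psi(b) := 4 - (3b+4)\e^{-3b}$ appearing inside $h_1$ is nonnegative on $[0,\infty)$: it vanishes at $b = 0$ and has derivative $\psi'(b) = 9(b+1)\e^{-3b} > 0$. Granted this, for $0 \le b \le a$ and $x_f \ge 0$ we have $\e^{-b x_f} \ge \e^{-a x_f}$, so
\[
h_1(a, x_f) = \int_0^a \e^{-b x_f}\, \psi(b)\, db \;\ge\; \e^{-a x_f} \int_0^a \psi(b)\, db \;=\; \e^{-a x_f}\, h_1(a, 0).
\]
Multiplying by the nonnegative weight $\e^{-4a + a x_f}(1+a)^2$ makes the two copies of $\e^{\pm a x_f}$ cancel exactly, so after integrating over $a \in [0,1]$,
\[
\int_0^1 \e^{-4a + a x_f}\, h_1(a, x_f)\, (1+a)^2\, da \;\ge\; \int_0^1 \e^{-4a}\, h_1(a, 0)\, (1+a)^2\, da,
\]
and the right-hand side is a fixed constant (it is exactly the original expression at $x_f = 0$).

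It remains to check this constant is at least $0.181$. Writing $h_1(a,0) = 4a - \tfrac53 + (a + \tfrac53)\e^{-3a}$ in closed form (one elementary integration), the integrand becomes a linear combination of terms $a^k \e^{-4a}$ and $a^k \e^{-7a}$ with $k \le 3$, each of which integrates elementarily over $[0,1]$; collecting terms gives the exact value $\tfrac{-1 - 55\e^{-4}}{12} + \tfrac{2524 - 13430\e^{-7}}{7203}$, which is numerically $\approx 0.1814$. The one point needing care is that this exceeds $0.181$ only by a narrow margin, so the final verification should substitute explicit rational bounds (e.g.\ $\e^4 > 54.5$ and $\e^7 > 1096$) rather than decimal estimates; the $\e^{-7}$ contribution is tiny, so even crude enclosures suffice. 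I expect this last numerical bookkeeping to be the most delicate step — the conceptual part, the reduction to $x_f = 0$, is immediate once nonnegativity of $\psi$ is in place. (Note that $x_f$ here ranges over $[0,1]$, being an $x$-value of an edge, but the argument above works verbatim for every $x_f \ge 0$, so that restriction is irrelevant.)
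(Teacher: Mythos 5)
Your proof is correct, and it is genuinely more rigorous than the paper's, which simply asserts that the fact ``is verified computationally by direct calculation of the integral'' without giving any details. Your two key steps both check out: (i) $\psi(b)=4-(3b+4)\e^{-3b}$ satisfies $\psi(0)=0$ and $\psi'(b)=9(b+1)\e^{-3b}>0$, so $\psi\ge 0$ and the pointwise bound $\e^{-bx_f}\ge \e^{-ax_f}$ for $b\le a$ legitimately yields $h_1(a,x_f)\ge \e^{-ax_f}h_1(a,0)$, cancelling the $\e^{ax_f}$ in the outer weight and reducing the claim to the single constant $\int_0^1 \e^{-4a}h_1(a,0)(1+a)^2\,da$; and (ii) your closed form is right --- I verified $h_1(a,0)=4a-\tfrac53+(a+\tfrac53)\e^{-3a}$, and splitting the integral into the $\e^{-4a}$ and $\e^{-7a}$ pieces gives exactly $\tfrac{-1-55\e^{-4}}{12}+\tfrac{2524-13430\e^{-7}}{7203}\approx 0.18143$. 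Your caution about the thin margin is well placed: crude numerical quadrature (e.g.\ Simpson with step $0.25$) overshoots by more than the $0.0004$ slack, so the closed form plus the rational enclosures $\e^4>54.5$, $\e^7>1096$ (giving a certified lower bound of about $0.18128$) is the right way to finish. Compared to the paper, your argument buys a human-checkable proof and, as a bonus, isolates where the $x_f$-dependence goes (the integral is minimized at $x_f=0$), at the cost of a page of elementary integration; the paper's approach outsources all of this to a computer algebra system.
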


Writing this in terms of $s_f = 2 - d_f - x_f$, we have 
\begin{align*}
    \Pr[e \in \mu \wedge R_1(e)] &\ge \left( (1-  2\alpha)^2 (1 - \alpha\cdot s_e)  \cdot \sum_{f \in N_e}  x_f (1 - m_e - x_f - s_f)^+ \cdot 0.02  \right) \cdot x_e. \qedhere
\end{align*}
\end{proof}

\subsection{Analysis for Bipartite Graphs with Patience Constraints on One Side}
In this section, we assume that the input graph is bipartite and we have patience constraints on only the vertices on one side (as motivated by applications in the gig economy). 

We show that with a small modification to the analysis for general graphs with patience, we can achieve a larger bound in the bipartite setting. We use \Cref{alg:stochastic-ocrs} and assume that patience number for vertices of one side is $\infty$. Note that all lemmas for general graphs also hold here. In order to get a better bound, we slightly change the analysis of \Cref{lem:a-0-patience} and \Cref{lem:a1-lower-bound} by assuming that only one endpoint of edge $e$ has a patience constraint. All the steps are similar to the analysis for general graphs with patience constraints and the only difference is the last steps in \Cref{lem:a-0-patience} and \Cref{lem:a1-lower-bound}. We defer the complete proofs of the following two lemmas to the appendix.

\begin{restatable}{lem}{bipartiteaa}\label{lem: bipartite-a-0}
For each edge $e\in E$, we have the following.

$$\Pr[e \in \mu \wedge R_0(e)] \geq (1 - \alpha \cdot s_e) \left(0.405 + 0.131 \left( s_e + \alpha \cdot \sum_{f \in N_e} x_f s_f \right) \right)\cdot x_e.$$
\end{restatable}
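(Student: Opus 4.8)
The plan is to run the proof of \Cref{lem:a-0-patience} essentially verbatim, with the single modification that only one endpoint of $e$ is subject to a patience constraint. Write $e=(uv)$ and assume without loss of generality that $v$ lies on the side of the bipartition with infinite patience, so that $v$ can never lose patience. By \Cref{lem:integer-probs} we may again assume every edge incident on $e$ has success probability in $\{0,1\}$, and the only ways for $e$ to be blocked upon arrival are: (i) some realized edge of $N_e$ arrives before $t_e$, or (ii) $u$ exhausts its patience via probes of edges in $E_0(u)$ before $t_e$. Writing $\ell_u(e)$ for the number of edges of $E_0(u)$ probed before $t_e$ and noting that the event $\ell_u(e)\le \ell_u-1$ is independent of both $R_0(e)$ and of $e$ being realized, we get, conditioned on $t_e=y$,
$$\Pr[R_0(e)\wedge u\text{ has patience}\mid t_e=y]\ \ge\ \Pr[R_0(e)\mid t_e=y]\cdot\Pr[\ell_u(e)\le \ell_u-1],$$
with only a single patience factor where the general-graph proof has the product $\phi_{\ell_u}(y)\,\phi_{\ell_v}(y)$.

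Next I would import the two estimates from the general-graph analysis unchanged: the bound $\Pr[R_0(e)\mid t_e=y]\ge \e^{-y(d_e-\alpha\sum_{f\in N_e}x_fs_f)}$ exactly as in the proof of \Cref{lem:saturated}, and $\Pr[\ell_u(e)\le \ell_u-1]\ge \phi_{\ell_u}(y)$ with $\phi_{\ell_u}(y)$ the probability that a $\text{Pois}(y(\ell_u-1))$ variable is at most $\ell_u-1$, by \cite[Lemma 11]{baveja2018improved}. Folding in the attenuation factor $\e^{-x_ey}(1-\alpha s_e)$ and integrating over $t_e=y\in[0,1]$ gives
$$\Pr[e\in\mu\wedge R_0(e)]\ \ge\ x_e(1-\alpha s_e)\int_0^1 \e^{-y(d_e+x_e-\alpha\sum_{f\in N_e}x_fs_f)}\,\phi_{\ell_u}(y)\,dy.$$
A numerical check, paralleling the one used for $\ell_u=\ell_v=2$ in \Cref{lem:a-0-patience}, shows this integral (over all valid $\ell_u$) is minimized at $\ell_u=2$, for which $\phi_2(y)=(1+y)\e^{-y}$; substituting the matching-polytope identity $d_e+x_e=2-s_e$ turns the exponent into $\e^{-y(3-s_e-\alpha\sum_{f\in N_e}x_fs_f)}$.

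It then remains to establish the one-variable estimate $\int_0^1 \e^{-y(3-t)}(1+y)\,dy\ \ge\ 0.405+0.131\,t$ for $t:=s_e+\alpha\sum_{f\in N_e}x_fs_f\ge 0$ --- the analog of \Cref{fact:factBoundingAPatiencea}. Since $\e^{-y(3-t)}$ is convex in $t$ for each $y$, the left-hand side is convex in $t$, so it suffices to check the value at $t=0$ (which is $\int_0^1\e^{-3y}(1+y)\,dy\approx 0.4057\ge 0.405$) together with the fact that its derivative at $t=0$, namely $\int_0^1 y\e^{-3y}(1+y)\,dy\approx 0.1317$, is at least the claimed slope $0.131$; the supporting line at $t=0$ then lies weakly below the curve for all $t\ge 0$. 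Multiplying through by $x_e(1-\alpha s_e)$ yields the lemma. The one step requiring genuine care --- and the only real obstacle --- is the claim that $\ell_u=2$ is the worst case for the patience factor; this is handled ``based on numerical simulations'' in the general setting, and here the margin is thin ($\ell_u=2$ and $\ell_u=3$ differ by less than $10^{-3}$ at $t=0$), so the same numerical verification must be carried out carefully over the relevant range of $t$.
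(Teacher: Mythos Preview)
Your proposal is correct and follows essentially the same route as the paper: drop the patience factor for the side with $\ell_v=\infty$, reuse the $R_0(e)$ bound from \Cref{lem:saturated} together with the Poisson patience bound, argue (numerically) that $\ell_u=2$ is the worst case to obtain $\int_0^1 \e^{-y(3-t)}(1+y)\,dy$, and then linearize this integral via convexity at $t=0$ (the paper states this as \Cref{fact:factBoundingAPatiencec}, proving it by showing $g'$ is increasing, i.e., exactly your supporting-line argument). Your cautionary remark about the thin margin between $\ell_u=2$ and $\ell_u=3$ is well taken and worth keeping.
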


\begin{restatable}{lem}{bipartiteab}\label{lem: bipartite-a-1}
For each edge $e\in E$, we have the following.
    $$\Pr[e \in \mu \wedge R_1(e)] \geq  \left( (1-  2\alpha)^2 (1 - \alpha\cdot s_e)  \cdot \sum_{f \in N_e}  x_f (1 - x_f - s_f)^+ \cdot 0.023  \right) \cdot x_e.$$
\end{restatable}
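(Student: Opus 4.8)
The plan is to mirror the general-graph analysis of Section~\ref{sec: patience} almost verbatim, exploiting the two simplifications that arise in the bipartite one-sided-patience setting: first, since the graph is bipartite there are no triangles through $e$, so $m_e = 0$ and every occurrence of $(d_f - m_e - 1)^+$ collapses to $(d_f-1)^+$; second, the edge $e = (uv)$ now has a patience constraint on only one endpoint, so exactly one of the two factors $\phi_{\ell_u}(a),\phi_{\ell_v}(a)$ appears in the conditional bound (the other side behaves as if $\ell = \infty$, contributing a factor of $1$). Concretely, I would start from the displayed chain \eqref{eq: a1-patience-bound}: the same sequence of steps---$e$ matched and $R_1(e)$ holds iff $e$ is realized and $U_e$ occurs with both endpoints retaining patience, conditioning on $t_e = a$, independence of realization from the free/patience status, Lemma~\ref{lem:some-f-blocked} for $\Pr[U_e \mid t_e = a]$, and \cite[Lemma~11]{baveja2018improved} to replace $\Pr[\ell_u(e)\le \ell_u-1]$ by the worst-case Poisson bound $\phi_{\ell_u}(a)$---goes through unchanged, except that only \emph{one} Poisson factor $\phi_{\ell_u}(a)$ survives.

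After that substitution, I would argue (as in the general case, ``based on numerical simulations'') that the resulting integral is minimized at $\ell_u = 2$, so $\phi_{\ell_u}(a) = (1+a)\e^{-a}$, and $d_e + x_e \le 2$, giving
\begin{align*}
\Pr[e \in \mu \wedge R_1(e)] \ge \frac{1}{9}\, x_e (1-2\alpha)^2 (1-\alpha\cdot s_e) \sum_{f\in N_e} x_f (d_f-1)^+ \int_0^1 \e^{-3a + a x_f}\, h_1(a,x_f)\,(1+a)\, da,
\end{align*}
where $h_1(a,x_f) = \int_0^a \e^{-b x_f}(4 - (3b+4)\e^{-3b})\,db$ as before. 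The remaining task is a one-variable numerical fact: $\int_0^1 \e^{-3a+a x_f}\, h_1(a,x_f)\,(1+a)\,da$ is bounded below by a constant for all $x_f \in [0,1]$ (one checks the integrand is decreasing in $x_f$ and evaluates at $x_f = 1$), and combining $\tfrac{1}{9}\cdot(\text{that constant})$ with the $(1-2\alpha)^2$ factor produces the stated coefficient $0.023$ once the outer attenuation is accounted for. Rewriting $d_f - 1 = 1 - x_f - s_f$ via $s_f = 2 - d_f - x_f$ gives exactly the claimed form. The companion bound Lemma~\ref{lem: bipartite-a-0} follows the same way from \eqref{eq: a0-patience-bound}: drop one $\phi_{\ell_v}(y)$ factor, specialize the surviving side to $\ell_u = 2$, and apply the analogue of \Cref{fact:factBoundingAPatiencea} with a linearization of the convex function $y \mapsto \int_0^1 \e^{-y(3 - s_e - \alpha\sum x_f s_f)}(1+y)^2\,dy$ around the relevant point to extract the constants $0.405$ and $0.131$.

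The main obstacle is not conceptual but the verification of the numerical constants: with one fewer Poisson factor the exponent inside every integral shifts from $4 - (\cdots)$ to $3 - (\cdots)$, which changes the tight linearizations, so the constants $0.405$, $0.131$, $0.023$ each require a fresh (routine but careful) numerical optimization rather than reuse of the general-graph values; one must also re-check that $\ell_u = 2$ remains the minimizer after the shift and that the monotonicity in $x_f$ used to reduce the integral bound to the endpoint $x_f = 1$ still holds. All of this is deferred, as stated, to the appendix; none of it poses a genuine difficulty beyond bookkeeping.
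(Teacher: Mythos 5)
Your proposal matches the paper's proof essentially step for step: start from the general-graph bound \eqref{eq: a1-patience-bound}, set $m_e=0$ and drop the $\phi_{\ell_v}$ factor for the unconstrained side, take the worst case $\ell_u=2$ so that $\phi_{\ell_u}(a)=(1+a)\e^{-a}$, bound the exponent via $d_e+x_e\le 2$ to get $\int_0^1 \e^{-3a+ax_f}h_1(a,x_f)(1+a)\,da$, verify this integral numerically (the paper's Fact~\ref{fact:patience-integral2} gives $\ge 0.209$, and $0.209/9\ge 0.023$), and substitute $s_f=2-d_f-x_f$. Your flagged caveats (fresh constants after the $4\to 3$ shift, re-checking the $\ell_u=2$ minimizer, and that the paper verifies the integral by direct computation rather than endpoint monotonicity) are exactly the routine numerical checks the paper defers to its appendix facts.
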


\begin{proof}[Proof of \Cref{thm: patience-theorem}]
Combining \Cref{lem:a-0-patience} and \Cref{lem:a1-lower-bound}, for each edge $e$, we have the following lower bound for $\Pr[e \in \mu] / x_e$:
\begin{align*}
(1 - \alpha \cdot s_e)\cdot \left(0.382 + 0.117(s_e + \alpha \cdot \sum_{f \in N_e} x_f s_f) + (1-  2\alpha)^2  \cdot \sum_{f \in N_e}  x_f (1 - m_e - x_f - s_f)^+ \cdot 0.02\right),    
\end{align*}
for general graphs with patience on all vertices. Similar to \Cref{sec:five-var}, numeric simulations with $\alpha = 0.16$ yield a lower bound of $\patiencratio$.

When the input graph is a bipartite graph with patience constraint on one part, by combining \Cref{lem: bipartite-a-0} and \Cref{lem: bipartite-a-1}, for each edge $e$, we have the following lower bond for $\Pr[e \in \mu] / x_e$
\begin{align*}
(1 - \alpha \cdot s_e)\cdot \left(0.405 + 0.131(s_e + \alpha \cdot \sum_{f \in N_e} x_f s_f) + (1-  2\alpha)^2  \cdot \sum_{f \in N_e}  x_f (1 - x_f - s_f)^+ \cdot 0.023 \right), 
\end{align*}
for bipartite graphs with patience on one side. Numeric simulations with $\alpha = 0.162$ yield a lower bound of $\bippatienceratio$.
\end{proof}

\section{Discussion}

Our work developed a new RO-OCRS for matching and a new stochastic RO-OCRS for matching with patience constraints, and through this gave an approximation algorithm for the sequential pricing problem, with applications in the gig economy. We did so by designing a new attenuation function that takes into account how much contention an edge has for its endpoints. Our work leaves multiple interesting avenues for future research. 

A natural question is whether there is a more principled manner to pick $a(e)$ without changing the core of our analysis. We required multiple non-trivial lower bounds relying on an explicit form for $a(e)$ which seem hard to generalize to symbolic $a(e)$, but it is likely there is already room for improvement here. It would also be very interesting to extend our analysis by considering the probabilities of events $\Pr[e \in \mu \wedge R_k(e)]$ for $k \ge 2$.  

Our algorithm, as well as all prior work on (RO-)OCRS for matching constraints, works with the degree-bounded relaxation for matchings, which is integral for bipartite graphs, but not for general ones. 
A natural question is whether better (RO-)OCRS can be achieved for points in the matching polytope, i.e., points which also satisfy Edmonds' odd-set constraints \cite{edmonds1965maximum}. 
Can better bounds be achieved for this polytope, perhaps via a different algorithmic approach? 

Finally, we note that in the sequential pricing problem, we have complete control over the ordering in which we consider the edges; while RO-OCRS is a natural algorithmic primitive with many distinct applications, for our application the constraint that we consider edges in a uniformly random is self-imposed. Can a better approximation be achieved by considering edges in an order that is not uniformly random?

\paragraph{Acknowledgements.} This research was funded in part by NSF CCF1812919. We thank anonymous reviewers for their useful comments. 

\newpage 
\bibliographystyle{alpha}
\bibliography{abb,a}
\newpage 

\appendix

\section{Omitted Proofs of Section \ref{sec:framework}}\label{app:omittedproofs2}

\reductionpatience*
\begin{proof}

Define  $y_e := \sum_w y_{ew}$ and $p_e := \frac{\sum_w y_{ew} p_{ew}}{\sum_w y_{ew}}$; note that $(y_e)$ and $(p_e)$ form a valid input to a Stochastic OCRS algorithm by the constraints of \ref{LP-main}. 
Suppose that before running \Cref{alg:main}, we sample for each edge $e=(ij)$ and price $w$ independent Bernoulli variables $A_{ew} \sim \text{Ber}(p_{ew})$ \emph{in advance}, along with the weights $\{\pi_e\}$. (This clearly does not change the outcome of the algorithm.) Now, imagine that we run \Cref{alg:stochastic-ocrs} on the instance given by $\vec{y}, \vec{p}$ in the following way: we probe $e$ with probability $a(e)$ if and only if there exists some $w$ such that $\pi_e = w$, and we assume $e$ is active if and only if $A_{e \pi_e} = 1$. Note then when running \Cref{alg:stochastic-ocrs} in this way, we probe $e$ with probability $\left( \sum_w y_{ew} \right) \cdot a(e) = y_e \cdot a(e)$. Additionally, the probability $e$ is active conditioned on it being probed is precisely $\frac{\sum_w y_{ew} p_{ew}}{\sum_w y_{ew}} = p_e$. 

So, by the $c$-balancedness of \Cref{alg:stochastic-ocrs}, we know that $\Pr[e\in \mu] = c'\cdot y_e \cdot p_e$, for some $c'\geq c$.
Recall that $e$ is matched if and only if it is active (i.e., $A_{e \pi_e} = 1$), when processed it is free and has endpoints with patience, and an independent Bernoulli variable $\text{Ber}(a(e))$ comes up heads. Since these latter two events are independent of $e$'s active status, we have that $\Pr[e \in \mu \mid e \text{ active}] = c'$.
Indeed, since the joint event that $e$ is free and has endpoints with patience when processed, and $\text{Ber}(a(e))=1$  are both independent of $\pi_e$ and all the $A_{ew}$ variables, we have that for each $w$,  
$$\Pr[e \in \mu \mid \pi_e = w \wedge A_{e \pi_e} = 1 ] = \Pr[e\in \mu \mid A_{e \pi_e} = 1 ] = c'.$$
Consequently, the expected revenue satisfies the following:
\begin{align*}
    \E \left[ \sum_{j\in A} v_j - \sum_{e\in \mu} \pi_e \right] & = \sum_{ij}\sum_w (v_j - w)\cdot c'\cdot \Pr[\pi_e = w] \cdot \Pr[A_{e\pi_e} = 1 \mid \pi_e = w] \\
    & = \sum_{ij}\sum_w (v_j - w)\cdot c'\cdot y_{ew}\cdot p_{ew} \\
    & = c'\cdot OPT(\text{\textbf{LP-Pricing}}) \geq c\cdot OPT(\text{\textbf{LP-Pricing}}).
\end{align*}
Combining the above with \Cref{LP>=OPT}, the lemma follows.
\end{proof}

\section{Omitted Proofs of Section \ref{sec:going-beyond}}\label{app:omittedproofs3}

\removeConditioning*
\begin{proof}
    Since arrival times and realization status are independent, the LHS above is precisely
    $\Pr[ R_0(g) \mid Q(e) = \{ f \}, t_e = a, t_f = b, t_g = c ] = \prod_{e'\in N_g\setminus N_e} \Pr[e' \text{ not realized and arrived by time } c]$.
    The first inequality follows since the second term is precisely the same product taken over all edges in $N_g$, and is therefore no greater. Finally, the last inequality follows from our choice of attenuation function $a_2(g)\leq a_1(g)$, and therefore by the same argument as in the proof of \Cref{lem:analysisOfAlg2}, the probability that zero realized edges $g'\in N_g$ arrive before time $t_g$, is no smaller than under the attenuation function $a_1(\cdot)$, as calculated in \Cref{ineq:noblockingedge}.
\end{proof}

\PrBf*
\begin{proof}
    The events $E_g:= [g \in \mu, t_g\leq t_f]$ for $g\ni v_3$ are disjoint, since $v_3$ is matched at most once. Moreover, $B_f \supseteq \bigcup_{g\ni v_3, g \not \in N_e} E_g$. Therefore,
    \begin{align*}
    \Pr[ B_f \mid Q(e) = \{ f \}, t_e = a, t_f = b ] & \geq \sum_{g\ni v_3,\, g \not \in N_e} \Pr[ E_g \mid Q(e) = \{ f \}, t_e = a, t_f = b ] \\
    & \geq \sum_{g\ni v_3,\, g \not \in N_e} x_g \cdot \frac{1}{2} (1 - 2\alpha) \cdot (1 - \e^{-2b}) & \textrm{\Cref{lem:gmatched}} \\
    & \geq (d_f - m_e - 1)^+ \cdot \frac{1}{2}(1 - 2\alpha) \cdot (1 - \e^{-2b}),
    \end{align*}
    where the last inequality follows from the definition of $d_f=\sum_{g\in N_f} x_g$ together with the matching constraint $\sum_{g\ni v_3, g \neq f} x_g\leq 1$ and the definition of $m_e$.
\end{proof}

\blockingfff*
\begin{proof}
Fix an edge $f$. First, observe that the probability all edges $e' \in N_e \setminus \{f\}$ are either not realized or arrive after $t_e = a$ is
\begin{align*}
\Pr[Q(e)\subseteq \{f\} \mid t_e=a] & =  \prod_{e' \in N_e \setminus \{ f \}} \left( 1 - \int_{0}^a  x_{e'} \cdot a_2(b, x_{e'}, s_{e'}) \, db \right) \\
&\ge \prod_{e' \in N_e \setminus \{ f \}} \left( 1 - \int_{0}^a  x_{e'} \cdot \e^{-b x_{e'}} \, db \right) \\
&= \e^{- a \left( \sum_{e' \in N_e \setminus \{ f \} } x_{e'} \right) } \\
&\ge \e^{-a d_e + ax_f}.
\end{align*}
Also, the events $\{Q(e) = \{f\} \mid f\in N_e\}$ are disjoint, since $Q(e) = \{ f\}$ implies that $f$ is the \emph{unique} realized edge incident on $e$ arriving before $t_e$. Hence, using \Cref{cor:PrBf}, we have
\begin{align*}
\Pr[ U_e \mid t_e = a ]
&\geq  \sum_{ f \in N_e } \int_{0}^a x_f \cdot a_2(b, x_f, s_f) \cdot \e^{-a d_e + a x_f} \cdot \Pr[ B_f \mid Q(e) = \{ f\}, t_e = a, t_f = b ]  \, db \\
&\geq \sum_{ f \in N_e } \int_{0}^a x_f \cdot a_2(b, x_f, s_f) \cdot \e^{-a d_e + ax_f} \cdot (d_f - m_e - 1)^+  \cdot \frac{1}{2}(1 - 2 \alpha)(1 - \e^{-2b}) \, db   \\
&=  \frac{1}{2}(1 - 2 \alpha) (1 - \alpha \cdot s_f)   \sum_{f \in N_e} x_f (d_f - m_e - 1)^+ \cdot \e^{-a d_e + ax_f} \cdot \int_0^a \e^{-b x_f} \cdot  (1 - \e^{-2b}) \, db \\
 &\ge \frac{1}{2}(1 - 2\alpha )^2 \sum_{f \in N_e} x_f (d_f - m_e - 1)^+ \cdot \e^{-a d_e + ax_f} \cdot \left( \frac{1 - \e^{-x_f a}}{x_f} - \frac{1 - \e^{-a(x_f+2)}}{x_f+2} \right). \qedhere 
\end{align*}
\end{proof}

\factBoundingIntofA*
\begin{proof}
If $x= 0$ the bound clearly holds. Otherwise, let $x\in (0, 1]$. We first directly compute that 
\begin{align*}
\int_0^1 \e^{-2a + ax}   \left( \frac{1 - \e^{-x a}}{x} - \frac{1 - \e^{-a(x+2)}}{x+2} \right)\, da &=  \frac{2 \e^2 (x^2 - 4) - (x-2)x_f - \e^4 (x+2)x + 8\e^{x+2}}{4 \e^4 x (x^2 - 4)} \\
&= \left( \frac{2 \e^2(x^2 - 4) + 8\e^{x+2}}{4 \e^4 x (x^2 - 4)} \right) - \frac{1}{4 \e^4 (x+2)} - \frac{1}{4(x-2)}
\end{align*}

Clearly $-\frac{1}{4(x-2)} \ge \frac{1}{8}$ and $-\frac{1}{4 \e^4(x+2)} \ge -\frac{1}{8 \e^4}$ for $x \in (0,1]$. Additionally, we observe that $ \frac{2 \e^2(x^2 - 4) + 8\e^{x+2}}{4 \e^4 x (x^2 - 4)} $ is a decreasing function in $x$ for $x \in (0,1]$. Hence 
$$\frac{2 \e^2(x^2 - 4) + 8\e^{x+2}}{4 \e^4 x (x^2 - 4)} \ge \lim_{x \rightarrow 0} \frac{2 \e^2(x^2 - 4) + 8\e^{x+2}}{4 \e^4 x (x^2 - 4)} = -\frac{1}{2\e^2}.$$ Hence, for $x \in (0,1]$ we have 
\begin{align*}\int_0^1 \e^{-2a + ax}   \left( \frac{1 - \e^{-x a}}{x} - \frac{1 - \e^{-a(x+2)}}{x+2} \right) \, da \ge & \frac{1}{8} - \frac{1}{8 \e^4} - \frac{1}{2 \e^2} \ge 0.055. \qedhere
\end{align*}
\end{proof}

\section{Omitted Proofs of Section \ref{sec: patience}}\label{app:omittedproofs4}

\begin{restatable}{fact}{factBoundingAPatiencea}\label{fact:factBoundingAPatiencea}
    $\int_0^1 \e^{ -y( 4 - s_e - \alpha \cdot \sum_{f \in N_e} x_f s_f)} \cdot (1+y)^2 \, dy \geq 0.382 + 0.117 \left( s_e + \alpha \cdot \sum_{f \in N_e} x_f s_f \right)$.
    \end{restatable}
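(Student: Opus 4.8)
The claim is a one-variable numerical inequality in disguise: set $c := s_e + \alpha \sum_{f \in N_e} x_f s_f$, so the left-hand side is $g(c) := \int_0^1 \e^{-(4-c) y}(1+y)^2 \, dy$ and we must show $g(c) \ge 0.382 + 0.117\,c$ for all admissible $c$. First I would identify the valid range of $c$: since $\vec x \in \calP(G)$ we have $s_e = 2 - d_e - x_e \in [x_e, 2] \subseteq [0,2]$, and $\sum_{f\in N_e} x_f s_f \le \sum_{f\in N_e} x_f \cdot 2 = 2 d_e \le 4$, with $\alpha \le 0.5$, so $c$ ranges over a bounded interval contained in, say, $[0, 4]$; for the bound to be useful only the regime near $c=0$ matters, but I would prove it on the whole relevant interval.

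The key step is to evaluate the integral in closed form. Using $\int_0^1 \e^{-\beta y}(1+y)^2\,dy$ with $\beta = 4-c$, integration by parts (twice) gives an explicit expression of the form $\frac{1}{\beta} + \frac{2}{\beta^2} + \frac{2}{\beta^3} - \e^{-\beta}\!\left(\frac{4}{\beta} + \frac{4}{\beta^2} + \frac{2}{\beta^3}\right)$, i.e. $g(c) = P(1/(4-c)) - \e^{-(4-c)} Q(1/(4-c))$ for the polynomials $P(t) = t + 2t^2 + 2t^3$, $Q(t) = 4t + 4t^2 + 2t^3$. Then I would argue that $g$ is convex on the relevant interval (each term $\e^{-\beta y}(1+y)^2$ is convex in $c$ for fixed $y \in [0,1]$, being an exponential in $c$ with positive coefficient, and convexity is preserved under integration), so it suffices to check the claimed linear lower bound at the two endpoints of the interval — or, more simply, to verify that the tangent line to $g$ at $c = 0$ lies below $g$ everywhere by convexity, and that this tangent line dominates $0.382 + 0.117 c$. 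Concretely, $g(0) = P(1/4) - \e^{-4} Q(1/4) = \tfrac14 + \tfrac18 + \tfrac{1}{32} - \e^{-4}(1 + \tfrac14 + \tfrac{1}{32}) \approx 0.4062 > 0.382$, and $g'(0) = \int_0^1 y\,\e^{-4y}(1+y)^2\,dy$, which one checks numerically is at least $0.117$; then convexity of $g$ gives $g(c) \ge g(0) + g'(0)\, c \ge 0.382 + 0.117\, c$.

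The main obstacle — really the only one — is the bookkeeping in the closed-form evaluation and the verification that the two numerical constants $g(0) \ge 0.382$ and $g'(0) \ge 0.117$ hold with the stated slack; both are routine once the antiderivatives are written down, but one must be careful with the $\e^{-4}$ tails. An alternative that avoids even computing $g'(0)$: since $g$ is convex and we only care about a bounded interval $[0, c_{\max}]$, it suffices to check the linear inequality at $c = 0$ and at $c = c_{\max}$, which reduces everything to evaluating $g$ at two points. I would present whichever is cleaner, most likely the convexity-plus-tangent-line argument, since it makes transparent why the coefficient $0.117$ was chosen (it is essentially $g'(0)$ rounded down).
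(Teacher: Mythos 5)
Your main argument is exactly the paper's: set $K=s_e+\alpha\sum_f x_fs_f$, write $g(K)=\int_0^1\e^{-(4-K)y}(1+y)^2\,dy$, check $g(0)\ge 0.382$ and $g'(0)\ge 0.117$, and use convexity of $g$ (the paper phrases this as ``$g'$ is increasing'') to get $g(K)\ge g(0)+g'(0)K$. Two small caveats. First, your numerical value for $g(0)$ is off: $\tfrac14+\tfrac18+\tfrac1{32}-\e^{-4}\bigl(1+\tfrac14+\tfrac1{32}\bigr)\approx 0.3828$, not $0.4062$ (you dropped the $\e^{-4}$ tail you warned about); the bound $g(0)\ge 0.382$ still holds, but with far less slack than you suggest. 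Second, your proposed ``alternative'' of checking the linear inequality only at the two endpoints of $[0,c_{\max}]$ is invalid for a \emph{convex} $g$: convexity puts $g$ \emph{below} its chord, so endpoint verification does not control the interior (that shortcut works for concave functions). Stick with the tangent-line version, which is correct and is what the paper does.
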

\begin{proof}
    Let $K = s_e + \alpha \cdot \sum_{f \in N_e} x_f s_f$. Since $s_e + \alpha \cdot \sum_{f \in N_e} x_f s_f \geq 0$, we need to show that for $K \geq 0$, 
    \begin{align*}
        \int_0^1 \e^{ -y( 4 - K)} \cdot (1+y)^2 \, dy \geq 0.382 + 0.117K.
    \end{align*}
    Let $g(K) = \int_0^1 \e^{ -y( 4 - K)} \cdot (1+y)^2 \, dy$. First, note that $g$ is increasing in $K$; hence, the minimum occurs at $K=0$,
    \begin{align*}
        g(0) = \int_0^1 \e^{ -y( 4 - 0)} \cdot (1+y)^2 \, dy \geq 0.382.
    \end{align*}
    Moreover, by taking derivative of $g$, we have 
    \begin{align*}
        g'(K) = \frac{(K^2 - 12K + 38) + e^{K-4}(4K^3 - 56K^2 + 266K - 430)}{(K-4)^4},
    \end{align*}
    which is an increasing function of $K$. Hence, $g'$ minimized at $K=0$ on $[0, \infty)$. Therefore, we have $g'(K) \geq g'(0) \geq 0.117$, which completes the proof. 
\end{proof}

\gmatchedpatience*
\begin{proof}
    Similar to the proof of the \Cref{lem:gmatched}, edge $g$ is matched if it is realized, $R_0(g)$ occurs, and $v_4$ has patience since we condition on $\ell_{v_3}(f) \leq \ell_{v_3} - 1$. Since, the event that $g$ is active and $R_0(g)$ are independent, we have that
    \begin{align*}
    & \Pr[ g \in \mu \mid Q(e) = \{ f \}, t_e = a, t_f = b, t_g = c, \ell_{v_3}(f) \leq \ell_{v_3} - 1] \\
    \geq & x_g \cdot a_2(c,x_g, s_g)\cdot \Pr[ R_0(g) \wedge \ell_{v_4}(g) \leq \ell_{v_4} - 1 \mid Q(e) = \{ f \}, t_e = a, t_f = b, t_g = c, \ell_{v_3}(f) \leq \ell_{v_3} - 1].
    \end{align*}
    Since arrival times and activation status are independent, we have that
    \begin{align*}
    & \Pr[ g \in \mu \mid Q(e) = \{ f \}, t_e = a, t_f = b, t_g = c, \ell_{v_3}(f) \leq \ell_{v_3} - 1] \\
    \ge & x_g \cdot a_2(c,x_g,s_g) \cdot \Pr[R_0(g) \wedge \ell_{v_4}(g) \leq \ell_{v_4} - 1 \mid t_g = c]   \\ \ge & x_g \cdot a_2(c,x_g,s_g) \cdot \Pr[R_0(g) \mid t_g = c]\cdot \Pr[\ell_{v_4}(g) \leq \ell_{v_4} - 1 \mid t_g = c] \\
    \geq & x_g\cdot \e^{-x_g c} \cdot (1 - \alpha \cdot s_g) \cdot  \e^{-(2 - 2x_g)c} \cdot \phi_{\ell_{v_4}}(c),
    \end{align*} 
    
    where $\phi_{\ell_{v_4}}(y) := \Pr[\ell_{v_4}(e) \leq \ell_{v_4} - 1]$ when $\ell_{v_4}(e) \sim \text{ Pois}(y(\ell_{v_4} - 1))$. By integrating over arrival time of $t_g \leq b$, we have that
    
    \begin{align*}
        \Pr[ g \in \mu, t_g \le t_f & \mid Q(e) = \{ f \}, t_e = a, t_f = b, \ell_{v_3}(f) \leq \ell_{v_3} - 1] \\
    \ge & x_g \cdot (1 - \alpha \cdot s_g) \cdot \int_0^b \e^{-(2 - 2x_g)c} \e^{-x_g c} \cdot \phi_{\ell_{v_4}}(c)\, dc \\
    \ge & x_g \cdot (1 - 2\alpha) \cdot \int_0^b \e^{-2c} \cdot \phi_{\ell_{v_4}}(c)\, dc.
    \end{align*}
    Based on numerical simulations, we can show that the above is minimized when $\ell_{v_4} = 2$, hence
    \begin{align*}
        \Pr[ g \in \mu, t_g \le t_f & \mid Q(e) = \{ f \}, t_e = a, t_f = b, \ell_{v_3}(f) \leq \ell_{v_3} - 1] \\
        \ge & x_g \cdot (1 - 2\alpha) \cdot \int_0^b \e^{-3c}(1 + c)\, dc \\
        = & x_g \cdot (1- 2\alpha) \cdot \frac{1}{9} \cdot (4 - (3b + 4)\e^{-3b}).\qedhere
    \end{align*}
\end{proof}

\PrBfpatience*
\begin{proof}
Note that the probability that $f$ is blocked condition on $\ell_3(f) \geq \ell_3$ is one. Moreover, conditional events $E_g:= \mathds{1}[g \in \mu, t_g\leq t_f]$ over all $g\ni v_3$ are disjoint. Thus, we have
\begin{align*}
\Pr[B_f \mid Q(e) = \{ f \}, t_e = a, t_f = b ] &\geq \Pr[B_f \mid Q(e) = \{ f \}, t_e = a, t_f = b, \ell_{v_3}(f) \leq \ell_{v_3} - 1] \\
&\ge \sum_{g\ni v_3,\, g \not \notin N_e}\Pr[ E_g\mid Q(e) = \{ f \}, t_e = a, t_f = b, \ell_{v_3}(f) \leq \ell_{v_3} - 1]\\
& \ge (d_f - m_e - 1)^+ \cdot (1- 2\alpha) \cdot \frac{1}{9} \cdot (1 - (3b + 4)\e^{-3b}),
\end{align*}
where the last inequality is followed by \Cref{lem:g-matched-patience}.
\end{proof}

\somefblocked*

\begin{proof}
    Similar to the proof of \Cref{lem: blocking-f}, the probability that for each $e' \in N_e \setminus \{f\}$ either is not realized or arrives after $t_e = a$ is at least $\e^{-a d_e + ax_f}$. Moreover, since events $\{Q(e) = f\}_f$ are disjoint, then $\Pr[ \exists f \textup{ such that } Q(e) = \{ f \} \wedge B_f \mid t_e = a]$ is at least
    \begin{align*}
 & \sum_{ f \in N_e } \left( \int_{0}^a x_f \cdot a_2(t_f,x_f,s_f)\cdot \e^{-a d_e + a x_f} \cdot \Pr[ B_f \mid Q(e) = \{ f\}, t_e = a, t_f = b ]  \, db  \right), 
\end{align*}
which by \Cref{cor:PrBf-patience} is at least,
\begin{align*}
 &\sum_{ f \in N_e } \int_{0}^a x_f \cdot a_2(t_f,x_f,s_f)\cdot \e^{-a d_e + ax_f} \cdot (d_f - m_e - 1)^+ \cdot (1 - 2\alpha)\cdot \frac{1}{9}\cdot(4 - (3b + 4)\e^{-3b}) \, db  \\
 &=   \frac{1}{9}\cdot(1 - 2\alpha) (1 - \alpha \cdot s_f)   \sum_{f \in N_e} x_f (d_f - m_e - 1)^+\cdot \e^{-a d_e + ax_f} \cdot \left( \int_0^a \e^{-b x_f} \cdot  (4 - (3b + 4)\e^{-3b}) \, db \right) \\
  &\ge \frac{1}{9}\cdot(1 - 2\alpha )^2 \sum_{f \in N_e} x_f (d_f - m_e - 1)^+\cdot \e^{-a d_e + ax_f} \cdot h_1(a,x_f). \qedhere
\end{align*}
\end{proof}

\begin{fact}\label{fact:factBoundingAPatiencec}
    $\int_0^1 \e^{ -y( 3 - s_e - \alpha \cdot \sum_{f \in N_e} x_f s_f)} \cdot (1+y) \, dy \geq 0.405 + 0.131(s_e + \alpha \cdot \sum_{f \in N_e} x_f s_f).$
\end{fact}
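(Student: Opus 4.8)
This is a one–variable analytic inequality of exactly the same form as \Cref{fact:factBoundingAPatiencea}, so the plan is to reuse that proof's template. Set $K := s_e + \alpha \sum_{f \in N_e} x_f s_f$ and note $K \ge 0$ (indeed $\alpha \ge 0$ and $s_f \ge x_f \ge 0$ since $\vec x \in \calP(G)$). It then suffices to show
\[
g(K) := \int_0^1 \e^{-y(3-K)}(1+y)\,dy \;\ge\; 0.405 + 0.131\,K \qquad\text{for all }K \ge 0.
\]

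\textbf{Reduction to two scalar checks.} First I would record that for each fixed $y \in [0,1]$ the integrand $\e^{-3y}\e^{Ky}(1+y)$ is nonnegative, increasing in $K$, and convex in $K$. Differentiating under the integral sign then gives $g'(K) = \int_0^1 y(1+y)\e^{-y(3-K)}\,dy > 0$ and $g''(K) = \int_0^1 y^2(1+y)\e^{-y(3-K)}\,dy > 0$, so $g$ is convex on all of $\R$. Convexity yields the tangent–line bound $g(K) \ge g(0) + g'(0)\,K$ for every $K \ge 0$, which reduces the whole fact to verifying the two scalar inequalities $g(0) \ge 0.405$ and $g'(0) \ge 0.131$.

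\textbf{Evaluating the constants.} Both are elementary integrals: integration by parts gives $g(0) = \int_0^1 (1+y)\e^{-3y}\,dy = \tfrac49 - \tfrac79 \e^{-3} \approx 0.4057$ and $g'(0) = \int_0^1 y(1+y)\e^{-3y}\,dy = \tfrac{5}{27} - \tfrac{29}{27}\e^{-3} \approx 0.1317$, both of which exceed the required thresholds; plugging these into the tangent–line bound closes the argument. (If one prefers to avoid appealing to convexity, one can instead write out the closed form $g(K) = \tfrac{1 - 2\e^{-(3-K)}}{3-K} + \tfrac{1 - \e^{-(3-K)}}{(3-K)^2}$ and bound it termwise as in \Cref{fact:factBoundingIntofA}, or mimic \Cref{fact:factBoundingAPatiencea} verbatim by exhibiting the explicit rational–exponential formula for $g'(K)$ and checking it is increasing; the convexity route is the shortest.)

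\textbf{Main obstacle.} There is no structural difficulty here — it is routine calculus. The only point requiring care is numerical: the right–hand–side constants $0.405$ and $0.131$ were chosen extremely close to the true values $g(0)$ and $g'(0)$, leaving slack on the order of $10^{-3}$ and $10^{-4}$ respectively, so the inequalities $g(0)\ge 0.405$ and $g'(0)\ge 0.131$ must be checked against an honest upper bound on $\e^{-3}$ (e.g.\ $\e^{-3} < 0.0498$) rather than by loose estimation.
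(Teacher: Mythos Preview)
Your proposal is correct and follows essentially the same template as the paper: set $K := s_e + \alpha \sum_f x_f s_f \ge 0$, define $g(K) = \int_0^1 \e^{-y(3-K)}(1+y)\,dy$, verify $g(0)\ge 0.405$, and then use that $g'$ is increasing (equivalently, $g$ is convex) to get $g'(K)\ge g'(0)\ge 0.131$, hence $g(K)\ge 0.405 + 0.131K$. The only cosmetic difference is that the paper writes out the closed rational--exponential form of $g'(K)$ and asserts it is increasing, whereas you establish convexity more cleanly by differentiating under the integral sign; your explicit values $g(0)=\tfrac{4}{9}-\tfrac{7}{9}\e^{-3}$ and $g'(0)=\tfrac{5}{27}-\tfrac{29}{27}\e^{-3}$ are exactly what is needed.
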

\begin{proof}
    All steps of the proof are the same as in the proof of \Cref{fact:factBoundingAPatiencea}. Let $K = s_e + \alpha \cdot \sum_{f \in N_e} x_f s_f$ and $g(K) = \int_0^1 \e^{ -y( 3 - K)} \cdot (1+y) \, dy$. We note that $g(\cdot)$ is an increasing function of $K$, so the minimum occurs at $K=0$, where
    \begin{align*}
        g(0) = \int_0^1 \e^{ -y( 3 - 0)} \cdot (1+y) \, dy \geq 0.405.
    \end{align*}
    Also, we have that 
    \begin{align*}
        g'(K) = \frac{(K-5) + e^{K-3}(2K^2 - 15K + 29)}{(K-e)^3}.
    \end{align*}
    Note that $g'$ is an increasing function of $K$ that is minimized at $K=0$ on $[0, \infty)$. Therefore, we have $g'(K) \geq g'(0) \geq 0.131$, which completes the proof. 
\end{proof}

\patienceintegral*
\begin{proof}[Proof sketch]
This fact is verified computationally by direct calculation of the integral. 
\end{proof}

\bipartiteaa*
\begin{proof}
    With the exact same analysis as in \Cref{lem:a-0-patience}, we have \Cref{eq: a0-patience-bound}. Now without loss of generality, let $v$ be the vertex such that $\ell_v = \infty$. Hence, we have $\phi_{\ell_v}(y) = 1$. Therefore,
    \begin{align*}
        \Pr[e \in \mu \wedge R_0(e)] & \geq x_e\int_0^1 \e^{ -y( d_e - \alpha \cdot \sum_{f \in N_e} x_f s_f)} \cdot \phi_{\ell_u}(y) \cdot \e^{-x_e y}\cdot (1 - \alpha \cdot s_e) \, dy.
    \end{align*}
    Based on numerical simulations, we can show that the above is minimized when $t_u = 2$, yielding the following
    \begin{align*}
        \Pr[e \in \mu \wedge R_0(e)] & \geq x_e \cdot (1 - \alpha \cdot s_e)\cdot \int_0^1 \e^{ -y( d_e - \alpha \cdot \sum_{f \in N_e} x_f s_f)} \cdot (\e^{-y} + y\e^{-y}) \cdot \e^{-x_e y} \, dy
         \\
         & = x_e\cdot(1 - \alpha \cdot s_e)\cdot \int_0^1 \e^{ -y( 3 - s_e - \alpha \cdot \sum_{f \in N_e} x_f s_f)} \cdot (1+y) \, dy,
    \end{align*}
    which can be simplified by \Cref{fact:factBoundingAPatiencec}. Therefore, we have that 
    \begin{align*}
    \Pr[e \in \mu \wedge R_0(e)] &\geq (1 - \alpha \cdot s_e) \left(0.405 + 0.131 \left( s_e + \alpha \cdot \sum_{f \in N_e} x_f s_f \right) \right) \cdot x_e. \qedhere
    \end{align*}
\end{proof}

\begin{fact}\label{fact:patience-integral2}
    Let $h_1(a,x_f) = \int_0^a \e^{-b x_f} \cdot  (4 - (3b + 4)\e^{-3b}) \, db$. Then we have
    $$\int_0^1 \e^{-3a + ax_f}\cdot h_1(a,x_f)\cdot (1+a) \, da\geq 0.209.$$
\end{fact}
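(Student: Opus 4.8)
The plan is to reduce this two-variable bound to a one-variable inequality in $x_f\in[0,1]$ and then exploit monotonicity. The key first step is to absorb the factor $\e^{ax_f}$ into the inner integral defining $h_1$: expanding $h_1$, we have
\[
\e^{-3a+ax_f}\cdot h_1(a,x_f)\cdot(1+a) \;=\; (1+a)\,\e^{-3a}\int_0^a \e^{(a-b)x_f}\bigl(4-(3b+4)\e^{-3b}\bigr)\,db .
\]
I would then observe two sign facts. First, $4-(3b+4)\e^{-3b}\ge 0$ for all $b\ge 0$: it vanishes at $b=0$ and is nondecreasing there, since $\tfrac{d}{db}\bigl[(3b+4)\e^{-3b}\bigr]=-9(b+1)\e^{-3b}\le 0$. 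Second, in the inner integral $a-b\ge 0$, so $\e^{(a-b)x_f}$ is nondecreasing in $x_f$. Hence the whole integrand is nondecreasing in $x_f$ for each fixed $a$, and therefore $G(x_f):=\int_0^1 \e^{-3a+ax_f}h_1(a,x_f)(1+a)\,da$ is nondecreasing in $x_f$ on $[0,1]$. It then suffices to prove the bound at $x_f=0$, i.e.\ that $G(0)\ge 0.209$.

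Next I would compute $G(0)$ in closed form. At $x_f=0$, integration by parts (with $u=3b+4$, $dv=\e^{-3b}\,db$) gives $\int_0^a(3b+4)\e^{-3b}\,db=\tfrac53-\tfrac13(3a+5)\e^{-3a}$, so $h_1(a,0)=4a-\tfrac53+\tfrac13(3a+5)\e^{-3a}$. Substituting, $G(0)=\int_0^1(1+a)\e^{-3a}\bigl(4a-\tfrac53+\tfrac13(3a+5)\e^{-3a}\bigr)\,da$, which is a sum of integrals of the form $\int_0^1 p(a)\e^{-3a}\,da$ and $\int_0^1 q(a)\e^{-6a}\,da$ for low-degree polynomials $p,q$. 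Each has an elementary antiderivative, and collecting terms produces an explicit value in $\e^{-3}$ and $\e^{-6}$; a direct numerical evaluation confirms it exceeds $0.209$ (it is $\approx 0.210$).

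The genuinely nonroutine part is the opening rewriting that exposes the monotonicity of the integrand in $x_f$; once the reduction to $x_f=0$ is made, the remainder is an elementary-but-tedious constant evaluation best discharged by direct computation, exactly as for \Cref{fact:patience-integral}. The only care needed is to verify the sign conditions ($a-b\ge 0$ on the relevant range and $4-(3b+4)\e^{-3b}\ge 0$) so that the monotonicity conclusion, and hence the choice of $x_f=0$ as the worst case, is justified.
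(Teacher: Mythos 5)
Your proof is correct, and it is genuinely different from (and stronger than) what the paper provides: the paper's ``proof'' of this fact is only a sketch stating that the inequality ``is verified computationally by direct calculation of the integral,'' with no argument for how the dependence on $x_f$ is handled. Your monotonicity reduction fills exactly that gap. The two sign checks are both right: $4-(3b+4)\e^{-3b}$ vanishes at $b=0$ and has derivative $9(b+1)\e^{-3b}\ge 0$, so it is nonnegative on $[0,a]$; and $a-b\ge 0$ on the inner range, so $\e^{(a-b)x_f}$ is nondecreasing in $x_f$. Multiplying a nonnegative function by a factor nondecreasing in $x_f$ and integrating preserves monotonicity, so $G(x_f)\ge G(0)$ for $x_f\in[0,1]$ and the worst case is legitimately $x_f=0$. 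Your closed form $h_1(a,0)=4a-\tfrac53+\tfrac13(3a+5)\e^{-3a}$ is also correct. The one thing to be careful about is the final constant: evaluating $G(0)=\int_0^1(1+a)\e^{-3a}\bigl(4a-\tfrac53\bigr)\,da+\tfrac13\int_0^1(3a^2+8a+5)\e^{-6a}\,da$ exactly gives approximately $-0.14936+0.35856\approx 0.20920$, not $0.210$; the inequality $G(0)\ge 0.209$ still holds, but only by about $2\times 10^{-4}$, so the exact closed-form evaluation (in terms of $\e^{-3}$ and $\e^{-6}$) should be carried out carefully rather than rounded early. What your approach buys over the paper's is a fully rigorous proof: a pointwise-in-$x_f$ claim is replaced by a single explicit number, whereas the paper's computational verification would in principle need to sweep over all $x_f\in[0,1]$.
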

\begin{proof}[Proof sketch]
This fact is verified computationally by direct calculation of the integral. 
\end{proof}

\bipartiteab*
\begin{proof}
With exact same analysis of \Cref{lem:a1-lower-bound}, we have \Cref{eq: a1-patience-bound}. Without a loss of generality, let $v$ be the vertex such that $\ell_v=\infty$. Thus $\phi_{\ell_v}(a) = 1$, yielding the following
\begin{align*}
     \Pr[e \in \mu \wedge R_1(e)] & \ge x_e\cdot \frac{1}{9}\cdot(1 - 2\alpha )^2 (1 - \alpha \cdot s_e)\\ &\cdot \int_0^1 \e^{-a x_e}  \sum_{f \in N_e} x_f (d_f - 1)^+\cdot \e^{-a d_e + ax_f} \cdot h_1(a,x_f) \cdot \phi_{\ell_u}(a)\, da,
\end{align*}
 where $h_1(a,x_f) = \int_0^a \e^{-b x_f} \cdot  (4 - (3b + 4)\e^{-3b}) \, db$. Based on numerical simulations, we can show that the above is minimized when $t_u = 2$, yielding the following
\begin{align*}
        & \Pr[e \in \mu \wedge R_1(e)] \\
      &\ge x_e \cdot \frac{1}{9}(1 - 2\alpha )^2 (1 - \alpha \cdot s_e)  \sum_{f \in N_e} x_f (d_f - 1)^+ \int_0^1 \e^{-a x_e}  \e^{-a d_e + ax_f} \cdot h_1(a, x_f) \cdot (e^{-a} + ae^{-a}) \, da \\
      &\ge x_e \cdot \frac{1}{9}(1 - 2\alpha )^2 (1 - \alpha \cdot s_e)  \sum_{f \in N_e} x_f (d_f - 1)^+ \int_0^1 \e^{-3a + ax_f}\cdot h_1(a,x_f)\cdot (1+a) \, da \\
      & \ge x_e \cdot \frac{1}{9}(1 - 2\alpha )^2 (1 - \alpha \cdot s_e)  \sum_{f \in N_e} x_f (d_f - 1)^+ \cdot 0.209.
\end{align*}
In the last inequality, we used \Cref{fact:patience-integral2} to simplify the integral. Writing this in terms of $s_f = 2 - d_f - x_f$, we have 
\begin{align*}
    \Pr[e \in \mu \wedge R_1(e)] &\ge \left( (1-  2\alpha)^2 (1 - \alpha \cdot s_e)  \cdot \sum_{f \in N_e}  x_f (1 - x_f - s_f)^+ \cdot 0.023  \right) \cdot x_e. \qedhere
\end{align*}    
\end{proof}

\section{Ruling Out Natural Algorithms} \label{app:rulingout}

\subsection{Greedy According to Decreasing Weights}
If there is only one weight for each edge, then by sorting the edges based on their weights and querying in that order, we can achieve a half approximation. However, this greedy algorithm achieves \textbf{no constant approximation} if we have multiple weights for one edge. Consider a single edge with two different weights, one and two, with probabilities one and $\eps$, respectively. The greedy algorithm will gets a matching with expected weight of $2\eps$, however, the optimal algorithm solution has expected size of one.

\subsection{Greedy According to Decreasing Expected Weights}
One can modify this greedy algorithm in the previous subsection to query edges based on decreasing order of $w\cdot p_{ew}$. But this modified version achieves \textbf{no constant approximation}; Consider a star with edges $e_0, e_1, \ldots, e_k$ with a single weight per edge such that $w_{e_0} = 1 + \eps$ and $w_{e_i} = N^i$ for $ 1 \leq i \leq k$. Also, we have that $p_{e_i} =1/ N^i$. The modified greedy algorithm will get a matching with expected size of $1+\eps$ (picks $e_0$) but an algorithm that asks edges in the order of $(e_k, e_{k-1}, \ldots, e_1, e_0)$ will get a matching with expected size of roughly equal to $k$. Since we can make $k$ as large as we want, then the modified greedy is not a constant approximation.

\subsection{Trivial Attenuation is $1/3$-balanced}
In this section we consider the simplest instantiation of \Cref{alg:main}: the one with no attenuation, i.e., with $a(e)=1$ for all $e\in E$. Bruggmann and Zenklusen \cite{bruggmann2020optimal} show that this algorithm is $\frac{1}{3}$-balanced. The following simpler analysis recreates the same bound.
\begin{lem}\label{lem:warmup}
    Denote by $\Pr[e \textup{ free}]$ the probability that $e=(i,j)$ is free when inspected. Then, 
    $$\Pr[e \textup{ free}] \geq \frac{1}{3}.$$
\end{lem}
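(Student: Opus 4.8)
The plan is to bound $\Pr[e \text{ free}]$ from below by the probability that no edge incident on $e$ is matched before $e$ is inspected, and then to use a union-bound / linearity argument over the neighborhood $N_e$ together with the matching constraint $\sum_{f \in N_e} x_f \le 2$. The key simplification with trivial attenuation ($a(e)=1$) is that every active edge, if free on arrival, is matched; so the only way for an endpoint of $e$ to become blocked is for some active edge $f \in N_e$ to arrive before $t_e$ and be free at its own arrival time.

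First I would condition on the arrival time $t_e = y$ of $e$. For a fixed incident edge $f \in N_e$, I want to upper bound the probability that $f$ is matched before time $y$. Rather than track the complicated event that $f$ is free, I would just bound this by $\Pr[f \text{ active} \wedge t_f < y] = x_f \cdot y$ — but this crude bound only gives $\sum_f x_f y \le 2y$, which integrates to $1$ and yields nothing. So the real content is to get a better bound on $\Pr[f \in \mu \text{ before } y]$ by recursively using that $f$ itself is blocked with decent probability. Concretely, I expect the clean route is: let $q(y)$ be a lower bound on $\Pr[f \text{ free at its arrival} \mid t_f = z]$-type quantity, set up the self-referential inequality that $e$ is free at time $y$ with probability at least $\prod_{f \in N_e}(1 - \int_0^y x_f q(z)\,dz)$, and look for a fixed-point / uniform bound. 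A standard trick here (as in the Lee–Singla-style arguments) is to posit that $\Pr[e \text{ free} \mid t_e = y] \ge g(y)$ for a suitable decreasing function $g$, plug into the recursion using $\sum_{f \in N_e} x_f \le 2$, and verify the resulting inequality $g(y) \le \prod_f (1 - \int_0^y x_f g(z)\,dz)$ holds; the worst case is concentrating the degree-$2$ budget, giving $g(y) \le (1 - \int_0^y g(z)\,dz)^2$ up to the concavity step, and one checks $g(y) = 1/(1+y)$ works, so $\Pr[e \text{ free}] \ge \int_0^1 \frac{1}{1+y}\,dy = \ln 2 \approx 0.693$ — wait, that overshoots $1/3$, so the actual bound must lose more: the point is that $f$ being matched "before $y$" is only bounded by $f$ active and arriving before $y$ unconditionally, i.e. one cannot assume $f$ is free, so the honest recursion is $\Pr[e\text{ free}\mid t_e=y] \ge \prod_{f}(1 - x_f y)$ which by AM–GM / the constraint $\sum x_f \le 2$ is at least $(1-y)^2$ in the worst two-edge case, integrating to $\int_0^1 (1-y)^2 dy = 1/3$.

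So the key steps, in order, are: (1) condition on $t_e = y$ and reduce $\Pr[e\text{ free}]$ to the probability that no active incident edge arrives before $y$ (using $a(e)=1$ so active-and-free $\Rightarrow$ matched, but crucially only needing the one-sided direction that an incident match requires an incident active arrival); (2) write $\Pr[e\text{ free}\mid t_e=y] \ge \prod_{f\in N_e}(1 - x_f y)$ by independence of arrival times and activity; (3) apply the inequality $\prod_i (1-a_i) \ge 1 - \sum_i a_i$ or, better, the fact that subject to $\sum_f x_f \le 2$ the product is minimized by concentrating on two edges, giving $\ge (1-y)^2$ (or $\ge \max(0, 1-2y)$ — I should check which is the right worst case and whether the constraint $x_f \le 1$ matters); (4) integrate $\int_0^1 (1-y)^2\,dy = 1/3$. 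The main obstacle I anticipate is step (3): pinning down precisely which distribution of the fractional degree over $N_e$ minimizes $\int_0^1 \prod_f(1-x_f y)\,dy$ — one has to argue a convexity/exchange argument showing mass concentrates (and handle the boundary $x_f = 1$), which is where the analysis in \cite{bruggmann2020optimal} presumably does the work; everything else is routine conditioning and a one-line integral.

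\begin{proof}[Proof sketch of the plan]
Condition on the arrival time $t_e = y$. With $a(e) = 1$, an edge is matched exactly when it is active and free upon arrival; in particular an endpoint of $e$ can only become blocked before time $y$ if some edge $f \in N_e$ is active and has $t_f < y$. Hence, using independence of the arrival times $\{t_f\}$ and the activity indicators,
\[
\Pr[e \text{ free} \mid t_e = y] \;\ge\; \prod_{f \in N_e}\bigl(1 - x_f\, y\bigr).
\]
Subject to $x \in \calP(G)$, which forces $\sum_{f \in N_e} x_f \le 2$ and $x_f \le 1$, the right-hand side is minimized by concentrating the budget of $2$ on two edges of value $1$, giving $\Pr[e\text{ free}\mid t_e=y] \ge (1-y)^2$ (for $y \le 1$). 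Integrating over $t_e \sim \text{Unif}[0,1]$,
\[
\Pr[e \text{ free}] \;\ge\; \int_0^1 (1-y)^2 \, dy \;=\; \frac{1}{3}.
\]
The only nontrivial step is the minimization over the distribution of $\{x_f\}_{f \in N_e}$, which is a short convexity/exchange argument.
\end{proof}
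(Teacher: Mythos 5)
Your proposal is correct and follows essentially the same route as the paper: condition on $t_e=y$, lower bound the probability that $e$ is free by the probability that no active neighbor arrives earlier, namely $\prod_{f\in N_e}(1-x_f\,y)$, reduce this product to $(1-y)^2$, and integrate to get $\nicefrac{1}{3}$. The only difference is the reduction step: where you invoke an extreme-point/exchange argument to show the product is minimized by concentrating the fractional degree on two edges of value one, the paper simply applies the elementary inequality $\prod_i(1-R_i)\geq 1-\sum_i R_i$ separately to the edges at each endpoint of $e$ (using $\sum_{f\ni v,\,f\neq e}x_f\leq 1-x_e$), which yields $\prod_{f\in N_e}(1-z\,x_f)\geq (1-z(1-x_e))^2\geq(1-z)^2$ with no optimization over $\{x_f\}$ needed.
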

\begin{proof}
    For this instantiation of \Cref{alg:main}, each edge $e$ is matched iff it is free (at time $t_e)$. And indeed, we have the following.
    \begin{align*}
        \Pr[e \textup{ free}] & \geq \Pr[R_0(e)]
         = \int_{0}^1 \prod_{f\in N_e}(1-z\cdot x_f) \,\,dz \geq \int_0^1 (1-z\cdot (1-x_e))^2\,\,dz 
        \geq \int_0^1 (1-z)^2\,\,dz = \frac{1}{3},
    \end{align*}
    where the second inequality relied on \Cref{fact:warmup} below, combined with the matching constraint $\sum_{f\ni v}x_f\leq 1-x_e$ for both vertices $v\in e$. The third inequality follows from non-negativity of $x_e.$
\end{proof}

The following fact is easily proven by induction.
\begin{fact}\label{fact:warmup}
    Let $R_1,R_2,\dots,R_n\geq 0$. Then $\prod_{i=1}^n (1-R_i) \geq 1-\sum_{i=1}^n R_i$. 
\end{fact}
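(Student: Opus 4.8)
The plan is the obvious one, and it matches the parenthetical hint: induct on $n$. The base case $n=1$ is the tautology $1-R_1 \ge 1-R_1$. For the inductive step I would split off the last factor,
\[
\prod_{i=1}^{n}(1-R_i) \;=\; (1-R_n)\cdot\prod_{i=1}^{n-1}(1-R_i),
\]
invoke the inductive hypothesis $\prod_{i=1}^{n-1}(1-R_i)\ge 1-\sum_{i=1}^{n-1}R_i$ on the right-hand product, and then simply expand: multiplying the hypothesis through by $1-R_n$ produces $1-\sum_{i=1}^{n}R_i + R_n\sum_{i=1}^{n-1}R_i$, and the cross term $R_n\sum_{i=1}^{n-1}R_i$ is nonnegative because every $R_i\ge 0$. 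This yields $\prod_{i=1}^{n}(1-R_i)\ge 1-\sum_{i=1}^{n}R_i$ and closes the induction.

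The only step that needs a moment's care — and the nearest thing to an obstacle — is the multiplication by $1-R_n$: this preserves the direction of the inductive inequality only when $1-R_n\ge 0$. In every use the fact is put to here, each $R_i$ has the form $z\cdot x_f$ with $z\in[0,1]$ a sampled arrival time and $x_f\le 1$ a matching-polytope coordinate, so $R_i\in[0,1]$ and the issue does not arise; accordingly I would either read the fact with the (harmless) extra hypothesis $R_i\le 1$, or, to keep it literally stated for all $R_i\ge 0$, dispose of the case $R_n>1$ as a trivial side case. Beyond this bit of sign bookkeeping there is no real difficulty: the inductive step is two lines, and no auxiliary lemma from earlier in the paper is needed.
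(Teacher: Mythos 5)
Your induction is exactly the argument the paper has in mind (the paper gives no proof beyond saying the fact is ``easily proven by induction''), and with the observation that every application has $R_i = z\cdot x_f \in [0,1]$, your proof is complete and correct. One correction to your closing remark, however: the case $R_n>1$ cannot be ``disposed of as a trivial side case,'' because the fact as literally stated is \emph{false} for general $R_i\ge 0$. For instance, with $R_1=R_2=R_3=4$ one gets $\prod_{i=1}^{3}(1-R_i)=(-3)^3=-27$, while $1-\sum_{i=1}^{3}R_i=-11$, so the claimed inequality fails; more generally it fails whenever an odd number ($\ge 3$) of the $R_i$ exceed $1$ by enough. So the only viable repair is the first one you propose: read the fact with the extra hypothesis $R_i\le 1$ (which is satisfied in every use, since $R_i=z\cdot x_f$ with $z\in[0,1]$ and $x_f\le 1$), under which your two-line inductive step goes through verbatim.
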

The constant $1/3$ is tight for the algorithm without attenuation (see \cite[Figure 6]{bruggmann2020optimal}).

\section{Reduction to Single Weight without Patience with a $\nicefrac{1}{2}$-Factor Loss}
In this section, we show that after solving \ref{LP-main} (with $\ell_v = \infty$ for all $v$), we can incur a multiplicative loss of no worse than $1/2$ and reduce the problem to the case where for each edge $e$, there is only one weight $w$ where $y_{ew} > 0$.

We first prove that there exists an optimal solution to our LP where each edge's $y$-values are supported on at most two weights. 

\begin{lem}
    There exists an optimal solution $\vec{y}'$ to our LP where for any edge $e$, there are at most two different weights $w_1$ and $w_2$ such that $y'_{ew_1} > 0$ and $y'_{ew_2} > 0$.
\end{lem}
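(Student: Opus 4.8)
The statement is a standard ``support reduction'' for a linear program with few constraints. The plan is to argue via basic feasible solutions. Consider the polytope defined by \eqref{eqn:single-probe}, \eqref{eqn:single-success-LP} and \eqref{eqn:positivity} (with $\ell_v = \infty$, so \eqref{eqn:patience-LP2} is vacuous), together with the LP objective. An optimal vertex solution $\vec{y}'$ exists since the LP is bounded (by \Cref{LP>=OPT}) and feasible. At such a vertex, the number of tight linearly independent constraints equals the number of variables $\sum_e |\{w : y'_{ew} > 0\}| + |\{(e,w) : y'_{ew} = 0\}|$; equivalently, the number of strictly positive variables is at most the number of non-positivity constraints that are tight, which are the constraints \eqref{eqn:single-probe} and \eqref{eqn:single-success-LP}.

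\textbf{Key steps.} First, I would restrict attention to a single edge $e$ and the variables $\{y_{ew}\}_w$ supported at $e$, and count: the only constraints (other than non-negativity) that ``touch'' these variables are the one copy of \eqref{eqn:single-probe} for $e$ and the two copies of \eqref{eqn:single-success-LP} for the two endpoints of $e$. Second, I would make the counting argument local: suppose for contradiction that $\vec{y}'$ is an optimal vertex with some edge $e$ having three positive weights $w_1, w_2, w_3$. Form a perturbation supported only on $\{y_{ew_1}, y_{ew_2}, y_{ew_3}\}$; this is a $3$-dimensional space, and the constraints \eqref{eqn:single-probe} for $e$ plus the two relevant copies of \eqref{eqn:single-success-LP} impose at most $3$ linear conditions — but two of the three ``effective'' directions of those constraints coincide in the relevant coordinates, or more carefully, a direct dimension count shows there is a nonzero direction $\vec{\delta}$ in this $3$-dimensional space that keeps \eqref{eqn:single-probe} and both copies of \eqref{eqn:single-success-LP} satisfied with equality-preservation (i.e. $\sum_i \delta_{w_i} = 0$ and $\sum_i \delta_{w_i} p_{ew_i} = 0$, two equations in three unknowns). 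Moving along $\pm\vec\delta$ keeps feasibility for small steps (by strict positivity of the three coordinates) and changes the objective linearly; either the objective strictly improves in one direction (contradicting optimality) or it is flat, in which case we move until some $y'_{ew_i}$ hits $0$, strictly reducing the support without decreasing the objective. Iterating over all edges yields an optimal solution with support $\le 2$ per edge.

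\textbf{Main obstacle.} The routine part is the dimension count; the only subtlety to get right is that the perturbation $\vec\delta$ must be chosen to lie in the kernel of \emph{both} linear functionals $(\delta \mapsto \sum_i \delta_{w_i})$ and $(\delta \mapsto \sum_i \delta_{w_i} p_{ew_i})$ restricted to the three coordinates at $e$ — a $2\times 3$ system always has a nonzero solution — and that, since the three coordinates are strictly positive in $\vec y'$, both $\vec y' + t\vec\delta$ and $\vec y' - t\vec\delta$ remain feasible for sufficiently small $t>0$ (the endpoints' constraints \eqref{eqn:single-success-LP} for all \emph{other} incident edges are unaffected because $\vec\delta$ is supported only on edge $e$, and $\sum_i \delta_{w_i} p_{ew_i}=0$ means the endpoint load contributed by $e$ is unchanged). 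The flat-objective case requires the standard ``walk to the boundary'' argument to actually kill a coordinate; I expect this, rather than any inequality manipulation, to be the one place needing a careful sentence. Once the two-weight structure is established, the subsequent (unstated) claim presumably rounds each such pair to a single weight at a factor-$\tfrac12$ loss, but that is outside the scope of this particular lemma.
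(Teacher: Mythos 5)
Your proposal is correct and is essentially the paper's argument: the paper also localizes to the variables $\{y_{ew}\}_w$ of a single edge, observes that only two non-trivial constraints ($\sum_w y_{ew}\le 1$ and $\sum_w y_{ew}p_{ew}\le x_e$) are relevant, and concludes that an extreme point of that local LP has support at most two. Your explicit kernel-perturbation/walk-to-the-boundary step is just the unpacked version of the paper's appeal to extreme points, so there is no substantive difference.
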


\begin{proof}
    Let $\vec{y}^*$ denote an optimal solution to our LP, and recall our notation $x_{e} = \sum_w y^*_{ew}\cdot p_{ew}$. For a fixed edge $e$, consider the following LP for variables $\{ y_{ew} \}_w$:
    
\begin{align*}
    \max & \sum_{w} y_{ew}\cdot p_{ew}\cdot (v_j - w) \\
    \textrm{s.t.} \hspace{1em} & \sum_{w} y_{ew}  \leq 1 \\
    & \sum_{w} y_{ew}\cdot p_{ew}  \leq x_e \\
    & y_{ew} \geq 0.
\end{align*}

Any linear program in standard form with maximization objective function has an optimal solution at an extreme point. Since we have only two non-trivial constraints in the above LP, any extreme point has at most two non-zero values. Let $\vec{y}'$ denote this solution; note furthermore that replacing $y^*_{ew}$ with $y'_{ew}$ for all $w$ in our original LP results in a solution that is still feasible and has no loss in objective. \end{proof}

Now for each edge $e$, let $W(e)$ denote the weight $w$ that maximizes $y_{ew}\cdot p_{ew} \cdot (v_j - w)$. Since there exists an optimal solution $\vec{y}'$ with only two non-zero weights for each edge, we have that

\begin{align*}
     y_{eW(e)}\cdot p_{eW(e)} \cdot W(e) \geq \frac{1}{2}\cdot \sum_{w} y_{ew}\cdot p_{ew} \cdot (v_j - w)
\end{align*}

Therefore, any $\alpha$-approximate algorithm for LP solutions with a single weight on each edge, results in an $\alpha/2$-approximation algorithm for our original problem.

\section{Random Vertex Arrival}\label{sec:ro-vertex}
In the random vertex arrival setting, vertices of one side of the bipartite graph are present at the beginning. Vertices of the other side arrive one by one in random order with their incident edges. In this setting, after a vertex arrives, an online algorithm can probe incident edges to build a matching. Note that the algorithm is only allowed to probe feasible edges that do not violate matching constraints.

Similar to \Cref{sec:warm-up-RO-edge-arrival}, we let $\{ x_e \}_{e \in E}$ denote the realization probabilities in the random vertex arrival. Note that in this setting, we cannot sort edges based on a random time $t_e \sim \text{Unif}[0,1]$ for each edge because edges that are incident on a specific online vertex arrive together. Therefore, we use a random time $t_u \sim \text{Unif}[0,1]$ for each online vertex $u$ and we sort edges based on the tuple $(t_u, t_e)$ where $u$ is the online endpoint of edge $e$.

\begin{algorithm}
\SetAlgoLined
        \textbf{for} each edge $e$, sample a random time $t_e, t_u \sim \text{Unif}[0,1]$ for every edge $e$ and every online vertex $u$ 
        
        \For{each edge $e$, in increasing order of $(t_u, t_e)$ where $u$ is its online endpoint}{
             \If{$e$ is active and is free}{
    		        \textbf{with probability} $s(x_e, t_e, t_u)$, match $e$.
	        }
	  }
    \caption{}
    \label{alg:vanilla-vertex-arrival}
\end{algorithm}

We use the same attenuation function $e^{-x_e t_e}$ as \Cref{sec:warm-up-RO-edge-arrival}. Note that this attenuation function is not dependent on $t_u$.

\begin{lem}
    When running \Cref{alg:vanilla-vertex-arrival} with $s(x_e, t_e, t_u) = \e^{-t_e x_e}$, the probability we sell along edge $e$ is at least $ (1-1/\e)^2 \ge 0.399 \cdot x_e$. 
\end{lem}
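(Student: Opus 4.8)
The plan is to mimic the warm-up analysis of \Cref{lem:analysisOfAlg2}, but now conditioning on the arrival tuple $(t_u,t_e)$ of the online endpoint $u$ of $e=(uv)$, and to exploit the fact that contention at the offline vertex $v$ and contention at the online vertex $u$ decouple cleanly. First I would condition on $t_u = s$ and $t_e = t$. Edge $e$ is free at the moment it is considered if (i) no edge incident on the offline vertex $v$ has been realized and processed earlier, and (ii) no edge incident on the online vertex $u$ other than $e$ has been realized and processed earlier. These two events are independent once we condition on $(t_u,t_e)=(s,t)$, since they involve disjoint sets of edges with independent arrival times and independent activity.

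For the offline side, an edge $f\ni v$ with online endpoint $w$ precedes $e$ iff $t_w < s$, or $t_w = s$ (measure zero) — so effectively iff its online vertex arrived before time $s$; conditioned on that, $f$ is processed at some time, is active with probability $x_f$, and passes the attenuation test with probability $\e^{-x_f t_f}$. Integrating out $t_w\in[0,s]$ and $t_f\in[0,1]$, the probability $f$ has been realized and processed before $e$ is $s\cdot\int_0^1 x_f\e^{-x_f t_f}\,dt_f = s(1-\e^{-x_f})$. Hence, using $1-a\ge \e^{-a/(1-a)}$... more simply, using \Cref{fact:concavity}-type concavity and $\sum_{f\ni v,\,f\ne e}x_f\le 1-x_e$, the probability that $v$ is still free is at least $\prod_{f\ni v,f\ne e}\bigl(1-s(1-\e^{-x_f})\bigr)\ge \bigl(1-s(1-\e^{-(1-x_e)})\bigr)\ge 1-s(1-\e^{-1})$ after dropping $x_e\ge 0$, i.e. at least $1-s(1-1/\e)$. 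For the online side, an edge $g=(uw')\ne e$ precedes $e$ iff $t_{u}=s$ (the same online vertex) and $t_g<t$; conditioned on arriving in the window, $g$ is realized with probability $x_g\e^{-x_g t_g}$, so the probability $g$ has been realized before $e$ is $\int_0^t x_g\e^{-x_g z}\,dz = 1-\e^{-x_g t}$, and by the same concavity argument the probability $u$ is still free from its own earlier edges is at least $\prod_{g\ni u, g\ne e}\e^{-x_g t}=\e^{-t\sum_{g\ni u,g\ne e}x_g}\ge \e^{-t(1-x_e)}\ge \e^{-t}$.

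Putting the two independent contributions together with the attenuation factor $\e^{-x_e t}$ for $e$ itself, the probability $e$ is matched is at least
\begin{align*}
x_e\cdot \int_0^1\!\!\int_0^1 \bigl(1-s(1-1/\e)\bigr)\cdot \e^{-t}\cdot \e^{-x_e t}\,dt\,ds
\;\ge\; x_e\cdot\left(\int_0^1 \bigl(1-s(1-1/\e)\bigr)\,ds\right)\!\left(\int_0^1 \e^{-t}\,dt\right),
\end{align*}
where I dropped the $\e^{-x_e t}\ge$ nothing — actually I keep $\e^{-x_e t}\le 1$ going the wrong way, so instead I use $\e^{-t}\e^{-x_e t}\ge \e^{-2t}$ only if needed; the cleaner route is to bound the online factor as $\e^{-t(2-x_e)}\le$... let me just observe the honest computation gives $\int_0^1(1-s(1-1/\e))\,ds = 1-\tfrac12(1-1/\e)=\tfrac12(1+1/\e)$ and $\int_0^1 \e^{-t}\e^{-x_et}\,dt\ge \int_0^1\e^{-2t}\,dt$ is too lossy, so I instead keep the online-vertex contention exponent as $t(1-x_e)$ together with $e$'s own $t x_e$, giving total $\e^{-t}$ exactly, and the offline bound contributes the factor $(1-1/\e)$ after... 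The honest target is $(1-1/\e)^2$: the offline free-probability integrates to something at least $1-1/\e$ only after also accounting for the geometric-mean trick, so the main obstacle is getting both sides to each contribute exactly a factor of $(1-1/\e)$ without slack — this requires the offline-side product bound to be pushed through the inequality $1-s(1-\e^{-a})\ge \e^{-s a}$ (valid since $1-s(1-\e^{-a})\ge \e^{-s(1-\e^{-a})}\ge$ ... actually one wants $1-s(1-\e^{-a})\ge \e^{-sa}$, which holds because $\e^{-sa}\le 1-sa+\tfrac{(sa)^2}{2}$ and $1-s(1-\e^{-a})-\e^{-sa}\ge 0$ can be checked), so that the offline free-probability is at least $\e^{-s\sum_f x_f}\ge \e^{-s(1-x_e)}$, and then $\int_0^1\e^{-s(1-x_e)}\,ds\cdot\int_0^1\e^{-t(1-x_e)}\e^{-tx_e}\,dt$ — hmm, this still mixes $x_e$. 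The cleanest correct statement, which I would verify carefully, is that the offline contribution integrates to at least $\int_0^1 \e^{-s}\,ds=1-1/\e$ (taking the worst case $x_e\to 0$, $\sum_f x_f\to 1$) and the online-plus-self contribution integrates to at least $\int_0^1\e^{-t}\,dt = 1-1/\e$, for a product of $(1-1/\e)^2$.

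The main obstacle I anticipate is exactly this bookkeeping of where the $x_e$ terms go: I must ensure that the $\e^{-x_e t}$ self-attenuation factor is absorbed together with the online-vertex contention (whose exponent is $t\sum_{g\ni u, g\ne e}x_g \le t(1-x_e)$) to yield a clean $\e^{-t}$, while the offline-vertex contention (exponent $s\sum_{f\ni v,f\ne e}x_f\le s(1-x_e)$, which I must convert from the product form $\prod(1-s(1-\e^{-x_f}))$ via a concavity/exponential inequality) yields $\e^{-s(1-x_e)}$, and then argue the resulting bilinear expression in $x_e$ is minimized at $x_e=0$. If the exponential inequality $1-s(1-\e^{-a})\ge\e^{-sa}$ does not hold uniformly (it fails for large $a$), I would instead keep the product form and use $\prod_f(1-s(1-\e^{-x_f}))\ge 1-s\sum_f(1-\e^{-x_f})$ or bound each small $x_f$ directly; handling the degenerate cases of the attenuation-versus-linear comparison is the one genuinely delicate point, everything else is routine integration.
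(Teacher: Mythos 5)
Your proposal is correct and follows essentially the same route as the paper: decouple the offline-vertex contention (an edge $g\ni v$ blocks with probability $s(1-\e^{-x_g})$, bounded below via the concavity inequality $1-s(1-\e^{-a})\ge \e^{-sa}$, which does hold here since $a=x_g\in[0,1]$ — your worry about "large $a$" is moot) from the online-vertex contention plus self-attenuation (exponent $t(1-x_e)+tx_e=t$), then integrate each factor to $1-1/\e$. Despite the false starts in your bookkeeping, the final computation you settle on is exactly the paper's.
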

\begin{proof}
Let $N_u$ be the set of edges incident on vertex $u$. Let $e = (u,v)$ where $u$ is online endpoint and $v$ is the offline endpoint. There are two types of edges that can block $e$: 1) $f \in N_u \setminus \{e\}$ such that $t_f < t_f$. 2) $g = (w, v) \in N_v \setminus \{e\}$ such that $t_w < t_u$.  Hence, if we condition on edge $e = (u, v)$ with $t_e=y_1$ and online endpoint arrival time $t_u = y_2$, we have the following inequality

\begin{align}\label{eq: cond-free-vertex}
\Pr[e \textrm{ free} \mid t_e = y_1, t_u = y_2]  \geq  \prod_{f\in N_u} \left( 1 - \int_0^{y_1} x_f \cdot \e^{- x_f z} \, dz \right) \prod_{g\in N_v} \left( 1 - y_2\cdot\int_0^{1} x_g \cdot \e^{- x_g z} \, dz \right).
\end{align}

The reason is that if an $f = (u, v')$ blocks $e$, it must arrive before $t_e$, edge $f$ must survive the attenuation step, and $f$ must be successful after the proposal. Moreover, for an edge $g = (u', v)$ to block $e$, online endpoint $u'$ must arrive before $t_u$ (note that $t_g$ does not play a role on whether $g$ arrives before $e$), it must survive the attenuation step, and $g$ must be succseful after the proposal. \Cref{eq: cond-free-vertex} follows by the fact that all these events are independent. Then we have that
\begin{align*}
    \Pr[e \textrm{ free} \mid t_e = y_1, t_u = y_2] & \geq  \prod_{f\in N_u} \left( 1 - \int_0^{y_1} x_f \cdot \e^{- x_f z} \, dz \right) \prod_{g\in N_v} \left( 1 - y_2\cdot\int_0^{1} x_g \cdot \e^{- x_g z} \, dz \right) \\
    & = \prod_{f\in N_u} \e^{-x_f y_1} \prod_{g\in N_v} \left( 1 - y_2\cdot(1 - \e^{-x_g}) \right)
\end{align*}
By \Cref{fact:concavity}, we have $\left( 1 - y_2\cdot(1 - \e^{-x_g}) \right) \geq \e^{-x_g y_2}$. Therefore we have the following.
\begin{align*}
    \Pr[e \textrm{ free} \mid t_e = y_1, t_u = y_2] & \geq \prod_{f\in N_u} \e^{-x_f y_1} \prod_{g\in N_v} \left( 1 - y_2\cdot(1 - \e^{-x_g}) \right)\\
    & \geq \prod_{f\in N_u} \e^{-x_f y_1} \prod_{g\in N_v} \e^{-x_f y_1} \\
    & = \e^{-y_1(1 - x_e)} \cdot \e^{-y_2(1 - x_e)} = \e^{-(1 - x_e)(y_1 + y_2)}.
\end{align*}
Hence, we have that
\begin{align*}
    \Pr[e \textrm{ successful}]  &= \Pr[e \text{ free at } (t_u, t_e) \text{ and } \text{Ber}(s(x_e, t_e, t_u)) = 1] \cdot x_e \\
    & \geq \left( \int_0^1 \int_0^1  \e^{-(1-x_e)(y_1 + y_2)} \cdot \e^{-y_1 x_e} dy_1 dy_2\right)\cdot x_e \\
    & \geq \left( \int_0^1 \int_0^1 \e^{-(y_1 + y_2)} dy_1 dy_2\right)\cdot x_e \\
    & \geq \left( 1-\frac{1}{\e} \right)^2 \cdot x_e \geq 0.399\cdot x_e. \qedhere
\end{align*}

\end{proof}

\end{document}